\spnewtheorem{constraint}{Constraint}{}{\itshape}
\renewcommand\subsubsection{\@startsection{subsubsection}{3}{\z@}%
                       {-5\p@ \@plus -4\p@ \@minus -4\p@}
                       {-0.5em \@plus -0.22em \@minus -0.1em}%
                       {\normalfont\normalsize\bfseries\boldmath}}
\newcommand{\po}{\textcolor{BlueViolet}{po}}
\newcommand{\rf}{\textcolor{Green}{rf}}
\newcommand{\mo}{\textcolor{Red}{mo}} 
\newcommand{\hb}{\textcolor{NavyBlue}{hb}}
\newcommand{\rb}{\textcolor{RubineRed}{rb}}
\newcommand{\rfe}{\textcolor{Green}{rfe}} 
\newcommand{\rfi}{\textcolor{Green}{rfi}}
\newcommand{\we}{\textcolor{BurntOrange}{we}}
\newcommand{\wi}{\textcolor{BurntOrange}{wi}}
\newcommand{\tuwri}{\textcolor{BurntOrange}{tuwri}}
\newcommand{\psf}[3]{\textit{psafe}(#1, #2, #3)}
\newcommand{\rel}{\textcolor{CadetBlue}{rel}}
\newcommand{\comp}[2]{\text{comp}\langle #1, #2 \rangle}
\newcommand{\weak}[2]{\text{weak}\langle #1, #2 \rangle}
\newcommand{\pwc}[1]{\textit{pwc}(#1)}
\newcommand{\wf}[1]{\text{wf}(#1)}
\newcommand{\inarrc}[1]{\begin{array}{@{}c@{}}#1\end{array}}
\newcommand{\inarr}[1]{\begin{array}{@{}l@{}}#1\end{array}}
\newcommand{\inarrII}[2]{
  \begin{array}{@{}l@{~~}||@{~~}l@{}}
    \inarr{#1}&\inarr{#2}
  \end{array}
}
\newcommand{\inarrIII}[3]{
  \begin{array}{@{}l@{~~}||@{~~}l@{~~}||@{~~}l@{}}
    \inarr{#1}&\inarr{#2}&\inarr{#3}
  \end{array}
}
\begin{document}

  \title{Portability of Optimizations from SC to TSO}

  \author{Akshay Gopalakrishnan \and
  Clark Verbrugge}
  %
  %
  \institute{McGill University, Montreal, Canada \\
  \email{\{akshay.akshay@mail,clump@cs\}.mcgill.ca}}

%
%
    \maketitle

    \begin{abstract}
        It is well recognized that the safety of compiler optimizations is at risk in a concurrent context.
        Existing approaches primarily rely on context-free thread-local guarantees, and prohibit optimizations that introduce a \emph{data-race}. 
        However, compilers utilize global context-specific information, exposing safe optimizations that may violate such guarantees as well as introduce a race.
        Such optimizations need to individually be proven safe for each language model.
        An alternate approach to this would be proving them safe for an intuitive model (like interleaving semantics), and then determine their portability across other concurrent models.  
        In this paper, we address this problem of porting across models of concurrency.
        We first identify a global guarantee on optimizations portable from \emph{Sequential Consistency (SC)} to \emph{Total Store Order (TSO)}.
        Our guarantee is in the form of constraints specifying the syntactic changes an optimization must not incur. 
        We then show these constraints correlate to prohibiting the introduction of \emph{triangular races}, a subset of data-race relevant to TSO.
        We conclude by showing how such race-inducing optimizations relate to porting across \emph{Strong Release Acquire (SRA)}, a known causally consistent memory model.
    \end{abstract}

    \keywords{Memory Consistency, Compiler Optimizations, Correctness, Sequential Consistency, Total Store Order}

    \section{Introduction}

    Compilers today are primarily responsible for the performance of any program.
    This is mainly due to plethora of optimizations they perform, which help reduce computation and memory costs significantly.
    However, performing them on concurrent programs has greatly been restricted due to safety concerns. 
    The core reason for this is the underlying concurrent semantics (memory model), which break the guarantees of functional behavior of programs assumed by compilers~ \cite{AdveS}.
    
    Existing efforts have no doubt addressed this up to some extent~\cite{MoiseenkoP}. 
    Optimizations which can be performed are subject to the underlying thread-local guarantees provided by the model.
    For instance, any code motion involving shared memory accesses are prohibited under Sequential Consistency (\emph{SC}). 
    Whereas any motion involving only write-read reordering is permitted under Total Store Order (\emph{TSO}).
    Optimizations are also prohibited from introducing data-races, and programs are often required to be \emph{data-race-free}.
    
    However, writing race-free programs is hard, and locating data-race errors in a program is often difficult, let alone debug.
    To add, the race-free restriction may be too strict for compilers, and weaker guarantees are shown to suffice~\cite{OwensTrftso,Marino}. 
    Moreover, thread-local restrictions can be overly conservative, and may prove detrimental when designing optimizations leveraging aspects of concurrency.
    
        As an example, consider the program $P$ as described in Fig~\ref{intro:ex1}, where $y$ is shared memory initialized to $0$, and $c$ is thread-local. 
    $\ldots$ represent any thread-local computation in each thread.
    First thread writes to $y$, whereas the second spins until it observes the updated value.


\begin{figure*}[htbp]
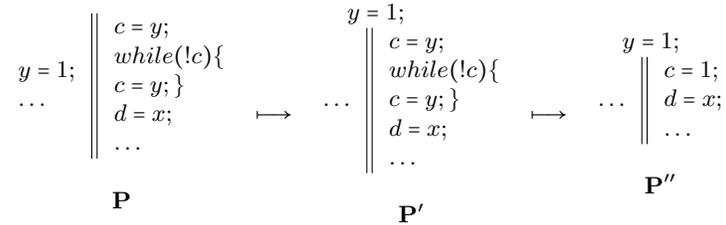

        
    \begin{equation*}
        \inarrc{
        \inarrII{
            y=1;\\
            \ldots
        }{
            c=y;\\
            while(!c)  \{ \\
                c=y;
            \} \\ 
            d=x; \\ 
            \ldots  
        }\\
        \\
        \bf{P}
        }
        \quad\longmapsto\quad
        \inarrc{
        \inarr{
            \quad y=1; \\
            \inarrII{
                \ldots
            }{
                c=y;\\
                while(!c) \{ \\
                    c=y;
                \} \\ 
                d=x; \\
                \ldots              
            }
        }
        \\
        \\
        \bf{P'}}
        \quad\longmapsto\quad
        \inarrc{
        \inarr{
            \quad y=1; \\
            \inarrII{
                \ldots
            }{
                c=1;\\
                d=x; \\
                \ldots              
            }
        }\\
        \\
        \bf{P''}}
    \end{equation*}
    \caption{Optimizations leveraging thread/memory fairness and value range analysis.}
    \label{intro:ex1}
\end{figure*}

    From the compiler's perspective, the following observation can be made, subject to a fair scheduler 
    \begin{itemize}
        \item The second thread will eventually see the updated value of $y$ (memory fairness \cite{LahavPod}).
        \item The second thread will only see the updated value of $y$ as 1 (value range analysis over $y$).
        \item Thus, $while(!c)$ will eventually fail to hold (thread fairness).
    \end{itemize}
    With this, the compiler can assert spinning on the loop may not be required if the updates value of $y$ is guaranteed to be in memory before it is read.
    To do this, it can simply sequence (inline) the write $y=1$ to take place before both the threads\footnotemark.
    This is shown as $P'$ in Fig~\ref{intro:ex1}.
    Further, it can perform simple store-to-load forwarding, which results in the loop condition to be false from the start.
    This gives us the optimized program $P''$ in Fig~\ref{intro:ex1}.
    Such an optimization is safe under both \textit{SC} and \textit{TSO} semantics, the original and the optimized program both terminate with the same state of memory $c=1$.  
    \footnotetext{
        As a C code, this would translate to creating a thread which does $y=1$ followed by creating two threads which do the other computations.
    }
    
    However, the situation is slightly different if the optimization is done in the presence of another concurrent context.
    In Fig~\ref{intro:ex2} left, the original program cannot terminate with $a=1 \wedge b=0 \wedge c=1 \wedge d=0$, both under \textit{TSO} and \textit{SC} semantics.
    The thread-local state $c=1 \wedge d=0$ implies $y=1$ is visible to all threads before $x=1$.
    Hence, the reads to $x$ and $y$ done after loop termination cannot observe the stale value of $y$ (as $b=0$ shows).
    

\begin{figure*}[htbp]
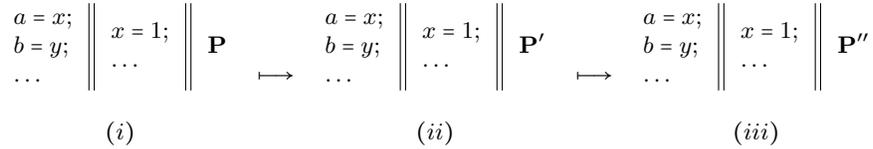

        
    \begin{equation*}
        \inarrc{
        \inarrIII{
            a=x; \\ 
            b=y; \\
            \ldots
        }{
            x=1;\\
            \ldots
        }{
            \bf{P}    
        } \\
        \\
        (i)
        }
        \quad\longmapsto\quad
        \inarrc{
        \inarrIII{
            a=x; \\ 
            b=y; \\
            \ldots
        }{
            x=1;\\
            \ldots
        }{
            \bf{P'}    
        }\\
        \\
        (ii)
        }
        \quad\longmapsto\quad
        \inarrc{
        \inarrIII{
            a=x; \\ 
            b=y; \\
            \ldots
        }{
            x=1;\\
            \ldots
        }{
            \bf{P''}    
        }\\
        \\
        (iii)
        }
        \end{equation*}
    \caption{Under different program context, optimization to $P''$ is unsafe under $TSO$.}
    \label{intro:ex2}
\end{figure*}
    
    The same above reasoning is true for both the optimized programs (middle and right) under \textit{SC} semantics, thereby disallowing the outcome in question.
    However, the situation is different under \textit{TSO}: the program in Fig~\ref{intro:ex2} middle can exhibit the outcome \cite{OwensS}.
    Reading $c=1$ from $y=1$ does not enforce a global ordering between the writes $x=1$ and $y=1$\footnotemark.
    The same reasoning applies for the final optimized program (Fig~\ref{intro:ex2} right). 
    Thus, the same optimization is unsafe when considered under a different program context.
    \footnotetext{
        Operationally, the read value can be fetched from the FIFO write buffer where $y=1$ is first committed. 
        The buffer can be flushed at any latter time, and only then would imply a global memory ordering with $x=1$. 
    }

    %
%
    %

    Examples like above represent the kind of global optimizations that can be permitted for \textit{TSO}.
    However, as we see, doing them depends on the right program context.
    Identifying a precise constraint over \emph{all possible} program contexts specific to \textit{TSO} may be non-trivial.
    
    We adopt an alternate approach that may be more suitable to put into practice.
    We instead determine constraints on optimizations which are portable from \textit{SC} to \textit{TSO}. 
    Working towards identifying such a constraint has led to the following contributions 
    \begin{enumerate}
        \item We find a suitable thread-global syntactic constraint on optimizations that do not eliminate/introduce writes.
        \item We show that our constraint correlates to disallowing optimizations resulting in a new \emph{triangular-race}, a subset of \emph{data=races}.
        \item We discover that porting \textit{SC} optimizations that introduce such \emph{triangular races} in addition translates to porting across a Causally Consistent memory model known as Strong Release Acquire (SRA)~\cite{nickSRA}.
    \end{enumerate}
    \section{Preliminaries}

    We give a brief overview of the formal elements involved in this work. 
    Sec~\ref{subsec:pre-trace} introduces the pre-trace model, and Sec~\ref{subsec:effect} the representation of optimizations, both from our previous work \cite{gopaltransf}. 
    Sec~\ref{subsec:axiomatic} introduces the axiomatic models for \textit{SC} and \textit{TSO} we use. 
    
    The language is given in Fig~\ref{prelim:lang}.
    A program $\textit{prog}$ is a parallel composition of individual sequential programs $\textit{sp}$, each of which are associated with a thread id $t$ and a sequence of actions $p$.
    An action is either a memory event $st$ or a conditional branch code block (for simplicity, loops are not included).
    Memory events can be a read from ($a=x$), write to ($x=e$), or read-modify-write ($rmw(x, a, v)$) to some shared memory $x$.
    Write events are also associated with a value $v$ or some thread-local variable $a$ (for simplicity we keep it as integers). 
    Let $tid$ be a mapping from memory events to their associated thread-id.
    \begin{figure*}
        \subfloat{
            $\begin{aligned}
                &prog := sp || prog \ | \ sp \\
                &sp := t:p \\ 
                &p := st \ | \ p;p; \ | \ \text{if}(cond) \ \text{then} \ \{p\} \ \text{else} \ \{p\} \ \\ 
                &st := a\!=\!x; \ | \ x\!=\!v; \ | \ rmw(x, a, v);
            \end{aligned}$
        }        
        \vrule
        \subfloat{
            $\begin{aligned}
                &e := a \ | \ v \\
                &cond := \text{true} \ | \ \text{false} \ | \ a == v \ | \ a !\!= v \\   
                &\text{domains} := v \in \mathbb{Z} \ | \ t \in (\mathbb{N}\!\cup\!\{0\})        
            \end{aligned}$
        }
        \caption{Programs - Adapted from \cite{gopaltransf}}
        \label{prelim:lang}
    \end{figure*}
    
    \subsection{Pre-Trace Model}

    \label{subsec:pre-trace}
    The pre-trace model at its core, reflects the compiler's perspective of a program (see \cite{gopaltransf} for full formal details).
    Programs are viewed in their abstract forms, with syntactic order ($\po$) and conditional dependencies (Fig~\ref{base:Pr-to-P} middle).
    As opposed to traditional mapping of abstract programs to concurrent executions, they are first mapped to a set of pre-traces instead (Fig~\ref{base:Pr-to-P} left and right). 
    
    A pre-trace is a sub-program without conditionals that contains a possible execution path of the actual program. 
    In Fig~\ref{base:Pr-to-P}, Pre-trace $P1$ (left) is derived by taking the left conditional branch path, whereas $P2$ (right) is derived by choosing the right. 
    Notice that both $P1$, $P2$ do not have the read values restricted, and thus, represent an over-approximation of the actual program behavior\footnotemark. 
    \begin{figure*}[htbp]
        \centering
        \includegraphics[scale=0.6]{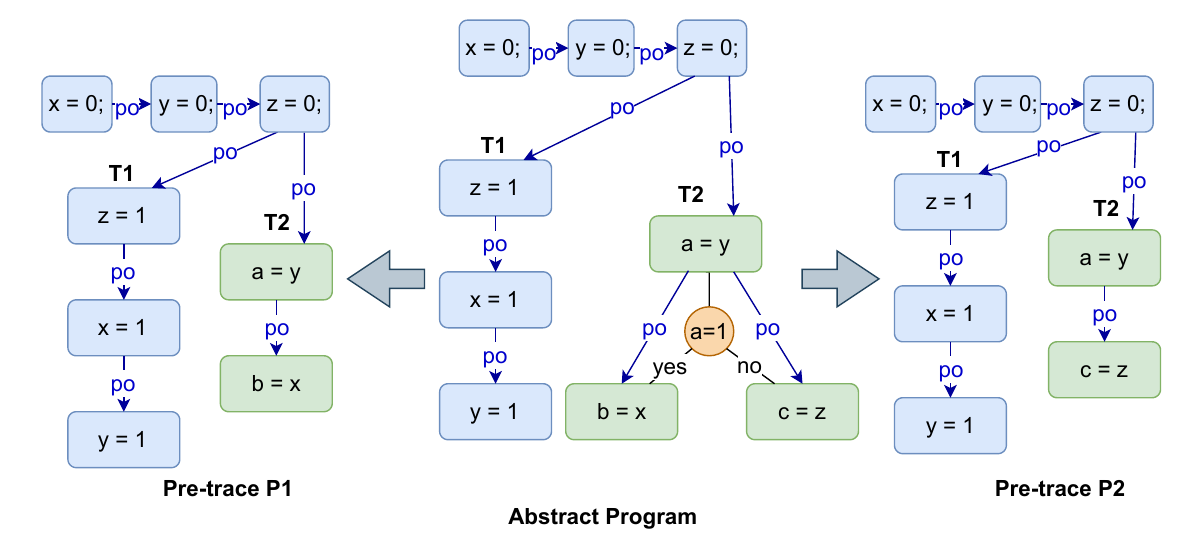}
        \caption{Abstract program (middle) mapped to its possible pre-traces P1 (left) and P2 (right) (Adapted from Fig 11 of \cite{gopaltransf}).}
        \label{base:Pr-to-P}
    \end{figure*}
    \footnotetext{
        Synthesizing pre-traces for a given language from abstract programs would also require the language's appropriate sequential semantics. 
        The results in this paper are independent of the choice of such semantics.
    }
    Let $\po(P)$ represent the syntactic order in $P$. 
    Let $st(P), w(P), r(P), u(P)$ represent the set of events/write/read/read-modify-write events respectively of pre-trace $P$ ($u(P) = r(P) \cap w(P)$).
    The above can be filtered via an optional subscript $loc$, giving the set of events operating on the same shared memory $loc$.
        
    To represent executions $E$, each pre-trace $P$ is associated with a reads-from set $\rf$ ($w_{loc}(P) \times r_{loc}(P)$), denoting the source (write event) of the read value, and memory order set $\mo$ ($w(P) \times w(P)$), denoting the order in which writes are propagated to main memory.  
    Let $p(E)$ give the pre-trace of $E$ and $\rf(E), \mo(E)$ give the $\rf$, $\mo$ relations of $E$ respectively.
    A candidate execution $E$ (like Fig~\ref{base:P-to-E} $E1$, $E2$) of a pre-trace is one where 
    \begin{itemize}
        \item Each read has some value - $\forall r \in r(p(E)) \ . \ \exists w \in w(p(E)) \ . \ (w,r) \in \rf(E)$.
        \item Each read has exactly one source write - $(w,r) \in \rf(E) \wedge (w', r) \in \rf(E) \implies w=w'$.
        \item Propagation order is total - $\mo(E) \ \text{total order}$. 
        \item Propagation order per-shared location is strict $\forall loc \ . \ \mo_{loc}(E) \ \text{strict}$.
    \end{itemize} 
    \begin{figure*}[htbp]
        \centering
        \includegraphics[scale=0.6]{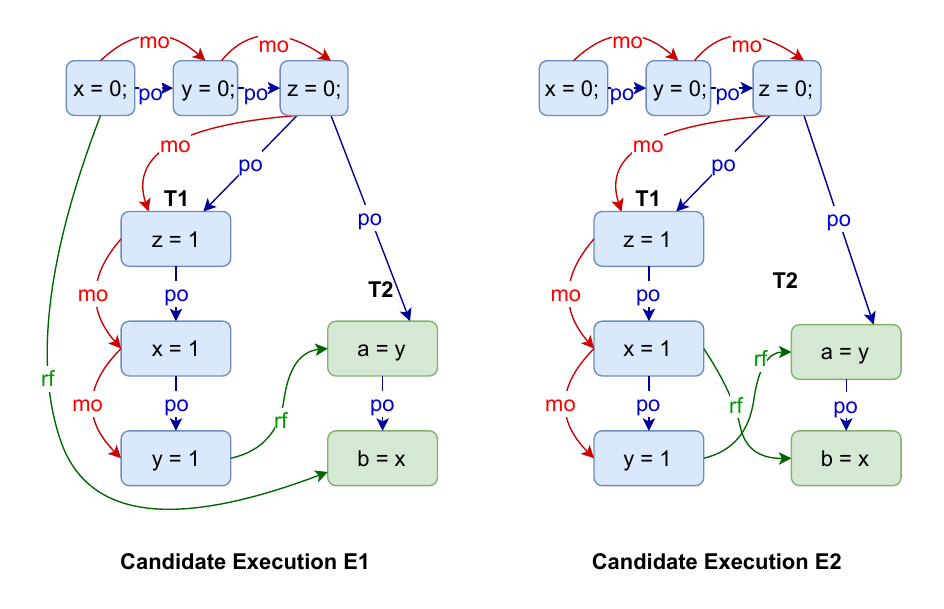}
        \caption{Pre-trace $P1$ from Fig~\ref{base:Pr-to-P} annotated with $\rf$ and $\mo$ edges to represent a two possible candidate execution $E1$ and $E2$ (Adapted from Fig 12 of \cite{gopaltransf}).}
        \label{base:P-to-E}
    \end{figure*}
    An \emph{observable-behavior} of a candidate execution is the final state of memory ($\rf$ relations and $\mo$ maximal per-location).
    $\langle P \rangle$ represents the set of all candidate executions of pre-trace $P$. 
    

        The concurrency semantics or memory model are a set of constraints (consistency rules) on these candidate executions.
        They are typically in the form of acyclic/irreflexivity constraints on the relations between memory events in the execution.
        Executions that adhere to the rules are deemed \emph{consistent}, whereas those which do not are \emph{inconsistent}. 
        For example, consider the candidate executions $E1$, $E2$ from Fig~\ref{base:P-to-E}.
        $E1$ indicates a clear `message passing' violation: on seeing $a=1$, $b=0$ should not be possible.
        A memory model prohibiting such message passing violations can simply be `$(\rf^{-1} \cup \mo_{loc} \cup \po \cup \rf)$ \emph{acyclic}'.
        With this constraint, $E1$ is inconsistent, however, $E2$ is consistent.  
       
        $\llbracket P \rrbracket_{M}$ represents the set of candidate executions of pre-trace $P$ consistent under memory model $M$.
        $I\langle P \rangle_{M}$ represents the inconsistent set. 
        A memory model $M$ is weaker than $B$ ($\weak{M}{B}$) if the set of consistent executions under $B$ is lesser than those under $M$ ($\forall P \ . \ \llbracket P \rrbracket_{B} \subseteq \llbracket P \rrbracket_{M}$).

\subsection{Optimization as transformation-effects}

    \label{subsec:effect}

        Pre-traces are useful in decomposing an optimization into several \textit{transformation-effects}, each on a different pre-trace of the source program.
        These effects are defined as syntactic changes, involving a set of program orders removed/added ($\po^{-}, \po^{+}$) and set of memory events removed/added ($st^{-}, st^{+}$).
        A transformation effect $tr$ modifying $P$ to $P'$ is denoted as $P \mapsto_{tr} P'$. 
        Recalling the example from Fig~\ref{intro:ex1}, the optimizations have at least the following effects for each pre-trace \footnotemark.
        \begin{enumerate}
            \item $P \mapsto_{tr1} P'$ 
                \begin{tasks}(3)
                    \task $st^{-} = st^{+} = \po^{-} = \phi$. 
                    \task* $\po^{+} = \{ (y=1, c=y), (y=1, a=x) \}$.
                \end{tasks}           
            \item $P' \mapsto_{tr2} P''$ 
                \begin{tasks}(3)
                    \task $st^{-} = \{ c=y \}$.
                    \task $st^{+} = \phi$.
                    \task $\po^{+} = \phi$.
                    \task* $\po^{-} = \{ (y=1, c=y), (c=y, a=x) \}$.
                \end{tasks} 
        \end{enumerate} 

        \footnotetext{
            At least is used to denote that a loop results in multiple pre-traces, one for every number of iteration.
        }
       
        
        An optimization is considered \textit{safe} for a program under a memory model $M$ if the set of consistent observable behaviors do not increase.
        At the level of pre-traces, an effect $P \mapsto_{tr} P'$ is safe if the \emph{set of behaviors} specified by $\llbracket P \rrbracket_{M}$ does not increase.
        Two executions $E1, E2$ have the \emph{same observable behavior} ($E1 \sim E2$) if the set of shared memory read/write(s) common to both have same $\rf$/$\mo$ relations respectively \footnotemark.  
        This is used to compare set of behaviors: $A \sqsubseteq B$ denotes a set of executions $A$ are contained in $B$.
        Thus, a transformation-effect $P1 \mapsto_{tr} P2$ is safe under memory model $M$ ($\psf{M}{tr}{P}$), if $\llbracket P' \rrbracket_{M} \sqsubseteq \llbracket P \rrbracket_{M}$.

        \footnotetext{
            Note that we relate the entire $\mo$ relation, not just the maximal elements.
        }
        

    \subsection{Axiomatic Version of SC and TSO}

    \label{subsec:axiomatic}     
    We adopt the declarative (axiomatic) style of both \textit{SC} and \textit{TSO} described using irreflexivity constraints \cite{LahavV}.

    \paragraph*{Relational notations}
    Given a binary relation $R$, let $R^{-1}$, $R^{?}$, $R^{+}$ and $[E]$ represent inverse, reflexive, transitive closure and identity relation over a set $E$ respectively. 
    Let $R1;R2$ represent sequential composition of two binary relations.
    A relation $R1;R2$ \emph{forms a cycle} (or simply $R1;R2$ \emph{cycle}) if there exists a cyclic path $[a];R1;[b];R2;[a]$ which is non-empty (reflexive).
    Lastly, we say `we have $[a];R1;[b]$' if the relation is non-empty.  

    \paragraph*{Additional relations}
    Let \textit{read-from-internal} ($\rfi$) represent the subset of $\rf$ such that both the write and read are of the same thread. 
    Let \textit{reads-from-external} ($\rfe$) be the rest.
    Let \textit{happens-before} ($\hb$) be $(\po \cup \rf)^{+}$.
    Let \textit{memory-order-loc} ($\mo_{|loc}$) represent the memory order between writes to same memory, and $\mo_{!loc}$ the rest.
    Further, let \textit{memory-order-ext} ($\mo_{ext}$) represent the memory order between writes not ordered by any $\hb$ relation ($\mo \setminus (\mo \cap \hb))$.
    Finally, let \textit{reads-before} ($\rb$) represent the sequential composition $\rf^{-1};\mo_{|loc}$.

    Using the above elements \textit{SC} has the following set of constraints \cite{LahavV}.   
    \begin{definition}
        \label{def:sc-model}
        An execution $E$ is consistent under \textit{SC} if the following rules hold 
        \begin{tasks}(3)
            \task $\mo$ strict total order.
            \task $\hb$ irreflexive.
            \task $\mo;\hb$ irreflexive. 
            \task $\rb;\hb$ irreflexive.
            \task $\rb;\mo$ irreflexive.
            \task $\rb;\mo;\hb$ irreflexive.
        \end{tasks}
    \end{definition}
    and the fragment of \textit{TSO} not involving fences have the following set of constraints. 
    \begin{definition}
        \label{def:tso-model}
        An execution $E$ is consistent under \textit{TSO} if the following rules hold 
        \begin{tasks}(2)
            \task $\mo$ strict total order.
            \task $\hb$ irreflexive.
            \task $\mo;\hb$ irreflexive.
            \task $\rb;\hb$ irreflexive.
            \task $\rb;\mo$ irreflexive.
            \task $\rb;\mo;\rfe;\po$ irreflexive.
            \task $\rb;\mo;[u];\po$ irreflexive.
        \end{tasks}    
    \end{definition}
    We can, for \textit{TSO}, consider fences equivalent to a read-modify-write to a shared location not used \cite{LahavV,nickSRA}. 
    Note that \textit{TSO} and \textit{SC} have Rules (a) through (e) the same, with Rule (f) of \textit{SC} being divided into two rules (f) and (g).

    \section{Optimizations: From SC to TSO}

    We now address the porting problem using transformation effects over pre-traces.
    Sec~\ref{subsec:complete} states our desired property to be proven over effects.
    Sec~\ref{subsec:constraints} goes over identifying the constraints over optimizations that enable proving said property.
    Sec~\ref{subsec:main-result} states our main result, with examples giving intuition behind the proof.   
    Sec~\ref{subsec:race-sra} relates our result to triangular races and causal consistency.   

    \subsection{From Optimizations to Effects}

    \label{subsec:complete}
    Our objective of identifying a set of optimizations portable to TSO can instead be viewed at the level of effects. 
    An optimization of a program is portable from memory model $B$ to $M$ if all its constituent safe effects under $B$ are also safe in $M$.
    Lifting this notion to any program, we converge to the following property between models defined in \cite{gopaltransf}\footnotemark.
    \begin{definition}
        \label{def:complete}
        Memory model $M$ is \emph{complete} w.r.t. $B$ ($\comp{M}{B}$) if 
        \begin{align*}
            \forall P \mapsto_{tr} P' \ . \ \psf{B}{tr}{P} \implies \psf{M}{tr}{P}. 
        \end{align*}
    \end{definition} 

    \footnotetext{
        The origins behind Def~\ref{def:complete} stem from preserving safety of optimizations across memory models that are incrementally built by adding desired optimizations.
    }

    Proving $\comp{M}{B}$ between models is infeasible by enumerating the set of effects and pre-traces, as both are unbounded. 
    Proving the contrapositive however, can be done using the constraints specific to $B$ and $M$. 
    At its core, it requires showing $\psf{B}{tr}{P}$ to be false for any $tr$ such that $\neg \psf{M}{tr}{P}$.
    The following lemma quantifies all such $tr$\footnotemark.
    \begin{restatable}{lemma}{lemcompgen}
        \label{lem:comp-gen-cond}
        Given $\weak{M}{B}$, a transformation-effect $P \mapsto_{tr} P'$ and $\langle P' \rangle \sqsubseteq \langle P \rangle$, $tr$ is unsafe under $M$, but safe under $B$ if for every $E'$ such that, 
        \begin{align*}
            E' \in \llbracket P' \rrbracket_{M} \wedge \nexists E \in \llbracket P \rrbracket_{M} \ . \ E \sim E'.
        \end{align*}    
        we have 
        \begin{tasks}(3)
            \task*(2) $\forall E \in \langle P \rangle \ . \ E \sim E' \implies E \in I\langle P \rangle_{M}.$ 
            \task $E' \in I \langle P' \rangle_{B}$.
        \end{tasks}    
    \end{restatable}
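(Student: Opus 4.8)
The plan is to separate the statement's two conclusions — that $tr$ is unsafe under $M$ and that $tr$ is safe under $B$ — and first to simplify the hypothesis. I would observe that, since every candidate execution of $P$ is either $M$-consistent or $M$-inconsistent, i.e. $\langle P \rangle = \llbracket P \rrbracket_{M} \cup I\langle P \rangle_{M}$ disjointly, condition (a) is only a restatement of the quantifier guard $\nexists E \in \llbracket P \rrbracket_{M}.\, E \sim E'$: saying ``no $M$-consistent execution of $P$ matches $E'$'' is the same as saying ``every candidate execution of $P$ matching $E'$ lies in $I\langle P \rangle_{M}$''. Hence the genuine content of the hypothesis is condition (b): every such \emph{witness} $E'$ — an $M$-consistent execution of $P'$ whose behaviour no $M$-consistent execution of $P$ reproduces — is $B$-inconsistent as an execution of $P'$, $E' \in I\langle P' \rangle_{B}$. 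I also read the statement as presupposing at least one such $E'$ exists, since otherwise $tr$ would in fact be safe under $M$.

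The unsafe-under-$M$ part is then immediate: any witness $E'$ is by definition an element of $\llbracket P' \rrbracket_{M}$ with no $\sim$-matching element of $\llbracket P \rrbracket_{M}$, which is exactly the failure of $\llbracket P' \rrbracket_{M} \sqsubseteq \llbracket P \rrbracket_{M}$, i.e. $\neg\,\psf{M}{tr}{P}$. For the safe-under-$B$ part I would argue by contradiction: suppose $\neg\,\psf{B}{tr}{P}$, so there is some $E' \in \llbracket P' \rrbracket_{B}$ with no $E \in \llbracket P \rrbracket_{B}$ satisfying $E \sim E'$. Since $\weak{M}{B}$ gives $\llbracket P' \rrbracket_{B} \subseteq \llbracket P' \rrbracket_{M}$, this $E'$ is $M$-consistent. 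Now split on whether $E'$ is a witness. If it is, then (b) yields $E' \in I\langle P' \rangle_{B}$, contradicting $E' \in \llbracket P' \rrbracket_{B}$; so $E'$ is not a witness, and there exists an $M$-consistent execution $E \in \llbracket P \rrbracket_{M}$ with $E \sim E'$.

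The hard part will be the remaining step: the matching execution $E$ of $P$ we have produced is only known to be $M$-consistent, whereas to contradict the assumption we must exhibit a $B$-consistent execution of $P$ with behaviour $E'$. This upgrade is where $\langle P' \rangle \sqsubseteq \langle P \rangle$ and the exact definition of $\sim$ must do the work. The danger is that $P$ may carry program-order edges and events absent from $P'$ (those in $\po^{-}$ and $st^{-}$), so the $P$-side execution could close an irreflexivity-violating cycle — one of the extra constraints $B$ imposes over $M$ — that is not present in the $B$-consistent $E'$. I would try to discharge this by showing that, because $E$ and $E'$ agree on $\rf$ and $\mo$ over their common events, any such $B$-violating cycle must route through those shared events and hence be reflected, via the $\hb = (\po \cup \rf)^{+}$ and $\rb = \rf^{-1};\mo_{|loc}$ closures, into a cycle already present in $E'$ — contradicting $E' \in \llbracket P' \rrbracket_{B}$. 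Equivalently, the goal is to rule out behaviours that are new under $B$ yet not witnesses under $M$. Establishing this transfer of $B$-consistency along $\sim$ is the crux of the argument; the reduction to it through (a), (b) and $\weak{M}{B}$ in the earlier steps is routine unfolding of the definitions.
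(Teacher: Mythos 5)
Your reading of the lemma as a sufficiency claim (``conditions (a) and (b) imply $tr$ is unsafe under $M$ but safe under $B$'') follows the literal ``if'' in the statement, but it is not the implication the paper actually proves, and the gap you flag at the end is real and, in that direction, not repairable from the stated hypotheses. The paper's proof runs the implication the other way: it assumes a witness $E' \in \llbracket P' \rrbracket_{M}$ with no $\sim$-matching element of $\llbracket P \rrbracket_{M}$ (i.e.\ $\neg\psf{M}{tr}{P}$) together with $\psf{B}{tr}{P}$, and derives (a) from $\langle P' \rangle \sqsubseteq \langle P \rangle$ and $\langle P \rangle = \llbracket P \rrbracket_{M} \cup I\langle P \rangle_{M}$, then (b) from $\weak{M}{B}$ (every matching $E$ lies in $I\langle P \rangle_{B}$) and $\llbracket P' \rrbracket_{B} \sqsubseteq \llbracket P \rrbracket_{B}$. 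So (a) and (b) are established as \emph{necessary} conditions on any completeness counterexample, and that is exactly how the lemma is consumed in Theorems~\ref{thm:sra-sc-comp} and~\ref{thm:tso-sc-comp}: one takes $E$, $E'$ ``as described in this lemma'' and derives $\neg\psf{SC}{tr}{P}$, contradicting safety under $B$.

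On your attempted direction: the unsafe-under-$M$ half is fine once you add, as you do, the presupposition that at least one witness exists (without it the conditions hold vacuously and the conclusion is false). But the safe-under-$B$ half fails exactly where you say it does. From $E' \in \llbracket P' \rrbracket_{B} \subseteq \llbracket P' \rrbracket_{M}$ and ``$E'$ is not a witness'' you obtain only some $E \in \llbracket P \rrbracket_{M}$ with $E \sim E'$; nothing in (a), (b), $\weak{M}{B}$ or $\langle P' \rangle \sqsubseteq \langle P \rangle$ upgrades this to $E \in \llbracket P \rrbracket_{B}$, and a $B$-consistent behaviour of $P'$ matched only by $M$-consistent, $B$-inconsistent executions of $P$ would falsify the sufficiency reading outright. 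The transfer argument you sketch --- routing a $B$-violating cycle of every matching $E$ back into $E'$ --- is essentially the content of the case analyses in Appendices~\ref{sec:sra-proofs} and~\ref{sec:tso-proofs}, and it needs the model-specific machinery (Lemma~\ref{lem:sc-incons-exec}, crucial sets, the $\mo_{ext}$ manipulations) together with the syntactic restrictions on $tr$ ($\we$, $\wi$, $\tuwri$); it is not derivable at the level of generality of this lemma. Treat the lemma as the paper does: a characterization of the witnesses one must subsequently rule out, not a self-contained sufficient condition.
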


    \footnotetext{
        While Theorem 4.4\cite{gopaltransf} addresses the same issue, its core proof relies on our formulated lemma. 
    }

    \subsubsection{Requirements for $M$, $B$, $tr$}
    
        Lemma~\ref{lem:comp-gen-cond} relies on two conditions.
        First, $M$ must be weaker than $B$.
        This is true when $M=TSO$ and $B=SC$ \cite{OwensS}.  
        Second, we require $\langle P' \rangle \sqsubseteq \langle P \rangle$. 
        Intuitively, it means that any behavior $P'$, without any memory model constraints must also be observable in $P$.
        This disallows write-introduction effects: $\wi$ ($st^{+} \cap w(P') = \phi$), forming our first constraint over $tr$.
        We believe this to be a reasonable constraint, given that any write introduced may be read, leading to an unobservable outcome in the original program. 
        Lastly, we consider $tr$ to have no write-elimination effects: $\we$ ($st^{-} \cap w(P) = \phi$), deferring it to future work.

    \subsection{Identifying Constraint over Effects}

    \label{subsec:constraints}
    The example from Fig~\ref{intro:ex2} shows global SC optimizations are portable to TSO only under appropriate program context.
    Instead of identifying such contexts, we can identify constraints on the set of effects that make it unsafe to port. 
    These constraints will be a mixture of the original program context and the effects. 

    We start by viewing the same optimization at the level of pre-traces in Fig~\ref{conc:c1}. 
    \begin{figure*}[h]
        \centering
        \includegraphics[scale=0.6]{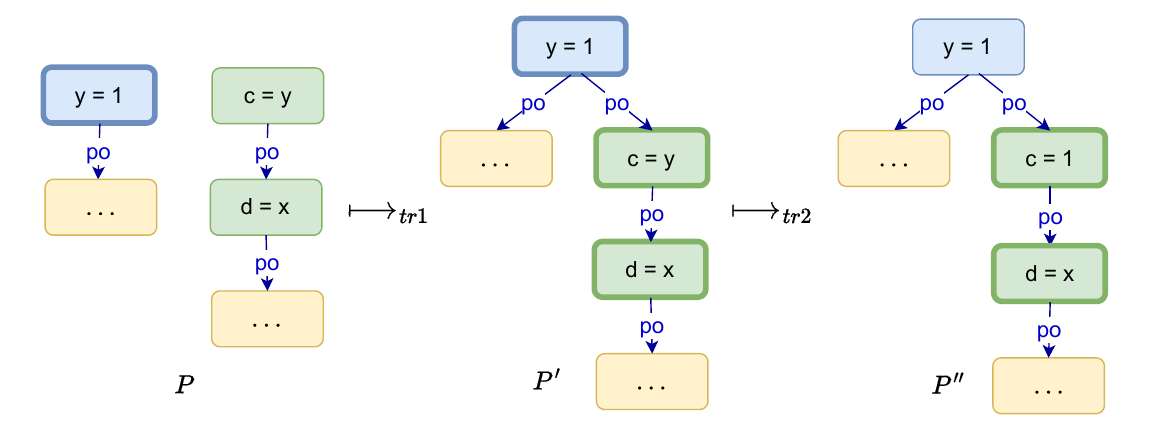}
        \caption{Example optimization of Fig~\ref{intro:ex1} revisited as effects over pre-traces.}      
        \label{conc:c1}
    \end{figure*}
    The optimizations $tr1$ and $tr2$ have the following constraints over their corresponding effects.
    \begin{enumerate}
        \item $P\mapsto_{tr1} P'$ - $(y=1, c=y), (y=1, d=x) \in \po^{+}$.
        \item $P' \mapsto_{tr2} P''$ - $(c=y \in st^{-}) \wedge ((y=1, c=y), (c=y, d=x) \in \po^{-})$.  
    \end{enumerate}
    
    We observe that the effect $tr2$ a form of read-after-write elimination, which is safe under \textit{TSO} irrespective of program context \cite{OwensS}.
    Thus, the constraint is relevant over $tr1$.
    \begin{constraint}
        \label{cons:c0}
        The effect $P \mapsto P'$ may be unsafe under \textit{TSO} if the following hold 
        \begin{tasks}(2) 
            \task $(w_{y}, r_{y}), (w_{y}, r_{x}) \in \po^{+}$.
            \task $(r_{y}, r_{x}) \notin \po^{-}$.
        \end{tasks}
    \end{constraint}
    We can further refine this constraint, noting that without the concurrent context in Fig~\ref{intro:ex2}, $tr1$ is safe under \textit{TSO}. 
    Specifically, we note the context $x=1$, and two candidate executions $E \sim E'$ of the pre-traces as in Fig~\ref{conc:c2}.
    \begin{figure*}[h]
        \centering
        \begin{minipage}{\columnwidth}
            \subfloat[a][
                Pre-trace effect $P \mapsto_{tr1} P'$. 
            ]{\label{conc:c21}\includegraphics[scale=0.6]{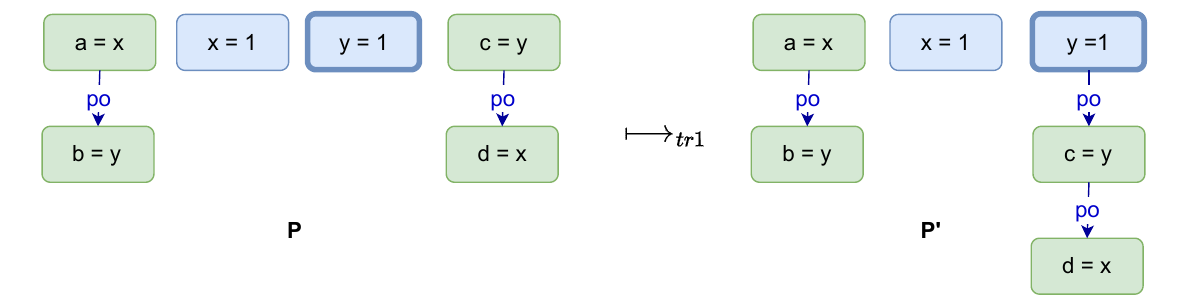}}    
        \end{minipage}
        \begin{minipage}{\columnwidth}
            \subfloat[b][
                Candidate executions of $P$ and $P'$ for the outcome $a=1 \wedge b=0 \wedge c=1 \wedge d=0$. 
            ]{\label{conc:c22}\includegraphics[scale=0.6]{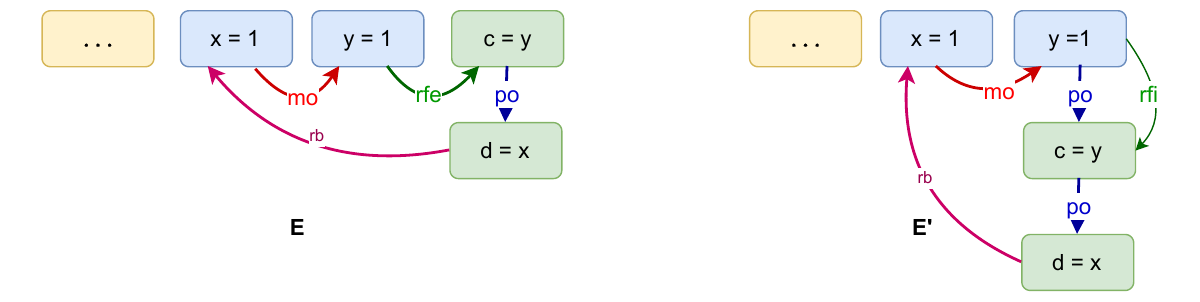}}                 
        \end{minipage}\
        \caption{Example optimization of Fig~\ref{intro:ex2} revisited as effects over pre-traces.}
        \label{conc:c2}
    \end{figure*}
    Notice that $E$ is inconsistent under \textit{TSO} as $[d=x];\rb;\mo;\rfe;\po$ forms a cycle.
    On the other hand $E'$ is consistent as $[d=x];\rb;\mo;\po$ cycles are permitted under \textit{TSO}.
    This implies the subset of $P \mapsto_{tr1} P'$, having a write $w_{x} \in st(P)$ is unsafe under \textit{TSO}. 
    \begin{constraint}
        \label{cons:c1}
        The effect $P \mapsto P'$ may be unsafe under \textit{TSO} if the following hold
        \begin{tasks}(2)
            \task $w_{x}, w_{y}, r_{x} \notin st^{-}$.
            \task $(w_{x}, r_{x}), (r_{x}, w_{x}) \notin \po^{+}$.
            \task $(w_{x}, w_{y}), (w_{y}, w_{x}) \notin \po^{+}$.
        \end{tasks}
    \end{constraint}

    Constraints~\ref{cons:c0},\ref{cons:c1} together is still not precise enough to categorize the effect we desire.
    This can be seen in the following variation of $P'$ in Fig~\ref{conc:c3}.
    \begin{figure*}[!htbp]
        \centering
        \includegraphics[scale=0.6]{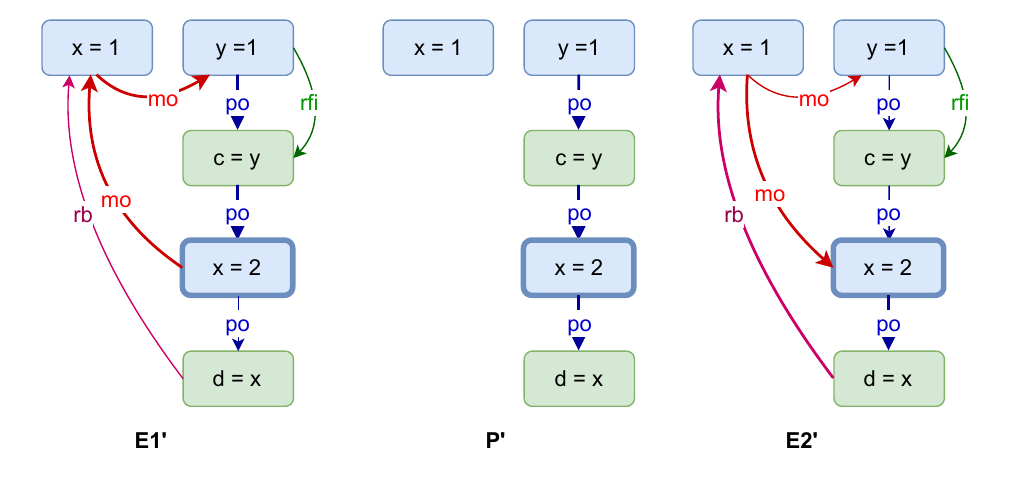}    
        \caption{A variation of effect $tr1$ which is safe under \textit{TSO}.}            
        \label{conc:c3}
    \end{figure*}
    Since $x=2$ is in between $y=1$ and $a=x$, the execution is inconsistent under \textit{TSO}.
    We either have $\mo;\hb$ cycle ($E1'$) or $\rb;\hb$ cycle ($E2'$), both of which violate the constraints of \textit{TSO}.
    Similar is the case if the write $x=2$ is replaced by a read-modify-write $u$ instead, giving us $\rb;\mo;[u];\po$ cycle for $E2'$.
    This leaves us with the effect $tr1$ being safe under \textit{TSO} \footnotemark.
    \footnotetext{
        It is easy to see the other candidate executions of $P$ and $P'$ support this claim.
    }
    Thus, an additional constraint is needed to ensure the effect should not cause any $w'_{x}(P) \cup u(P)$ to be in syntactic order between $w_{y}$ and $r_{x}$.
    \begin{constraint}
        \label{cons:c2}
        The effect $P \mapsto P'$ may be under \textit{TSO} if the following hold 
        \begin{tasks}(2)
            \task* $\forall e \in (w_{x}(P) \cup u(P)) \ . \ (w_{y}, e), (e, r_{x}) \in \po(P) \implies (e, r_{x}) \in \po^{-}$.
            \task* $\nexists e \in (w_{x}(P) \cup u(P) \cup st^{+}) \ . \ (w_{y}, e), (e, r_{x}) \in \po^{+}$
        \end{tasks} 
    \end{constraint}

    We finally have our constraint on effects, which we combine to define the effects unsafe under \textit{TSO} as per program context. 
    \begin{definition}
        \label{def:tso-effect-const}
        Then a transformation-effect $P \mapsto P'$ involves \emph{tso unsafe write-read inlining} $\tuwri$ if it satisfies Constraints~\ref{cons:c0},\ref{cons:c1},\ref{cons:c2} for $P$ with
        \begin{tasks}(2)
            \task $(w_{y}, r_{y}), (r_{y}, w_{y}) \notin \po(P)$. 
            \task $(w_{y}, w_{x}), (w_{x}, w_{y}) \notin \po(P)$.
            \task $(w_{x}, r_{x}), (r_{x}, w_{x}) \notin \po(P)$.
            \task $(w_{x}, r_{y}), (r_{y}, w_{x}) \notin \po(P)$.
            \task $(r_{y}, r_{x}) \in \po(P)$.
        \end{tasks}
        or Constraints~\ref{cons:c1},\ref{cons:c2} for $P$ with
        \begin{tasks}(2)
            \task $(w_{y}, r_{x}) \in \po(P')$.
            \task $(w_{x}, w_{y}), (w_{y}, w_{x}) \notin \po(P')$.
            \task $(w_{x}, r_{x}), (r_{x}, w_{x}) \notin \po(P')$.
        \end{tasks}
        
    \end{definition}

    \subsection{Main Result}

    \label{subsec:main-result}
    
    We now formally state our result, followed by explaining the steps taken to prove it.
    The full proof is given in Appendix~\ref{sec:tso-proofs}.
    \begin{restatable}{theorem}{thmtsosccomp}
        \label{thm:tso-sc-comp}
        For transformation-effects $P \mapsto_{tr} P'$ not involving $\we$, $\wi$ and $\tuwri$, we have $\comp{TSO}{SC}$.
    \end{restatable}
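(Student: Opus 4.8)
\section{Proof Plan}

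The plan is to prove the contrapositive at the level of a single effect: I will show that any $tr$ with neither $\we$ nor $\wi$ that is safe under \textit{SC} yet unsafe under \textit{TSO} must involve $\tuwri$. Since the theorem restricts attention to effects free of $\we$, $\wi$ and $\tuwri$, this contradiction yields $\psf{SC}{tr}{P} \implies \psf{TSO}{tr}{P}$, which is exactly $\comp{TSO}{SC}$ by Definition~\ref{def:complete}. The two hypotheses of Lemma~\ref{lem:comp-gen-cond} are available with $M = TSO$ and $B = SC$: we have $\weak{TSO}{SC}$, and because $tr$ introduces no write ($\neg\wi$) we also have $\langle P' \rangle \sqsubseteq \langle P \rangle$, as argued in Section~\ref{subsec:complete}.

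First I would extract a single witnessing execution. Unsafety under \textit{TSO} furnishes some $E' \in \llbracket P' \rrbracket_{TSO}$ with no \textit{TSO}-consistent execution of $P$ sharing its behavior; this is precisely the range condition of Lemma~\ref{lem:comp-gen-cond}, and it already delivers condition (a), namely $\forall E \in \langle P \rangle .\ E \sim E' \implies E \in I\langle P \rangle_{TSO}$. Condition (b), $E' \in I\langle P' \rangle_{SC}$, then follows from safety under \textit{SC} together with weakness: were $E'$ \textit{SC}-consistent in $P'$, safety would supply an \textit{SC}-consistent $E$ of $P$ with $E \sim E'$, and $\weak{TSO}{SC}$ would make that same $E$ \textit{TSO}-consistent, contradicting the newness of $E'$. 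Thus the completeness failure is concentrated in one execution $E'$ that is \textit{TSO}-consistent but \textit{SC}-inconsistent, and whose behavior is unachievable by any \textit{TSO}-consistent execution of $P$.

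The heart of the argument is to read the required syntactic pattern off the consistency gap for $E'$. Since \textit{SC} and \textit{TSO} agree on Rules (a)--(e) of Definitions~\ref{def:sc-model} and~\ref{def:tso-model}, and \textit{TSO}'s Rules (f),(g) are exactly the weakening of \textit{SC}'s Rule (f), an execution that is \textit{TSO}-consistent but not \textit{SC}-consistent must carry a $\rb;\mo;\hb$ cycle that is neither a $\rb;\mo;\rfe;\po$ cycle nor a $\rb;\mo;[u];\po$ cycle. Unfolding $\hb = (\po\cup\rf)^{+}$ and taking such a cycle of minimal length, I would locate the reads-before read $r_x$, the same-location writes supplying the $\rb;\mo$ prefix, a write $w_y$, and the $\po$ edge closing the cycle back into a read; the requirement that the cycle evade Rules (f),(g) forces this closing order to enter a read through an internal step (forwarding, i.e.\ $\rfi$, or a bare $\po$ segment) rather than through an $\rfe;\po$ or $[u];\po$ tail. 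Comparing $E'$ against every behavior-matching $E \in \langle P \rangle$ and invoking condition (a) shows that this discriminating order cannot already be present in $P$ --- otherwise the matching $E$ would itself be \textit{TSO}-consistent, or the cycle would fall under Rule (f)/(g) --- so the edges $(w_y,r_x)$ and $(w_y,r_y)$ lie in $\po^{+}$, and no write to $x$ or read-modify-write ($u$) may sit between $w_y$ and $r_x$. With $\neg\we$ and $\neg\wi$ guaranteeing that the participating writes and reads are common to $P$ and $P'$, these facts are exactly Constraints~\ref{cons:c0}, \ref{cons:c1} and~\ref{cons:c2}, and assembling them gives one of the two alternative forms of Definition~\ref{def:tso-effect-const}, i.e.\ $\tuwri$.

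I expect the main obstacle to be the case analysis that pins the discriminating program order into $\po^{+}$ and rules out intervening writes or $u$-events on $x$ (Constraint~\ref{cons:c2}). This requires simultaneously (i) classifying the closing $\hb$ path of the minimal cycle according to whether it reaches the read through $\rfi$, $\rfe$, or a plain $\po$ segment terminating at a non-$u$ write, (ii) using condition (a) to certify that each behavior-matching candidate of $P$ is genuinely \textit{TSO}-inconsistent rather than merely \textit{SC}-inconsistent, and (iii) reconciling the resulting edge and location constraints with both alternatives of $\tuwri$ (one phrased over $\po(P)$, the other over $\po(P')$). The remaining bookkeeping --- that excluding $\we$ and $\wi$ keeps the cycle's events shared across $P$ and $P'$ so the two executions are directly comparable --- should be routine.
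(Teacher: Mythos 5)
Your setup is right: you correctly instantiate Lemma~\ref{lem:comp-gen-cond} with $M=TSO$, $B=SC$, use $\neg\wi$ to get $\langle P'\rangle \sqsubseteq \langle P\rangle$, and correctly identify (as in Prop.~\ref{prop:tso-cons-sc-incons}) that the witnessing $E'$ must carry an $\rb;\mo;\hb$ cycle evading \textit{TSO}'s Rules (f),(g). The gap is in the central step where you claim that ``comparing $E'$ against every behavior-matching $E$ and invoking condition (a) shows that this discriminating order cannot already be present in $P$ --- otherwise the matching $E$ would itself be \textit{TSO}-consistent.'' This is a non-sequitur. Condition (a) only tells you that each $E\sim E'$ is \textit{TSO}-inconsistent; it does not localize the inconsistency to the cycle you extracted from $E'$. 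Since $E\sim E'$ constrains only the shared $\rf$/$\mo$ and says nothing about how $\po(P)$ differs from $\po(P')$, the matching $E$ can be \textit{TSO}-inconsistent for any of the eleven reasons enumerated in Corollary~\ref{cor:tso-incons-exec} --- e.g.\ an $\rfi;\po$ or $\rb;\po$ cycle arising from a same-memory write--read de-ordering, an $\rb;\rfe;\po$ cycle from a read--read de-ordering or read elimination, an $\mo;\rfe;\po$ cycle, etc. In all of those situations the effect need not be $\tuwri$ at all, and your argument produces no contradiction; what is actually needed there is a proof that such effects already violate $\psf{SC}{tr}{P}$, contradicting your standing assumption of \textit{SC}-safety.

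That missing work is the bulk of the paper's proof: a case analysis over the possible minimal cycles in $E$ (Lemma~\ref{lem:sc-incons-exec} / Corollary~\ref{cor:tso-incons-exec}), where cases (a)--(i) are discharged in Theorem~\ref{thm:sra-sc-comp} by showing the induced de-orderings/eliminations unsafe under \textit{SC} (Lemmas~\ref{lem:transf:ww-sep:sc}, \ref{lem:transf:rw-wr-eq-sep:sc}) or by manipulating $\mo_{ext}$ and $\rf$ of $E'$ (Lemma~\ref{lem:sc-wr:p1}, crucial sets, $\pwc{SC}$, Lemma~\ref{lem:crucial-based-unsafety}) to produce an $E'_t \in \llbracket P'\rrbracket_{SC}$ with no behavior-matching \textit{SC}-consistent execution of $P$. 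Only in the residual cases (j) and (k) does $\tuwri$ enter, and even there it is used pointwise --- to certify that a particular write $w$ in the cycle is not the $w_y$ of the pattern, so that an $\mo_{ext}$ edge can be reversed --- rather than being read off wholesale from the shape of the cycle in $E'$. Your proposal contains none of this witness-construction machinery, yet it is unavoidable: to conclude that a non-$\tuwri$ effect contradicts \textit{SC}-safety you must exhibit an \textit{SC}-consistent execution of $P'$ whose behavior $P$ cannot realize under \textit{SC}. As written, the plan would establish the theorem only for the special case where the sole source of \textit{TSO}-inconsistency in every matching $E$ is the one cycle you isolated in $E'$.
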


    \subsubsection{Identifying Cycles in $E$, $E'$}

        The proof involves relying on the following lemma, which lists the minimal cycles that must exist to violate any constraint of \textit{SC} (proof in (Appendix~\ref{sec:sc-proofs})).
        \begin{restatable}{lemma}{lemscincons}
            \label{lem:sc-incons-exec}
            For any execution $E$ inconsistent under \textit{SC}, at least one of the following is true.
            \begin{tasks}(2)
                \task $\mo$ non-strict.
                \task $\mo;\po$ cycle.
                \task $\rfi;\po$ cycle.
                \task $\rb;\po$ cycle.
                \task $\rb;\rfe;\po$ cycle.
                \task $\mo;\rfe;\po$ cycle.
                \task $\rb;\hb;\rfe;\po$ cycle.
                \task $\rb;\mo$ cycle.
                \task $\mo;\rf$ cycle.
                \task $\rb;\mo_{ext};\rfe;\po$ cycle.
                \task $\rb;\mo;\po$ cycle.
            \end{tasks}
        \end{restatable}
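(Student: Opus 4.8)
The plan is to argue by cases on which of the six \textit{SC} constraints in Definition~\ref{def:sc-model} is violated, since $E$ being inconsistent means at least one fails. Two cases are immediate: a violation of constraint (a) is exactly target~(a) ($\mo$ non-strict), and a violation of constraint (e) ($\rb;\mo$ irreflexive) is exactly target~(h). Every remaining constraint --- (b) $\hb$, (c) $\mo;\hb$, (d) $\rb;\hb$ and (f) $\rb;\mo;\hb$ --- asserts irreflexivity of a composite that ends in $\hb$, so its violation yields a closed path whose only non-elementary ingredient is one (or, for the prefix, zero) occurrence of $\hb = (\po\cup\rf)^{+}$. The work is therefore to expand this $\hb$ into elementary $\po$ and $\rf$ steps and normalise the resulting cycle into one of the listed shapes.

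First I would fix a \emph{minimal-length} closed walk witnessing the violation and rewrite it using three reductions. (i) Consecutive $\po$ edges collapse by transitivity, so $\po$ appears only as isolated steps. (ii) Each $\rf$ step is tagged as $\rfi$ or $\rfe$; a maximal run of $\rf$ steps can be chained only through update events $u$ (which are simultaneously reads and writes), and otherwise a write must be followed by $\po$ or $\mo$. (iii) Wherever the walk takes an $\rf^{-1}$ step immediately followed by a same-location $\mo_{|loc}$ step I fold the pair into a single $\rb$ step, matching the definition $\rb = \rf^{-1};\mo_{|loc}$. Minimality of the walk prevents a target cycle from reappearing inside a longer one, which is what lets me conclude that at least one \emph{minimal} shape is realised.

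Then I would read off the targets per violated constraint. A violated (c) ($\mo;\hb$) contributes, depending on whether the expanded $\hb$ begins with $\po$, ends at an internal read, or threads a further write, the shapes $\mo;\po$ (target b), $\mo;\rfe;\po$ (f), $\mo;\rf$ (i) and the isolated $\rfi;\po$ (c). A violated (d) ($\rb;\hb$) analogously gives $\rb;\po$ (d) and $\rb;\rfe;\po$ (e). The richest case is a violated (f) ($\rb;\mo;\hb$): here the trailing $\hb$ need not fully collapse, and after separating the memory order into its $\hb$-consistent part and its external part $\mo_{ext} = \mo\setminus(\mo\cap\hb)$ one is left with $\rb;\hb;\rfe;\po$ (g), $\rb;\mo_{ext};\rfe;\po$ (j), or $\rb;\mo;\po$ (k). Throughout, any residual internal read-from that cannot be absorbed into an adjacent $\po$ surfaces as the standalone target $\rfi;\po$ (c).

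The main obstacle is exactly this normalisation: showing that the case split above is \emph{exhaustive}, i.e. that no other cyclic shape can survive the reductions, and that each reduction strictly shortens or preserves the walk so the procedure terminates. The delicate points are (1) justifying when an $\rfi$ edge is absorbed into $\po$ versus when it must be retained as $\rfi;\po$, since this decides whether target (c) is needed; (2) carefully tracking which occurrences of $\mo$ are $\hb$-internal so that the $\hb$ and $\mo_{ext}$ segments carried along in targets (g) and (j) are genuinely forced rather than avoidable; and (3) the bookkeeping around update events, which let $\rf$ chains pass through a single event and thereby blur the boundary between the $\rb$, $\mo$ and $\rf$ segments of the cycle. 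I expect most of the proof's length to lie in this exhaustive cycle analysis rather than in any single clever step.
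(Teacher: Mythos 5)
Your overall strategy matches the paper's: case analysis on which rule of Definition~\ref{def:sc-model} fails, immediate dispatch of rules (a) and (e), and expansion of the $\hb$ components of the remaining rules into $\po$/$\rf$ steps followed by a normalisation into the eleven target shapes. However, there is a genuine gap in the normalisation itself, and it sits exactly where you locate the ``main obstacle.'' Your three reductions (collapsing consecutive $\po$, tagging $\rf$ as $\rfi$/$\rfe$, folding $\rf^{-1};\mo_{|loc}$ into $\rb$) only rewrite edges already present in the witnessing walk; they never \emph{introduce} new edges. But a cycle witnessing a violation of rule (b) or (d) can thread through two or more distinct writes connected only by $\rfe;\po$ segments, e.g. $[w];\rfe;\po;[w'];\rfe;\po;[w]$, and no target shape contains two $\rfe$ edges with no intervening $\mo$ or $\rb$. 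Such a cycle survives all three of your reductions and matches none of (a)--(k), so your exhaustiveness claim fails as stated.

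The missing idea is the paper's use of the \emph{totality} of $\mo$: whenever the walk passes through two distinct writes $w$, $w'$, one of $[w];\mo;[w']$ or $[w'];\mo;[w]$ must hold, and the case split either realises a target directly (e.g.\ $[w'];\mo;[w];\rfe;\po$ gives target (f)) or produces a strictly smaller cycle belonging to a different rule's case (e.g.\ $[w];\mo;[w'];\hb$, which re-enters the rule (c) analysis). This is also what forces the $\mo_{ext}$ and trailing-$\hb$ segments in targets (g) and (j) for rule (f), a point you flag as delicate but do not resolve. Termination is then argued not by walk length but by the number of events involved in the cycle decreasing at each reduction, which is compatible with jumping between rule cases. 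Your write-up correctly assigns most targets to their source rules (with the minor slip that $\rb;\hb;\rfe;\po$, target (g), arises in the paper from rule (d) rather than rule (f)), but without the $\mo$-totality case split the argument cannot be closed.
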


    %

        We proceed to use Lemma~\ref{lem:comp-gen-cond} where $B=SC$ and $M=TSO$ to infer the following about $E'$. 
        \begin{proposition}
            \label{prop:tso-cons-sc-incons}
            Any execution $E'$ consistent under \textit{TSO} but inconsistent under \textit{SC} has $\rb;\mo_{ext};\po$ cycle.
        \end{proposition}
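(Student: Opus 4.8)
The plan is to combine the structural decomposition of Lemma~\ref{lem:sc-incons-exec} with the \textit{TSO} axioms of Def~\ref{def:tso-model}. By hypothesis (and as supplied by Lemma~\ref{lem:comp-gen-cond} instantiated with $B=SC$ and $M=TSO$), the execution $E'$ is consistent under \textit{TSO} yet inconsistent under \textit{SC}. Since $E'$ is \textit{SC}-inconsistent, Lemma~\ref{lem:sc-incons-exec} guarantees that at least one of its eleven cycle conditions (a)--(k) holds in $E'$. My strategy is to show that \textit{TSO}-consistency rules out conditions (a)--(j) outright, leaving only the $\rb;\mo;\po$ cycle of (k), and then to refine that surviving cycle into the desired $\rb;\mo_{ext};\po$ cycle.

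For the elimination I would rely on two elementary containments, $\po \subseteq \hb$ and $\rf \subseteq \hb$ (hence $\rfi,\rfe \subseteq \hb$), together with transitivity of $\hb=(\po\cup\rf)^{+}$. These let me collapse each compound cycle into one of the irreflexivity constraints of Def~\ref{def:tso-model}. Concretely: (a) contradicts Rule (a) ($\mo$ strict); (c) collapses a $\hb$-composite to $\hb$ reflexivity, contradicting Rule (b); (b), (f), (i) each absorb their trailing $\po$, $\rfe;\po$, or $\rf$ into $\hb$ to yield a $\mo;\hb$ cycle, contradicting Rule (c); (d), (e), (g) similarly yield a $\rb;\hb$ cycle, contradicting Rule (d); (h) contradicts Rule (e); and (j) uses $\mo_{ext}\subseteq\mo$ to produce a $\rb;\mo;\rfe;\po$ cycle, contradicting Rule (f). Each of these is a routine one-line reduction once the $\hb$-containments are in hand.

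The one case that survives is (k), the $\rb;\mo;\po$ cycle, and this is where the proof does real work, since \textit{TSO} deliberately permits such cycles: Rules (f) and (g) forbid them only when the $\mo$ step is immediately followed by an $\rfe$ or an $[u]$ edge. So I would fix witnesses $a,b,c$ with $(a,b)\in\rb$, $(b,c)\in\mo$ and $(c,a)\in\po$, and partition the middle edge, writing $\mo$ as $\mo_{ext}$ together with the internal part $\mo\cap\hb$. If $(b,c)\in\mo_{ext}$ I immediately obtain the desired $\rb;\mo_{ext};\po$ cycle. If instead $(b,c)\in\mo\cap\hb$, then $(b,c)\in\hb$ and $(c,a)\in\po\subseteq\hb$ give $(b,a)\in\hb$ by transitivity; together with $(a,b)\in\rb$ this is a $\rb;\hb$ cycle, contradicting Rule (d) of Def~\ref{def:tso-model} and hence the \textit{TSO}-consistency of $E'$. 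The internal case is therefore impossible, so the external case must hold.

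I expect this final refinement to be the crux. The elimination of (a)--(j) is essentially mechanical, but the argument that the surviving $\mo$ edge must be \emph{external}, precisely the feature that \textit{TSO} leaves unconstrained while \textit{SC} constrains, is what isolates $\rb;\mo_{ext};\po$ as the canonical witness separating the two models. The only subtlety to double-check is that the partition of $\mo$ into $\mo_{ext}$ and $\mo\cap\hb$ matches exactly the definition of $\mo_{ext}$ given in Sec~\ref{subsec:axiomatic}, so that no $\mo$ pair is dropped in the case split.
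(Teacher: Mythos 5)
Your proposal is correct and follows exactly the route the paper intends: the paper states this proposition without an explicit proof, deriving it implicitly from Lemma~\ref{lem:sc-incons-exec} together with the \textit{TSO} rules of Def~\ref{def:tso-model}, which is precisely your elimination of cases (a)--(j) followed by the refinement of the surviving $\rb;\mo;\po$ cycle to $\rb;\mo_{ext};\po$ via the $\rb;\hb$ irreflexivity rule. Your closing concern about the partition is resolved by the paper's own formula $\mo_{ext} = \mo \setminus (\mo \cap \hb)$, which makes your case split exhaustive.
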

        Next, since $E$ must be inconsistent under \textit{TSO}, every cycle except a subset of (k) from Lemma~\ref{lem:sc-incons-exec} can be true (Corollary~\ref{cor:tso-incons-exec}). 
        Specifically, (k), from Def~\ref{def:tso-model}, reduces to $\rb;\mo;[u];\po$ cycle in $E$.

        The proof now involves going case-wise over each cycle possible in $E$, comparing them with those in $E' \sim E$, and showing that it is possible to derive an $E'_{t} \in \llbracket P' \rrbracket_{SC}$ such that $\forall E_{t} \in \langle P \rangle \ . \ E_{t} \sim E'_{t} \implies E_{t} \in I\langle P \rangle_{SC}$.
        This simply implies $\neg \psf{SC}{tr}{P}$, thereby proving our result by contradiction.

    \subsubsection{Cases of Same Memory Deordering}
        For example Fig~\ref{conc:wr-deord} is a case where $[a=x];\rb;\po$ is a cycle in $E$, whereas $[a=x];\rb;\mo_{ext};\po$ is a cycle in $E'$.
        We can extract the information about $\po$ from both $E$ and $E'$, to identify an effect which we can instead prove to be unsafe under \textit{SC}.
        Since we have $st^{-} \cap w(P)$, we can infer that $P \mapsto_{tr} P'$ incurs same memory write-read de-ordering $(x=1, a=x) \in \po^{-}$.
        Such a de-ordering we show to be unsafe under \textit{SC} independent of the whole program context, thus, addressing such cycles in general. 
        \begin{figure*}[h]
            \centering
            \includegraphics[scale=0.6]{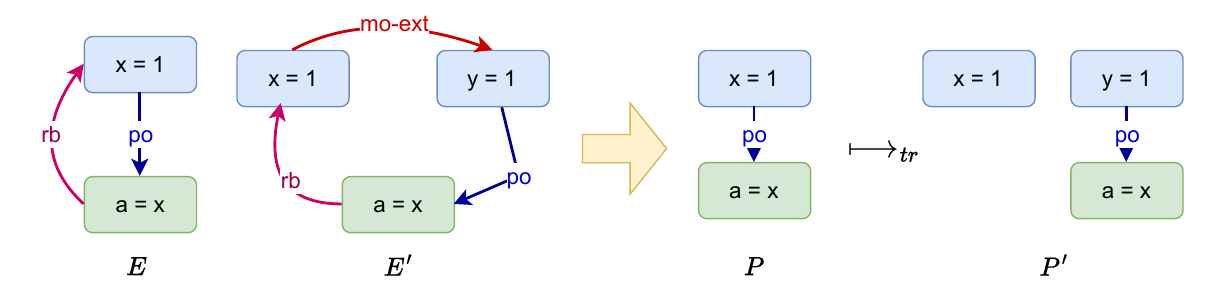}
            \caption{Executions $E$, $E'$ imply same memory write-read de-ordering effect (right).}
            \label{conc:wr-deord}
        \end{figure*}

    \subsubsection{Deriving Alternate $E$, $E'$}
        Other cases can be more involved, and may require extracting information on $\mo$ and $\rf$ in addition to $\po$ from $E$ and $E'$.
        For instance, take the case where $[a=x];\rb;\rfe;[b=x];\po$ cycle in $E$ as in Fig~\ref{conc:rr-deord} middle.
        If $P \mapsto_{tr} P'$ has $\{ a=x, b=x \} \in st^{-}$, then we can show there will always exist some $E_{t} \in \llbracket P \rrbracket_{TSO}$ such that $E_{t} \sim E'$.
        This violates our premise, thereby addressing all three cases\footnotemark.
        \footnotetext{
            These cases of read elimination can represent some form of load/store forwarding as well as unused read elimination done by the compiler. 
        }
        
        For the case where both reads exist in $P'$, we have same memory read-read de-ordering $(b=x, a=x) \in \po^{-}$.
        To prove this is unsafe under \textit{SC}, we use information from $\rf$ and $\mo$ relations in $E'$.
        We show it is always possible to manipulate $\mo(E')$ and $\rf(E')$ relations to derive some $E'_{t} \in \llbracket P' \rrbracket_{SC}$ for which there does not exist another $E_{t} \in \llbracket P \llbracket_{SC}$ such that $E_{t} \sim E'_{t}$.
        This by definition implies $\neg \psf{SC}{tr}{P}$\footnotemark.
        \footnotetext{
            De-ordering of same memory reads can be safe under \textit{SC} if there is no same memory concurrent write.
            However, such a case will also imply $\rb;\po$ cycle in $E'$, violating the premise.
        } 
        An example manipulating $\mo(E')$ is shown in Fig~\ref{conc:rr-deord} $E'$ (left) and corresponding $E'_{t}$ (right) derived by changing $[x=1];\mo_{ext};[y=1]$ to $[y=1];\mo_{ext};[x=1]$ instead. 
        We show that \emph{reversing} such $\mo_{ext}$ relations does not alter the cycle involving $a=x$ in $E$, yet removing it from $E'$.
        \begin{figure*}[h]
            \centering
            \includegraphics[scale=0.6]{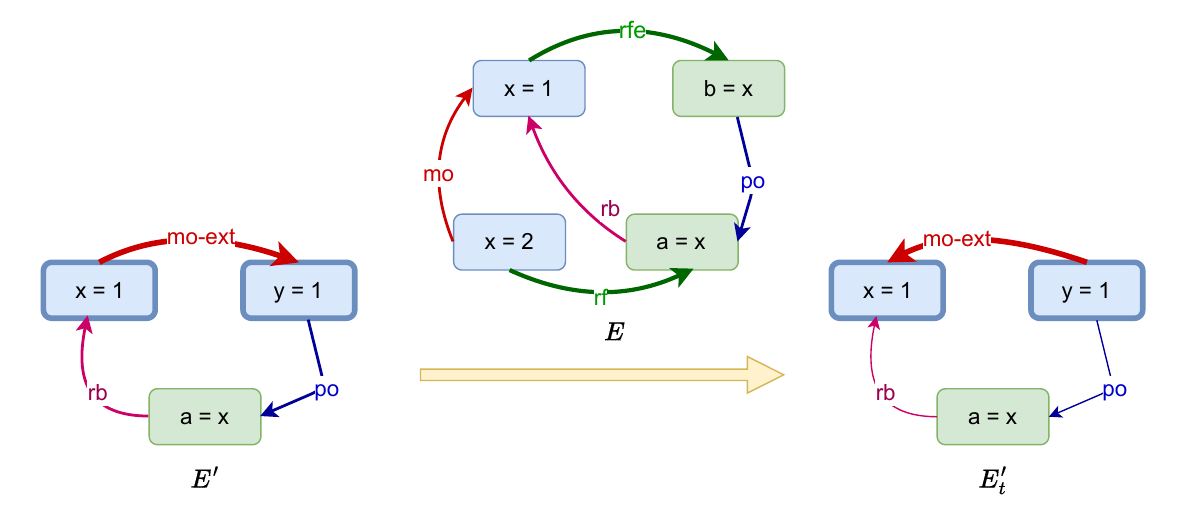}
            \caption{Execution $E$ (middle) implies same memory read-read de-ordering effect.}
            \label{conc:rr-deord}
        \end{figure*}
        
    \subsubsection{Usage of Constraint $\tuwri$}
        An example where the constraint Def~\ref{def:tso-effect-const} is required is shown in Fig~\ref{conc:iwri-ex}(a).
        We just have $[a=x];\rb;[x=2];\mo_{ext};[y=1];\rfe;[a=y];\po;[b=x]$ cycle in $E$ and $[a=x];\rb;[x=2];\mo_{ext};[w];\po$ cycle in $E'$.
        Using constraint $\tuwri$, we can infer $w$ is not $`y=1'$ in $P'$.
        This permits manipulating $[x=2];\mo;[w]$ to $[w];\mo;[x=2]$ to remove the cycle with $a=x$ in $E'$, while also preserving the cycle mentioned in $E$.
        This gives us our desired $E'_{t}$ and $E_{t}$.
        \begin{figure*}[h]
            \centering
            \includegraphics[scale=0.6]{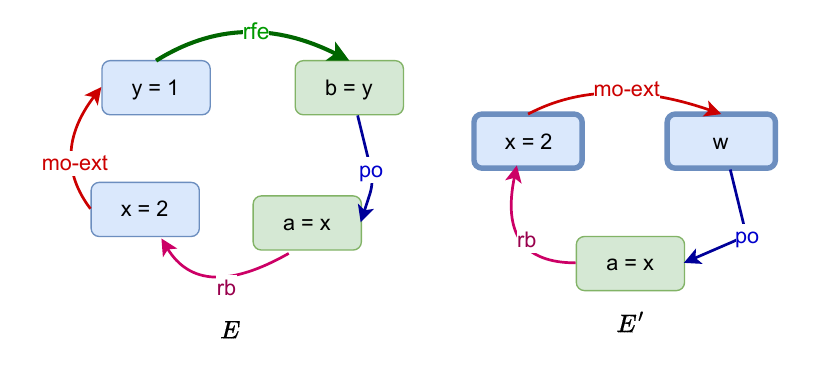}
            \caption{Example where the constraint $\tuwri$ is used to infer $\neg \psf{SC}{tr}{P}$.}
            \label{conc:iwri-ex}
        \end{figure*}

        For other cases, we show that in general manipulating $\mo_{ext}(E')$ and $\rf(E')$ relations will give similarly our desired $E'_{t}$ as described above. 
        The full proof of Theorem~\ref{thm:tso-sc-comp} is provided as supplementary material (cite Appendix).

    \subsection{Correlation with Races}

    \label{subsec:race-sra}
    Typically, safety of program optimizations are with the assumption that the source program is data-race-free, a semantic constraint that must be verified given source program. 
    To add, optimizations are considered safe only if they do not introduce data-races.
    This prohibits optimizing programs unless we can verify data-race-freedom, something which is very hard to determine, let alone program.
    Our result w.r.t. TSO shows that such a constraint may be too strict, and that it is possible to have constraints weaker than data-race-freedom, at the level of syntactic constraints over optimizations instead.

    An interesting correlation exists between our constraint and a subset of race inducing optimizations.
    To understand this, we refer to the example in Fig~\ref{conc:tr-ex}.
    \begin{figure*}[!htbp]
        \centering
        \begin{equation*}
            \inarrIII{
                x=1;
            }{
                y=1; \\ 
                a=x; \ (0)
            }{
                b=x; \ (1) \\ 
                c=y; \ (0)
            }
        \end{equation*}
        \caption{Example of a triangular race. Adapted from \cite{OwensTrftso}}
        \label{conc:tr-ex}
    \end{figure*}
    Observe that, under \textit{TSO} or \textit{SC} semantics, $b=1$ and $c=0$ implies $x=1$ must be visible to all processors before $y=1$.
    This would imply, observing $a=0$ would not be possible under \textit{SC}.   
    However, for \textit{TSO}, it is an acceptable outcome. 
    If $y=1$ remains in the write buffer, the read to $x$ can be done before $x=1$ is flushed to main memory. 
    Such an execution is called to exhibit a \emph{triangular race} (TR), a data race that occurs between $x=1$ and $a=x$ along with a preceding write $y=1$ to it \cite{OwensTrftso}.  
    
    Adapting the axiomatic definition from \cite{OwensTrftso}, we define triangular race as follows
    \begin{definition}
        \label{def:axiom-tr}
        A candidate execution $E$ has an \emph{Axiomatic TR} if there exists events $w_{x}, r_{x}, w_{y}$ such that 
        \begin{tasks}(2)
            \task $x \neq y$.
            \task $tid(w_{x}) \neq tid(r_{x})$.
            \task $tid(w_{y}) = tid(r_{x})$.
            \task $[w_{y}];\po;[r_{x}]$.
            \task $[w_{x}];\mo;[w_{y}]$.
            \task $\exists w'_{x} \neq w_{x} \ . \ [w'_{x}];\rf;[r_{x}]$.
            \task* $\forall w'_{x} \in st(p(E)) \ . \ [w'_{x}];\po;[w_{y}] \implies [w'_{x}];\mo;[w_{x}]$.
        \end{tasks}
    \end{definition}
    
    Def~\ref{def:axiom-tr} constraints (a), (b), (c), (d) gives us a shape of pre-traces whose executions can exhibit an Axiomatic TR.
    We can further refine this shape, using the following lemma below. 
    \begin{restatable}{lemma}{lempretracetoatr}
        \label{lem:pre-trace-to-atr}
        Any execution $E$ consistent under \textit{TSO} that has an Axiomatic TR with events $w_{x}, r_{x}, w_{y} \in st(P)$ will have the following additional constraint on $p(E)$.     
        \begin{tasks}
            \task $\nexists w'_{x} \ . \ [w_{y}];\po;[w'_{x}];\po;[r_{x}] \in \po(P)$.
            \task $\nexists u(z, a, v2) \in u(P) \ . \ [w_{y}];\po;[u];\po;[r_{x}] \in \po(P)$.
        \end{tasks} 
    \end{restatable}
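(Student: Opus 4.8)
The plan is to prove each part by contradiction: I assume a \textit{TSO}-consistent execution $E$ that witnesses the ATR on $w_x, r_x, w_y$ and, in addition, carries the supposedly forbidden $\po$-intervening event, and I then exhibit a cycle that some rule of Def~\ref{def:tso-model} rules out. The single edge powering both arguments is the \emph{reads-before} edge $[r_x];\rb;[w_x]$. This comes from the defining feature of a genuine triangular race: although $[w_x];\mo;[w_y];\po;[r_x]$ (clauses (e),(d) of Def~\ref{def:axiom-tr}), $r_x$ does not read $w_x$ but a write $w_r$ already superseded by $w_x$ in memory order, i.e.\ $[w_r];\rf;[r_x]$ with $[w_r];\mo;[w_x]$; since both are writes to $x$, $\rb = \rf^{-1};\mo_{|loc}$ yields $[r_x];\rb;[w_x]$. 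Together with clause (e) this is exactly the $\rb;\mo;\po$ cycle that \textit{TSO} tolerates but \textit{SC} forbids. Throughout I use that $\mo$ is a strict total order over \emph{all} writes (Def~\ref{def:tso-model}(a)) and that $\po \subseteq \hb$.

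For part (a), suppose a write $w'_x$ to $x$ has $[w_y];\po;[w'_x];\po;[r_x]$. Since $w'_x$ and $w_x$ are $\mo$-comparable, were $[w'_x];\mo;[w_x]$ then $[w'_x];\mo;[w_y];\po;[w'_x]$ (using $[w_x];\mo;[w_y]$ and $[w_y];\po;[w'_x]$) would be a $\mo;\hb$ cycle, contradicting Def~\ref{def:tso-model}; hence $[w_x];\mo;[w'_x]$. On the other hand, because $w'_x$ and $r_x$ are same-location with $[w'_x];\po;[r_x]$, the read cannot precede $w'_x$ in $\rb$: otherwise $[r_x];\rb;[w'_x];\po;[r_x]$ is a $\rb;\hb$ cycle. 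So $r_x$ reads $w'_x$ itself or a strictly $\mo$-later write; in either case, composing with $[w_r];\mo;[w_x]$ (taking $w_r = w'_x$, or using transitivity of $\mo$) gives $[w'_x];\mo;[w_x]$. This contradicts $[w_x];\mo;[w'_x]$ by strictness of $\mo$, ruling out $w'_x$.

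For part (b), suppose an RMW $u$ has $[w_y];\po;[u];\po;[r_x]$. Treating $u$ as a write, it is $\mo$-comparable to $w_x$ regardless of its location; if $[u];\mo;[w_x]$ then $[u];\mo;[w_y];\po;[u]$ is again a $\mo;\hb$ cycle, so $[w_x];\mo;[u]$. Now the reads-before edge closes the path $[r_x];\rb;[w_x];\mo;[u];\po;[r_x]$, and since $u$ is a read-modify-write and the final step is $\po$, this is precisely a $\rb;\mo;[u];\po$ cycle, which Def~\ref{def:tso-model} forbids; contradiction.

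I expect the main obstacle to be justifying the stale reads-before edge $[r_x];\rb;[w_x]$ rigorously, since the literal clause (f) of Def~\ref{def:axiom-tr} only asserts that $r_x$ reads some $w_r \neq w_x$ without fixing its $\mo$-position, and a read source $w_r$ with $[w_x];\mo;[w_r]$ is \emph{not} excluded by the structural clauses alone. I must therefore argue that $[w_x];\mo;[w_r]$ is incompatible with $(w_x, r_x, w_y)$ being a genuine, \textit{SC}-violating triangular race under \textit{TSO}: clause (g) pins $w_x$ as the relevant $x$-write, and, as $E$ realizes an \textit{SC}-inconsistent outcome, Proposition~\ref{prop:tso-cons-sc-incons} supplies a $\rb;\mo_{ext};\po$ cycle whose reads-before edge I align with $r_x$ and $w_x$. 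Once $[r_x];\rb;[w_x]$ is secured, both cases collapse to routine uses of $\mo;\hb$- and $\rb;\mo;[u];\po$-irreflexivity, with part (a) additionally needing the same-location forwarding step via $\rb;\hb$-irreflexivity.
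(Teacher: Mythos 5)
Your reduction of both parts to the single edge $[r_{x}];\rb;[w_{x}]$ is the right move, and it matches the only reasoning the paper offers for this statement (the informal discussion around Fig.~\ref{conc:c3}; no formal proof of Lemma~\ref{lem:pre-trace-to-atr} appears in the appendix). Granting that edge, part (a) is sound — $[w_{x}];\mo;[w'_{x}]$ by $\mo;\hb$-irreflexivity, then $w_{r}=w'_{x}$ or $[w'_{x}];\mo_{|loc};[w_{r}]$ by $\rb;\hb$-irreflexivity, and composing with $[w_{r}];\mo;[w_{x}]$ contradicts strictness of $\mo$ — and part (b) is exactly the intended $\rb;\mo;[u];\po$ cycle.

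The difficulty you flag at the end is, however, a genuine gap, and your proposed repair does not close it. Def~\ref{def:axiom-tr} never forces the source of $r_{x}$ to be $\mo$-before $w_{x}$: clause (f) only excludes $w_{x}$ itself, and clause (g) constrains writes $\po$-before $w_{y}$, not the read's source. Proposition~\ref{prop:tso-cons-sc-incons} cannot supply the edge either, because the lemma assumes only TSO-consistency and an Axiomatic TR, not SC-inconsistency, so there is no $\rb;\mo_{ext};\po$ cycle to align with $r_{x}$ and $w_{x}$. Without that edge the statement fails outright: let thread~1 perform $w_{x}\colon x=1$ and thread~2 perform $w_{y}\colon y=1$, then $w'_{x}\colon x=2$, then $r_{x}$ reading $2$ from $w'_{x}$, with $\mo$ ordering $w_{x}$ before $w_{y}$ before $w'_{x}$. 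Every clause (a)--(g) of Def~\ref{def:axiom-tr} holds (clause (g) vacuously), the execution satisfies all of Def~\ref{def:tso-model} (it is even SC-consistent), yet $[w_{y}];\po;[w'_{x}];\po;[r_{x}]$ holds, violating conclusion (a); an analogous variant defeats (b). So the lemma needs either the extra hypothesis $[r_{x}];\rb;[w_{x}]$ (equivalently, a clause in Def~\ref{def:axiom-tr} requiring the read's source to be $\mo_{|loc}$-before $w_{x}$) or a restriction to executions realizing an SC-forbidden outcome; with either addition, your two case analyses complete the proof.
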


    Finally, we can use the constraint on pre-traces to link triangular races to our constraint over transformation-effects portable from \textit{SC} to \textit{TSO}
    \begin{restatable}{lemma}{lemtriwri}
        \label{lem:tr-iwri}
        If $P \mapsto_{tuwri} P'$ then 
            $\exists E' \in \llbracket P' \rrbracket_{TSO} \ . \ E' \ \text{has an Axiomatic TR}$.
    \end{restatable}
    The proof is straightforward: the constraints over $P'$ can easily be implied using the Constraints~\ref{cons:c0},\ref{cons:c1},\ref{cons:c2} that constitute the effect $\tuwri$.

    Lemma~\ref{lem:tr-iwri} allow us to finally claim the following from Theorem~\ref{thm:tso-sc-comp}, correlating triangular races to our result\footnotemark.  
    \begin{corollary}
        \label{cor:tso-tr}
        For $P \mapsto_{tr} P'$ with no $\we$, $\wi$ and introducing an Axiomatic TR, we have $\comp{SC}{TSO}$.
    \end{corollary}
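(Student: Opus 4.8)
The plan is to unfold $\comp{SC}{TSO}$ via Def~\ref{def:complete} and show that, for every effect $P \mapsto_{tr} P'$ in the stated class, the implication $\psf{TSO}{tr}{P} \Rightarrow \psf{SC}{tr}{P}$ holds. The cleanest route is to establish that such an effect can never be safe under TSO, i.e. $\neg\psf{TSO}{tr}{P}$: once the premise of the implication is known to fail, the implication is vacuously true and the corollary follows for all such effects. Concretely, I would first record the ``safety table'' observation that $\comp{SC}{TSO}$ can only fail for an effect that is simultaneously unsafe under SC and safe under TSO, so it suffices to rule out TSO-safety.

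For the TSO-unsafety step I would exhibit the witnessing execution directly. By hypothesis the effect introduces an Axiomatic TR, so there is $E' \in \llbracket P' \rrbracket_{TSO}$ carrying events $w_x, r_x, w_y$ satisfying Def~\ref{def:axiom-tr}; this is exactly the conclusion of Lemma~\ref{lem:tr-iwri} when the effect is read as a $\tuwri$, and the clauses of Def~\ref{def:tso-effect-const} are precisely what make the effect match that shape. The TR supplies an $[r_x];\rb;[w_x];\mo;[w_y];\po;[r_x]$ cycle which TSO tolerates because the closing step $[w_y];\po;[r_x]$ is a plain program-order edge out of a plain write, so rules (f),(g) of Def~\ref{def:tso-model} do not fire, with Lemma~\ref{lem:pre-trace-to-atr} guaranteeing no intervening write or rmw to $x$ between $w_y$ and $r_x$. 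Using the absence of $\wi$ we have $\langle P' \rangle \sqsubseteq \langle P \rangle$, so there is $E \in \langle P \rangle$ with $E \sim E'$; but in $P$ the inlined edge $[w_y];\po;[r_x]$ is absent (Def~\ref{def:tso-effect-const} forces $(w_y,r_x),(r_x,w_y)\notin\po(P)$), so realising the same $\rf$ and $\mo$ can only close the cycle through a cross-thread $\rfe;\po$ (or $[u];\po$) segment, which Def~\ref{def:tso-model}(f)/(g) forbids. Hence every $E \in \langle P \rangle$ with $E \sim E'$ lies in $I\langle P \rangle_{TSO}$, so $E'$'s behaviour is not in $\llbracket P \rrbracket_{TSO}$ and $\neg\psf{TSO}{tr}{P}$.

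The main obstacle is the last claim: showing that deleting the inlined edge genuinely forces the cycle to re-route through a TSO-forbidden $\rfe;\po$ (or $[u];\po$) segment, and that no alternative choice of $\mo$/$\rf$ in $P$ yields a TSO-consistent match for $E'$. This is the same reasoning that powers Theorem~\ref{thm:tso-sc-comp}: it is Constraints~\ref{cons:c1},\ref{cons:c2} together with Lemma~\ref{lem:pre-trace-to-atr} that exclude an intervening $w'_x \in w_x(P)$ or $u \in u(P)$ (which would instead create an $\mo;\hb$ or $\rb;\hb$ cycle and thus rule the behaviour out of $P'$ as well). Rather than redo that case analysis, I would phrase the step as a direct appeal to the construction behind Lemma~\ref{lem:comp-gen-cond}, instantiated at $B=SC$ and $M=TSO$, which already certifies a $\tuwri$ effect as unsafe under TSO; the corollary is then the purely logical consequence that a TSO-unsafe effect trivially satisfies $\psf{TSO}{tr}{P} \Rightarrow \psf{SC}{tr}{P}$, i.e. $\comp{SC}{TSO}$.
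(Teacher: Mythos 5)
Your proposal takes a genuinely different route from the paper's, and the route has a gap. The paper obtains this corollary as an immediate consequence of Theorem~\ref{thm:tso-sc-comp} by \emph{contraposing} Lemma~\ref{lem:tr-iwri}: since a $\tuwri$ effect necessarily introduces an Axiomatic TR, an effect with no $\we$, no $\wi$ and no introduction of an Axiomatic TR cannot involve $\tuwri$, so the theorem applies verbatim and gives completeness in the same direction as the theorem, i.e.\ $\psf{SC}{tr}{P} \implies \psf{TSO}{tr}{P}$. The hypothesis list is to be read as ``no $\we$, no $\wi$, and no introduction of an Axiomatic TR''; nothing beyond this one-line contrapositive is required.

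You instead treat ``introduces an Axiomatic TR'' as a positive hypothesis and try to make the completeness implication vacuous by proving $\neg\psf{TSO}{tr}{P}$. Three things break. First, your opening step --- the effect introduces a TR, ``so \dots this is exactly the conclusion of Lemma~\ref{lem:tr-iwri} when the effect is read as a $\tuwri$'' --- invokes the \emph{converse} of that lemma. The lemma only gives that $\tuwri$ implies a TR; the footnote attached to the corollary states explicitly that the converse fails, i.e.\ certain TR-introducing effects are not $\tuwri$ and are in fact portable, hence TSO-safe whenever SC-safe, so for those your vacuity argument cannot start. Second, even for genuine $\tuwri$ effects the paper never establishes $\neg\psf{TSO}{tr}{P}$: Constraints~\ref{cons:c0}--\ref{cons:c2} are phrased as ``may be unsafe'', and Lemma~\ref{lem:comp-gen-cond} characterises the executions that would witness unsafety rather than certifying that a $\tuwri$ effect possesses one. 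Third, the cycle $[r_{x}];\rb;[w_{x}];\mo;[w_{y}];\po;[r_{x}]$ you extract is not guaranteed by Def~\ref{def:axiom-tr}: clause (f) only supplies some $w'_{x} \neq w_{x}$ with $[w'_{x}];\rf;[r_{x}]$ and says nothing about $[w'_{x}];\mo;[w_{x}]$, so the $\rb$ edge into $w_{x}$ need not exist, and the subsequent claim that every $E \sim E'$ in $\langle P \rangle$ must re-route the cycle through a forbidden $\rfe;\po$ segment is unsupported. Finally, even if your strategy succeeded it would render the corollary vacuously true and hence contentless, which is at odds with its stated role of correlating triangular races with Theorem~\ref{thm:tso-sc-comp}.
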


    \footnotetext{Notice that our original constraint is weaker, in that certain triangular races are permitted to be introduced.} 

    \subsection{From Triangular Races to Causal Consistency}
    Just like $\tuwri$ relates to triangular races, an interesting correlation exists with Causal Consistency.
    We discovered that $\tuwri$ had an interesting correlation with Strong Release Acquire (SRA).
    SRA is known to be strictly weaker than TSO, and for two threaded programs with no updates/fences, the behaviors permitted by both models coincide \cite{nickSRA}.
    We instead provide an alternative description, one that establishes a relation with optimizations and races. 
    First, SRA can be viewed equivalent to allowing optimizations of Def~\ref{def:tso-effect-const} over TSO.
    The following theorem shows that in the absence of two specific constraints present in \textit{TSO}, the constraint $\tuwri$ is not required. 
    \begin{restatable}{theorem}{thmsrasccomp}
        \label{thm:sra-sc-comp}
        Considering constraints $a1 = \rb;\mo;\rfe;\po \ \text{irreflexive}$ and $a2 = \rb;\mo;[u];\po \ \text{irreflexive}$ we have $\comp{TSO \setminus \{ a1 \cup a2 \} }{SC}$ for $tr$ not involving $\wi$ and $\we$.
    \end{restatable}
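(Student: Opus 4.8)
The plan is to re-run the proof of Theorem~\ref{thm:tso-sc-comp} with $M := TSO\setminus\{a1\cup a2\}$ in place of $TSO$ and $B = SC$, tracking how deleting rules $a1$ and $a2$ reshapes the case analysis. First I would discharge the hypotheses of Lemma~\ref{lem:comp-gen-cond}: $M$ is obtained from $TSO$ by dropping constraints, so it is weaker than $TSO$ and hence than $SC$, giving $\weak{M}{SC}$; and ruling out $\wi$ yields $\langle P'\rangle\sqsubseteq\langle P\rangle$ exactly as in Sec~\ref{subsec:complete}. Assuming for contradiction that $tr$ is unsafe under $M$ yet safe under $SC$, Lemma~\ref{lem:comp-gen-cond} supplies a witness $E'\in\llbracket P'\rrbracket_{M}$ admitting no behaviour-equivalent $M$-consistent execution of $P$; the goal is to build from it an $E'_t\in\llbracket P'\rrbracket_{SC}$ whose behaviour-equivalent executions of $P$ are all $SC$-inconsistent, which establishes $\neg\psf{SC}{tr}{P}$ and contradicts the assumed $SC$-safety.

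The two data that feed the case analysis change shape. Since $M$ keeps precisely rules (a)--(e), which coincide with the $SC$ rules (a)--(e), an $M$-consistent but $SC$-inconsistent $E'$ can only break $SC$ rule (f), so it carries a $\rb;\mo;\hb$ cycle; moreover the $\mo$ step must be $\mo_{ext}$, since an $\mo\cap\hb$ step would collapse the cycle to $\rb;\hb$, already barred by rule (d). This is the $M$-analogue of Proposition~\ref{prop:tso-cons-sc-incons}: every witness $E'$ has a $\rb;\mo_{ext};\hb$ cycle, now subsuming the richer shape $\rb;\mo_{ext};\rfe;\po$ that a $TSO$-consistent $E'$ could never exhibit. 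Dually, every behaviour-equivalent $E$ is $M$-inconsistent, hence violates one of rules (a)--(e); by Lemma~\ref{lem:sc-incons-exec} this confines $E$ to the cycles (a)--(i), the cycles (j) and (k) being unavailable because witnessing them needs exactly the deleted rules $a1$ and $a2$.

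The conceptual payoff is then immediate. In the proof of Theorem~\ref{thm:tso-sc-comp} the constraint $\tuwri$ was invoked only in the configurations where the behaviour-equivalent $E$ realised cycle (j), $\rb;\mo_{ext};\rfe;\po$, or the restricted (k), $\rb;\mo_{ext};[u];\po$ — that is, where $E$ satisfies all of rules (a)--(e) and fails only $TSO$ rule $a1$ or $a2$. But any such $E$ is by definition $M$-consistent, so it is itself a behaviour-equivalent $M$-consistent execution of $P$, contradicting the choice of $E'$ as a witness. Hence these configurations cannot arise under $M$, and with them vanish the only uses of $\tuwri$; every surviving configuration has $E$ genuinely failing one of rules (a)--(e), and is discharged verbatim by the $\mo_{ext}$-reversal (and $\rf$-rerouting) arguments already established, with no appeal to $\tuwri$.

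The main obstacle I anticipate is the bookkeeping for the genuinely new witness shape: an $E'$ carrying a $\rb;\mo_{ext};\rfe;\po$ cycle, or one with a longer $\hb$ tail, that no $TSO$-consistent execution could present. For these I must exhibit an $\mo_{ext}$-reversal that kills the $E'$-cycle yet leaves intact the (a)--(i) cycle of every behaviour-equivalent $E$, and the delicate point is that the surviving $\mo;\hb$ and $\rb;\mo$ cycles of $E$ may themselves use $\mo_{ext}$ edges. The resolution I expect is that $E$ and $E'$ differ in $\po$ (through the events of $st^{-}$ and the orderings of $\po^{-}$ removed by $tr$), so an edge classified as $\mo_{ext}$ in $E'$ need not be load-bearing in the same way in $E$; establishing uniformly, across all pairings of an (a)--(i) cycle in $E$ with the new witness shape in $E'$, that a reversal can be chosen simultaneously lethal for $E'$ and harmless for $E$ is the step that will require the most care.
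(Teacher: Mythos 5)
Your outline reproduces the paper's strategy faithfully: instantiate Lemma~\ref{lem:comp-gen-cond} with $B=SC$ and the weakened model, observe that the witness $E'$ can now carry a $\rb;\mo_{ext};\rfe;\po$ cycle in addition to $\rb;\mo_{ext};\po$, restrict the behaviour-equivalent $E$ to the cycle shapes (a)--(i) of Lemma~\ref{lem:sc-incons-exec} (this is exactly Corollary~\ref{cor:sra-incons-exec}), and explain why the $\tuwri$ constraint becomes unnecessary --- an $E$ whose only failure is $a1$ or $a2$ would be $M$-consistent and so cannot be a counterexample to the witness property. All of that is correct and is how the paper frames the result.

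The gap is in what you defer. You propose to ``re-run the proof of Theorem~\ref{thm:tso-sc-comp}'' and discharge the surviving configurations ``verbatim by the $\mo_{ext}$-reversal arguments already established,'' but in the paper the dependency runs the other way: the proof of Theorem~\ref{thm:tso-sc-comp} opens by citing Theorem~\ref{thm:sra-sc-comp} to dispose of cases (a)--(i) and only then handles (j) and (k). There is no independent TSO-side treatment of (a)--(i) to borrow; that treatment \emph{is} the content of the present theorem. Concretely, what you have not supplied is: (i) the reduction of cases (c) and (d) to same-memory read--write de-ordering and its $SC$-unsafety (Lemma~\ref{lem:transf:rw-wr-eq-sep:sc}), which uses no $\mo_{ext}$-reversal at all; (ii) the crucial-set machinery (Definition~\ref{def:cr-rf}, Lemmas~\ref{lem:cr-cons} and~\ref{lem:crucial-based-unsafety}) that converts ``the cycle survives in $E_t$ but can be removed from $E'_t$'' into $\neg\psf{SC}{tr}{P}$; and (iii) the case you yourself flag as delicate --- pairing an $E$-cycle of shape $\rb;\hb;\rfe;\po$ with the new witness shape in $E'$ --- which in the paper splits into four subcases requiring that the reversal of Lemma~\ref{lem:sc-wr:p1} preserve $\mo;[w_{x}]$ and $\mo;[w_{y}]$, that certain writes in the $E'$-cycle be distinct from those in the $E$-cycle, and that the flipped edge really is $\mo_{ext}$ rather than $\mo\cap\hb$. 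Identifying this as ``the step that will require the most care'' is accurate, but it is essentially the whole proof, and it is not carried out.
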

    $TSO \setminus \{ a1 \cup a2 \}$, to our surprise, turns out to be precisely SRA.
    \begin{restatable}{lemma}{sra2}
        The memory model $M$ is equivalent to strong release acquire $SRA$.
    \end{restatable}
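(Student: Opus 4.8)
The plan is to prove the equivalence by showing that the two models induce the same set of consistent executions, i.e. $\llbracket P \rrbracket_{M} = \llbracket P \rrbracket_{SRA}$ for every pre-trace $P$, via mutual inclusion. The first step is bookkeeping: recall the declarative axiomatization of $SRA$ from \cite{nickSRA} and re-express each of its constraints in our relational vocabulary $(\po, \rf, \mo, \hb, \rb)$, so that both $M$ (Rules (a)--(e) of Definition~\ref{def:tso-model}) and $SRA$ are presented as irreflexivity/acyclicity conditions over a common candidate execution $E \in \langle P \rangle$.

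For the inclusion $\llbracket P \rrbracket_{M} \subseteq \llbracket P \rrbracket_{SRA}$, I would take any $E$ consistent under $M$ together with the global strict total $\mo(E)$ witnessing Rule (a), and verify that this same $\mo$ witnesses every $SRA$ constraint. Since $SRA$ is the weaker model (it omits the triangular-race--sensitive behaviour captured by $a1$ and $a2$), each of its constraints should be a direct consequence of one of (a)--(e) — for instance $\hb$-irreflexivity and the coherence conditions $\mo;\hb$, $\rb;\hb$, $\rb;\mo$ transfer verbatim. This is the routine direction.

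The substantive direction is $\llbracket P \rrbracket_{SRA} \subseteq \llbracket P \rrbracket_{M}$, and the crux of the lemma lies in showing the two constraint sets forbid \emph{exactly} the same executions. From an $SRA$-consistent $E$ I must produce a global strict total $\mo$ meeting Rules (a)--(e) simultaneously. If $SRA$ is already presented with a global total modification order, this reduces to checking that each of (a)--(e) follows from the $SRA$ axioms, i.e. a logical equivalence of the two irreflexivity systems. If instead $SRA$ is presented with only per-location coherence, I would extend it to a global total order by a Szpilrajn-style extension of the acyclic relation combining coherence with the $\hb$-order on writes, and then verify that this extension introduces no fresh $\mo;\hb$ or $\rb;\mo$ cycle. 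Either way the main obstacle is the same conceptual check: that deleting precisely $a1$ and $a2$ from $TSO$ neither under- nor over-constrains relative to $SRA$. I expect to discharge it using the cycle inventory of Lemma~\ref{lem:sc-incons-exec} together with Proposition~\ref{prop:tso-cons-sc-incons}, confirming that the only executions separating $M$ from full $TSO$ are those exhibiting $\rb;\mo;\rfe;\po$ or $\rb;\mo;[u];\po$ cycles — exactly the triangular-race behaviour that $SRA$ admits and $TSO$ rejects. Should \cite{nickSRA} give $SRA$ operationally rather than declaratively, an additional operational-to-declarative bridge would be required, but we may instead rely directly on its declarative characterization.
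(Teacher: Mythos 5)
Your overall strategy (mutual inclusion of consistent-execution sets after putting both models in a common relational vocabulary) is reasonable, but as written it has a gap and a misdirection. The gap: the entire content of this lemma is pinning down what $SRA$'s axioms actually \emph{are}, and your proposal never does this --- it hedges across several possible presentations ("if $SRA$ is already presented with a global total modification order\ldots if instead\ldots should it be given operationally\ldots") and leaves the central verification as "I expect to discharge it." The paper's proof is a one-line citation: the alternative formulation of $SRA$ in Def.~10, Section~3.2 of \cite{nickSRA} already presents $SRA$ as a global strict total $\mo$ subject to exactly the irreflexivity conditions (a)--(e) of Def.~\ref{def:tso-model}, so $M = TSO \setminus \{a1 \cup a2\}$ coincides with $SRA$ by syntactic identity of the two constraint systems, and neither a Szpilrajn extension nor an operational bridge is needed. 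Your plan correctly anticipates this as one branch, but a complete proof must commit to and exhibit that formulation rather than enumerate contingencies.

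The misdirection: you propose to close the argument using the cycle inventory of Lemma~\ref{lem:sc-incons-exec} together with Proposition~\ref{prop:tso-cons-sc-incons} to confirm that "the only executions separating $M$ from full $TSO$" are the $\rb;\mo;\rfe;\po$ and $\rb;\mo;[u];\po$ cycles. Those results characterize inconsistency under $SC$ and the $SC$/$TSO$ gap; they tell you how $M$ relates to $TSO$, which is true by construction of $M$, but say nothing about how $M$ relates to $SRA$ unless you independently know $SRA$'s axioms --- which is the thing to be established. Relatedly, your justification of the "routine" direction ($M$-consistency implies $SRA$-consistency) appeals to $SRA$ "omitting the triangular-race--sensitive behaviour captured by $a1$ and $a2$", which is essentially the conclusion of the lemma being assumed. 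Both issues disappear once you replace them with the direct appeal to the cited declarative characterization.
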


    \begin{proof}
        The equivalence follows directly from the alternative formulation of SRA in Def~10, Section 3.2 of \cite{nickSRA}.
    \end{proof}
    Since $\tuwri$ relates to triangular races (Lemma~\ref{lem:tr-iwri}), SRA can instead be described as the model that permits optimizations introducing triangular races over \textit{TSO}.
    Further, for programs which are write-write race free, we also know that behaviors permitted by both SRA and Release Acquire (RA)\footnotemark coincide \cite{nickSRA}. 
    This implies, for WW-race free programs, Theorem~\ref{thm:sra-sc-comp} holds even when $M=RA$.
    
    \footnotetext{RA forms a significant part of concurrency models used for programming languages like C++, Rust, Java, etc.}

    \section{Discussion}

    Theorem~\ref{thm:tso-sc-comp} places 3 global constraints on possible syntactic changes involving shared memory events.
    More generally, any optimization designed relying on SC is applicable for TSO provided they adhere to the 3 syntactic constraints specified.
    To recap, these constraints are the prohibition of write introduction, write elimination and the introduction of particular \emph{triangular-races} in the optimized program. 
    These constraints are not too strict, incorporating thread-local (excluding write-elimination) as well as global optimizations like those of Fig~\ref{intro:ex1}.
    Equivalently, they are also sound, by accurately prohibiting the optimization under program contexts as those in Fig~\ref{intro:ex2}.
    Out of these 3 constraints, the exclusion of write elimination $\we$ is conservative; having such an assumption enables proving our results much easier.
    For now, we can port optimizations involving redundant write-before-write elimination effects; they are safe for TSO irrespective of program context \cite{MoiseenkoP}.  
    An immediate future work would be to include other forms of write elimination.

    \paragraph{Porting Program Analyses}
    Aggressive compiler optimizations are almost always backed by some program (data flow) analyses.
    Although several have been designed for concurrent programs, they rely on sequentially consistent semantics \cite{LeePad,ShashaSnir}, and may be unsound for weaker models like TSO \cite{Alglave2}.
    While this may require significant changes to existing analyses, our result can help ameliorate this problem to some extent.
    For instance, the \emph{delay set analysis} \cite{ShashaSnir} used to enable safe code motion under SC can also be used for TSO.
    This is because our constraint over effects do not prohibit any form of thread-local reordering effects, thereby enabling their portability across TSO.
    The same can be said for analyses that enable redundant/dead-code eliminations (provided they are of read or a form of write-before-write as stated).
    However, for analyses that are used to optimize programs beyond such syntactic changes, our result gives no guarantee.
    An example for this is the \emph{octagon range analysis}, a form of range analysis that can be used to identify redundant control dependencies in the program \cite{Mine}.
    Such analyses however, can be modified to be used for weaker models like TSO, SRA, etc. \cite{Alglave2}.
    
    \subsection{Related Work}
    \label{sec:related}
    The impact of memory consistency on program transformations can be traced back to when hardware optimizations were being introduced \cite{AdveS,OwensS,POWERSarkar}.
    However, compiler optimizations are much more complex and varied than those performed by hardware \cite{Dragon,LeePad,ShashaSnir}, and one of the earliest known impacts on them was seen designing memory model for Java \cite{PughW2}.
    Optimizations like bytecode reordering, copy propagation, thread-inlining, redundant read/write eliminations, etc. performed on Java Programs violated the specifications \cite{PughW,MansonP,SevcikJ}.
    C11 faces a similar problem: optimizations involving non-atomics, eliminating redundancies, strengthening, thread-inlining, etc. were unsafe \cite{MorissetR,VafeiadisV}.
    Not to mention that permitting optimizations also birthed the famed out-of-thin-air problem among language memory models \cite{BattyM}.

    The general problem of identifying which compiler optimizations are allowed by a memory model can be resolved conservatively by identifying which thread-local optimizations were safe under any program context \cite{MoiseenkoP,VafeiadisV,SafeOptSevcik}.
    These can be broadly categorized as adjacent reordering/elimination and introduction of redundant memory accesses, which can be inferred using trace semantic guarantees \cite{SafeOptSevcik}.  
    However, designing/verifying optimizations using these sound fragments for given language still requires reasoning with the associated weak memory model \cite{MikeDo,MetaDataSoham}.
    Some progress has been made in this direction, showing when it is adequate to rely on simple sequential reasoning to design optimizations for complex memory models \cite{MinkiC}.
    However, they are primarily for non-atomic optimizations that can be divided into these sound fragments. 
    
    Optimizations can also be performed using context-specific information obtained from a varied analysis on a multi-threaded program \cite{MartinR,ShashaSnir,LeePad,Alglave2}.
    A direction towards incorporating such optimizations can be to identify if analyses proven safe under a model be reused in another \cite{Alglave2}.
    Our work and the ones before us are exactly in this direction, albeit in a more general sense \cite{GopalAksh,gopaltransf}.
    Safety of transformation-effects over pre-traces are context specific, and proving Complete between two models allow us to gain context-specific transformations from one model to the other.
    The added advantage is that such an approach also encompasses all the sound (context-free) optimizations. 
    For instance, \cite{nickSRA} show that all thread-local optimizations sound in $RA$ are preserved in $SRA$.
    Our result on $\comp{SC}{TSO}$ also provides the same conclusion for \textit{SC} and \textit{TSO}, albeit in the opposite direction. 

    \section{Conclusion}

    In this paper, we identify syntactic constraints that enable porting SC optimizations across TSO.
    These also include optimizations leveraging concurrency, which may not be thread-local.
    We identify the correlation of our constraint with triangular races, followed by identifying syntactic constraints to port across SRA, a causally consistent memory model.
    Future work involves porting other variants of write-eliminations, as well as porting across Release Acquire, a significant subset of concurrent language models (C++20, Java) used today.

    \bibliography{ref}

    \appendix

\newpage

\section{Auxiliary Elements from \cite{gopaltransf}}
  
    We restate here definitions and observations used from our prior work which are used for the proofs relevant to this paper.
    \begin{definition}
        \label{def:cr-rf}
        Given a execution $E$ inconsistent under memory model $M$, a set $cr \subseteq r(p(E))$ is a \emph{crucial set} if the execution $E' \sim E$ such that 
        \begin{tasks}(2)
            \task $\rf(E') = \rf(E) \setminus \ \{ [a];\rf;[b] \ | \ b \in cr \}$. 
            \task $\mo(E') = \mo(E)$.
        \end{tasks} 
        is consistent under $M$.
    \end{definition} 

    \begin{definition}
        \label{def:crucial-refl-comp}
        We define $cra$ as a function that takes in a binary relation $s$ between memory events in $E$ and returns a set of reads $r$ such that without each of its associated $\rf$, the relation is irreflexive.
        \begin{align*}
            \forall r \in cra(s) \ . \ r \notin st(p(E)) \implies s \ \text{irreflexive}
        \end{align*}
    \end{definition}

    \begin{definition}
        \label{def:piecewise-cons}
        A memory model $M$ is \emph{piecewise consistent} ($\pwc{M}$), if for any execution $E$ such that 
        \begin{tasks}(2)
            \task $\neg \wf{E}$.
            \task $c_{M}(E)$.  
            \task $\rf^{-1}$ functional.
            \task $\mo$ total order.
            \task $\forall loc \ . \ max(loc) \neq \phi$.
        \end{tasks}
        we have a candidate execution $E_{t}$ such that 
        \begin{tasks}(2)
            \task $c_{M}(E_{t})$.
            \task $p(E) = p(E_{t})$.
            \task $\mo(E) = \mo(E_{t})$.
            \task $\rf(E) \subset \rf(E_{t})$.
        \end{tasks}
    \end{definition}

    \begin{definition}
        \label{def:bin-rel-presv}
        A memory model $M$ is \emph{relation-preserving} if any binary relation $\rel$ of execution $E$ required for the constraints of $M$ is preserved in every $E'$ ($\rel(E) \subseteq \rel(E')$) such that
        \begin{tasks}(3)
            \task $p(E) = p(E')$.
            \task $\rf(E) \subseteq \rf(E')$.
            \task $\mo(E) \subseteq \mo(E')$. 
        \end{tasks}
    \end{definition}

    \begin{lemma}
        \label{lem:cr-cons}
        For a given a candidate execution $E$ and memory model $M$ such that $\pwc{M}$ and $\neg c_{M}(E)$, if a crucial set $cr$ is non-empty, then there exists a candidate execution $E_{t}$ such that 
        \begin{tasks}(2)
            \task $c_{M}(E_{t})$.
            \task $p(E_{t}) = p(E)$. 
            \task $\po(p(E_{t})) = \po(p(E))$.
            \task $\mo(E_{t}) = \mo(E)$. 
            \task $\rf(E_{t}) \cap \rf(E) = \rf(E) \setminus cr$.  
        \end{tasks} 
    \end{lemma}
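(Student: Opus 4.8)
The plan is to chain the crucial-set hypothesis with piecewise consistency, using the crucial set to first \emph{break} the inconsistency and then $\pwc{M}$ to \emph{re-source} the freed reads. First I would instantiate Def~\ref{def:cr-rf}: since $cr$ is a crucial set for $E$ under $M$, the execution $E'$ obtained by deleting exactly the $\rf$ edges into the reads of $cr$, while keeping $\mo(E')=\mo(E)$, is consistent, i.e.\ $c_{M}(E')$ holds. Because $cr$ is non-empty, every read in $cr$ is left sourceless, so $E'$ violates the ``each read has a source'' clause of a candidate execution; that is, $\neg\wf{E'}$. All remaining candidate-execution structure is inherited unchanged from $E$: $\rf(E')^{-1}$ is still functional (we only deleted edges), $\mo(E')=\mo(E)$ is still a strict total order, and since $\mo$ is total each location retains a maximal write. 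Thus $E'$ meets every premise of $\pwc{M}$.

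Next I would apply $\pwc{M}$ to $E'$ (Def~\ref{def:piecewise-cons}), obtaining a genuine candidate execution $E_{t}$ with $c_{M}(E_{t})$, $p(E_{t})=p(E')$, $\mo(E_{t})=\mo(E')$, and $\rf(E')\subsetneq\rf(E_{t})$, where the added edges are precisely new sources for the previously sourceless reads of $cr$. Conditions (a)--(d) of Lemma~\ref{lem:cr-cons} then fall out immediately: (a) is the consistency just obtained; (b) and (c) follow since $E'$ is $E$ with $\rf$ edges removed, so $p(E')=p(E)$ by construction and hence $p(E_{t})=p(E)$ with the same $\po$; and (d) is the chain $\mo(E_{t})=\mo(E')=\mo(E)$.

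The only real content is condition (e), $\rf(E_{t})\cap\rf(E)=\rf(E)\setminus cr$. The inclusion $\supseteq$ is free, since $\rf(E)\setminus cr=\rf(E')\subseteq\rf(E_{t})$ and $\rf(E)\setminus cr\subseteq\rf(E)$. For $\subseteq$ I must show no read $r\in cr$ is handed its original $E$-source back in $E_{t}$; equivalently, writing $N=\rf(E_{t})\setminus\rf(E')$ for the freshly assigned crucial sources, I must show $N\cap\rf(E)=\emptyset$. Here I would use the very reason $r$ is crucial: its $\rf$ edge $(w,r)$ in $E$ completes a cyclic witness for $\neg c_{M}(E)$ whose remaining edges lie among $\po$, $\mo$, and the non-crucial $\rf$, all of which are preserved verbatim in $E_{t}$. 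Re-installing $(w,r)$ in the consistent $E_{t}$ would therefore reinstate that cycle, contradicting $c_{M}(E_{t})$; so each crucial read must be assigned a source distinct from its $E$-source, yielding $N\cap\rf(E)=\emptyset$ and closing (e).

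The hard part is making this last step robust when one witnessing cycle threads through several crucial reads at once: deleting the whole set $cr$ breaks the cycle, yet re-installing a \emph{single} original crucial source need not, on its own, recreate it. To handle this uniformly I would route the argument through the $cra$ characterization (Def~\ref{def:crucial-refl-comp}): taking $cr=cra(s)$ for the witnessing relation $s$ makes each crucial $\rf$ individually reflexivity-critical for $s$, so that with $\po$, $\mo$, and the non-crucial $\rf$ fixed, reinstating any one original crucial source already forces a reflexive occurrence of $s$ and hence inconsistency. This pins each crucial source in $E_{t}$ off its original and settles condition (e), completing the construction of the required $E_{t}$.
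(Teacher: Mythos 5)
First, a caveat: the paper itself contains no proof of Lemma~\ref{lem:cr-cons} --- it is restated without proof in the appendix of auxiliary material imported from \cite{gopaltransf} --- so your attempt can only be judged against the evident intent of the surrounding definitions. On that score, your scaffolding is exactly right: Def~\ref{def:cr-rf} yields a consistent but non-well-formed $E'$ by deleting the crucial $\rf$ edges, you correctly verify that $E'$ satisfies all five premises of $\pwc{M}$ (Def~\ref{def:piecewise-cons}), and conditions (a)--(d) follow immediately from the conclusions of piecewise consistency, since $\rf$ deletion and re-sourcing touch neither the pre-trace, $\po$, nor $\mo$. You also correctly isolate condition (e) as the only step with real content.

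Your closing argument for (e), however, does not go through, and the $cra$-based repair is logically backwards. Def~\ref{def:crucial-refl-comp} makes each read in $cra(s)$ individually \emph{necessary} for reflexivity of $s$: removing any single one of them renders $s$ irreflexive, which means every witnessing cycle passes through \emph{every} crucial read's $\rf$ edge. It does not make any single crucial edge \emph{sufficient}. Consequently, when $|cr| \geq 2$, all original cycles are already broken in $E_{t}$ because the \emph{other} crucial reads have been re-sourced; re-installing the single original edge $(w,r)$ then creates no reflexive instance of $s$, so a consistent $E_{t}$ containing $(w,r)$ is not excluded, and your contradiction never materializes. Nothing in Def~\ref{def:piecewise-cons} helps either: $\pwc{M}$ only asserts that \emph{some} consistent extension with $\rf(E') \subset \rf(E_{t})$ exists, with no control over which sources the sourceless reads receive, so it cannot be used to steer crucial reads away from their original writes. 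There is also a scoping problem: the lemma is stated for a \emph{given} crucial set $cr$, which reappears in condition (e), so you are not free to substitute $cr = cra(s)$. Closing the gap would require a genuinely additional ingredient --- e.g., a strengthened piecewise-consistency property that lets one forbid specified $\rf$ edges when re-sourcing, or a separate argument (iteration/induction over re-introduced crucial edges, with a termination measure) showing that whenever some consistent extension exists, one avoiding the deleted edges also exists --- none of which your proposal supplies, and indeed as you note (e.g., a crucial read whose location has a unique write) such an argument must use more structure than the definitions quoted here provide.
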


    \begin{lemma}
        \label{lem:crucial-based-unsafety}
        Consider a memory model $M$ such that $\pwc{M}$.
        Further, consider a pre-trace $P$ and a transformation-effect $P \mapsto_{tr} P'$ that involves no elimination ($st^{-} = \phi$), along with two candidate executions $E \in I\langle P \rangle_{M}, E' \in I\langle P' \rangle_{M}$ such that $E \sim E'$.
        Then, 
        \begin{align*}
            \exists cr' \in Cr(E', M) \wedge Cr(E, M) = \phi \implies \neg \psf{M}{tr}{P}. \\
            \exists cr' \in Cr(E', M), cr \in Cr(E, M) \ \textit{s.t.} \ cr \nsubseteq cr' \implies \neg \psf{M}{tr}{P}. 
        \end{align*}
    \end{lemma}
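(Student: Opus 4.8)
The plan is to prove both implications by contraposition through a single construction, using $cr'$ to manufacture a consistent witness execution of $P'$ whose behavior no consistent execution of $P$ can reproduce. Since $E'$ is inconsistent under $M$, the empty set is not crucial for it, so the given $cr' \in Cr(E', M)$ is nonempty; applying Lemma~\ref{lem:cr-cons} (which needs $\pwc{M}$) yields a consistent execution $E'_t \in \llbracket P' \rrbracket_{M}$ with $p(E'_t)=p(E')$, $\mo(E'_t)=\mo(E')$, and $\rf(E'_t)$ agreeing with $\rf(E')$ on every read outside $cr'$. Because $st^{-}=\phi$, every event of $P$ survives in $P'$, so $r(p(E)) \subseteq r(p(E'_t))$ and the two $\mo$'s coincide on $P$'s writes. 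First I would observe that $\sim$ forces uniqueness: any $E_t \in \llbracket P \rrbracket_{M}$ with $E_t \sim E'_t$ must carry exactly the $\rf$ and $\mo$ that $E'_t$ assigns to the common ($P$-)events, i.e. $E_t$ is forced to be the restriction $E'_t|_{P}$. Hence it suffices to show that this restriction is not a consistent execution of $P$.

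Second, I would dispose of one easy sub-case and then perform the key extraction. If $E'_t$ sources some read of $P$ from a write in $st^{+}$ (a write absent from $P$), then no $P$-execution can match $E'_t$ at all and $\neg\psf{M}{tr}{P}$ follows immediately. Otherwise $E'_t|_{P}$ is a genuine candidate execution of $P$ that differs from $E$ only by re-sourcing the reads in $S := cr' \cap r(p(E))$, with $\mo(E'_t|_{P})=\mo(E)$. The extraction step is: if $E'_t|_{P}$ were consistent, then $S$ would be a crucial set of $E$. Indeed, freeing the reads of $S$ only deletes $\rf$/$\rb$ edges relative to the consistent $E'_t|_{P}$, and dropping edges cannot create a new violating cycle, so the execution obtained from $E$ by removing the $\rf$ into $S$ (keeping $\mo(E)$) is consistent; by Def~\ref{def:cr-rf}, $S \in Cr(E,M)$ with $S \subseteq cr'$.

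The two implications then differ only in how they contradict the existence of such an $S$. For the first, $Cr(E,M)=\phi$ is directly incompatible with $S \in Cr(E,M)$ (and if $S=\phi$ then $E'_t|_{P}=E$ is already in $I\langle P\rangle_{M}$), so $E'_t|_{P}$ is inconsistent and $E'_t$ witnesses $\neg\psf{M}{tr}{P}$. The second implication is the main obstacle: here $E$ does have crucial sets, and set containment alone does not rule out a crucial $S \subseteq cr'$ coexisting with the given $cr \nsubseteq cr'$, since crucial sets are only upward closed (freeing more reads preserves consistency) and minimal ones may be incomparable. To close it I would invoke the canonicity of the crucial reads supplied by the completion function $cra$ of Def~\ref{def:crucial-refl-comp}: the reads it returns form a least crucial set contained in every crucial set, whence $cr \subseteq S \subseteq cr'$ would follow, contradicting $cr \nsubseteq cr'$. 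Failing an off-the-shelf canonicity statement, the fallback is a direct cycle-survival argument: pick $r_0 \in cr \setminus cr'$, use the minimal-cycle characterisation of Lemma~\ref{lem:sc-incons-exec} to locate a violating cycle of $E$ routed through $r_0$ that cannot be broken by re-sourcing only reads in $cr'$, and conclude that this cycle persists in $E'_t|_{P}$, rendering it inconsistent. Pinning down exactly this persistence claim is where I expect the real work to lie.
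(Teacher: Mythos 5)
First, note that the paper itself gives no proof of Lemma~\ref{lem:crucial-based-unsafety}: it is restated as an auxiliary result imported from \cite{gopaltransf}, so your argument can only be judged on its own merits.

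Your construction is the right one, and it carries the first implication. Building $E'_t \in \llbracket P' \rrbracket_{M}$ from $cr'$ via Lemma~\ref{lem:cr-cons}, observing that $st^{-}=\phi$ plus the definition of $\sim$ forces any matching $E_t$ to be exactly the restriction $E'_t|_{P}$, and then extracting ``if $E'_t|_{P}$ is consistent then $S = cr' \cap r(p(E))$ is a crucial set of $E$ contained in $cr'$'' is precisely the intended mechanism. One caveat: the extraction step relies on consistency being preserved when $\rf$ edges are deleted (``dropping edges cannot create a new violating cycle''). That is not a consequence of $\pwc{M}$; it is the separate relation-preservation property of Def~\ref{def:bin-rel-presv} (in contrapositive form). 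It holds for SC, TSO and SRA as axiomatised here, but you should flag it as an additional hypothesis on $M$ rather than fold it into $\pwc{M}$.

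The place where you stall --- the second implication --- is not a defect of your argument but of the statement's literal quantification, and you should have pushed the diagnosis one step further. Read literally (``there exist $cr' \in Cr(E',M)$ and $cr \in Cr(E,M)$ with $cr \nsubseteq cr'$''), the implication is false: crucial sets are upward closed under the monotonicity above, so for the identity effect ($P'=P$, $E'=E$) one may take $cr = r(p(E))$ and $cr'$ any proper crucial set, satisfy the hypothesis, and wrongly conclude the identity is unsafe. The reading that matches every use of the lemma in Appendices~\ref{sec:sra-proofs} and~\ref{sec:tso-proofs} (``every crucial set of $E$/$E_t$ must contain $r_x$ or $r_y$, but $E'_t$ has a crucial set containing neither'') is the universal one: there exists $cr' \in Cr(E',M)$ such that \emph{no} $cr \in Cr(E,M)$ satisfies $cr \subseteq cr'$. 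Under that reading your own extraction step finishes the proof in one line, since the crucial $S \subseteq cr'$ it produces is exactly the forbidden object. Your two rescue attempts for the literal reading do not work: there is no least crucial set ($cra$ returns, per cycle, a set of reads \emph{any one} of which suffices, so minimal crucial sets are genuinely incomparable), and the cycle-survival argument through an arbitrary $r_0 \in cr \setminus cr'$ fails because membership of $r_0$ in \emph{some} crucial set does not route every violating cycle through $r_0$. The correct resolution is to repair the quantifier, not to strengthen the argument.
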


    \begin{restatable}{lemma}{lemsccra}
        \label{lem:sc-cra}
        For a given execution $E$, we have 
        \begin{align*}
            &cra(\mo) = \phi \\
            &cra(\hb) = \{r \ | \ ([e];\rfe;[r];\hb;[e] \neq \phi) \vee ([e];\rfi;[r];\po;[e] \neq \phi) \} \\ 
            &cra(\mo;\hb) = \{r \ | \ [e];\mo;\hb^{?};\rfe;[r];\hb;[e] \neq \phi \} \\
            &cra(\rb;\hb) = \{r \ | \  ([e];\rb;\hb^{?};\rfe;[r];\hb;[e] \neq \phi) \vee ([r];\rb;\hb;[r] \neq \phi) \} \\
            &cra(\rb;\mo;\hb) = \{r \ | \ ([e];\rb;\mo;\hb^{?};\rfe;[r];\hb;[e] \neq \phi) \vee ([r];\rb;\mo;\hb;[r] \neq \phi) \}
        \end{align*}  
    \end{restatable}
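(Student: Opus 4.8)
The plan is to unfold Definition~\ref{def:crucial-refl-comp}: for each relation $s$, the set $cra(s)$ must consist of exactly those reads $r$ whose unique incoming $\rf$ edge is \emph{load-bearing} for a reflexive cycle of $s$, i.e.\ deleting the edge $[w];\rf;[r]$ --- and with it any $[r];\rb;\cdot$ edge, since $\rb = \rf^{-1};\mo_{|loc}$ --- destroys every cycle $[e];s;[e]$ that routes through it. The task thus reduces to characterizing the reflexive cycles of each $s$ and locating, inside such a cycle, an $\rf$-step whose removal is essential. The single normalization driving every case is that a \emph{forward} internal read, one with both $[w];\rfi;[r]$ and $[w];\po;[r]$, adds nothing to $\hb$ beyond $\po$ (as $[w];\po;[r]$ already yields $[w];\hb;[r]$), so deleting it never breaks an $\hb$ path; only a \emph{backward} $\rfi$ (where $[r];\po;[w]$, which a possibly-inconsistent candidate execution may exhibit) or an external $\rfe$ is ever essential.

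First I would dispatch the two base cases. For $cra(\mo) = \phi$: the relation $\mo \subseteq w(P) \times w(P)$ mentions no read and is independent of $\rf$, so no read-removal can alter its reflexivity and the empty set trivially witnesses the definition. For $cra(\hb)$: every reflexive $\hb$ cycle is a cycle in $\po \cup \rf$, and since $\po$ is a per-thread strict order it cannot cycle on its own, so each such cycle uses at least one $\rf$ step; applying the normalization, the essential step is either an $\rfe$ into some $r$ with an $\hb$-return, giving $[e];\rfe;[r];\hb;[e]$, or a backward $\rfi$ into $r$ closing a two-cycle with $\po$, giving $[e];\rfi;[r];\po;[e]$. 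The converse (any read matching a pattern is crucial) is immediate, yielding exactly the stated set.

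Next I would treat $\mo;\hb$, $\rb;\hb$, and $\rb;\mo;\hb$ uniformly by writing a reflexive cycle as its $\mo$/$\rb$ prefix followed by an $\hb$ suffix. A removable $\rf$ edge can sit in one of two places. If it lies in the $\hb$ suffix, the normalization isolates the last external read along it, producing the shape $\cdots;\hb^{?};\rfe;[r];\hb;[e]$ common to all three formulas. If the cycle instead begins with $\rb$, then since $\rb$ starts with $\rf^{-1}$ out of a read $r$, deleting $r$'s $\rf$ erases that $\rb$ edge; this is the second disjunct $[r];\rb;\hb;[r]$ for $\rb;\hb$ and $[r];\rb;\mo;\hb;[r]$ for $\rb;\mo;\hb$. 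Since $\mo;\hb$ has no $\rf^{-1}$-prefix, only the suffix case survives and its formula carries a single clause. For each relation I would then verify both directions: membership in the displayed set makes $r$ crucial (its removal kills the exhibited cycle), and crucialness forces $r$ into one of these positions.

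The main obstacle is completeness rather than soundness: proving that the listed pattern(s) exhaust all ways a read can be crucial. The delicate bookkeeping concerns $\rfi$ versus $\po$ together with the double role of $\rf$ inside $\rb$. I must argue that a forward $\rfi$ anywhere in the suffix is always replaceable by $\po$ (hence never the unique obstruction), that a backward $\rfi$ appearing in a composite suffix collapses to a plain $\hb$ self-loop already accounted for by the $cra(\hb)$ analysis (so it need not resurface as a separate clause), and that a single $\rf$-deletion cannot be independently required at two positions of the same minimal cycle. The heaviest instance is $\rb;\mo;\hb$, whose cycle can expose an essential $\rf$ both at the $\rf^{-1}$-start of $\rb$ and at an $\rfe$ in the $\hb$ suffix; I would check these are exactly the two recorded disjuncts and that the intervening $\mo$ and $\hb^{?}$ segments contribute no further essential reads.
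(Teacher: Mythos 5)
The paper never proves Lemma~\ref{lem:sc-cra}: it sits in the appendix of auxiliary elements restated from \cite{gopaltransf} and its \verb|\lemsccra*| restatement is never invoked with a proof, so there is no in-paper argument to compare yours against. On its own terms your plan is essentially right and identifies the two observations that actually generate the stated sets: a forward $\rfi$ step inside an $\hb$ path is subsumed by $\po$ and hence never the essential obstruction (so only a backward $\rfi$ or an $\rfe$ can matter, which is why $cra(\hb)$ has exactly those two disjuncts), and a read can additionally be essential at the $\rf^{-1}$ head of an $\rb$ prefix, which is precisely what adds the second disjunct to the $cra(\rb;\hb)$ and $cra(\rb;\mo;\hb)$ lines and explains its absence from $cra(\mo;\hb)$. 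Two caveats on how you phrase the soundness direction. First, Definition~\ref{def:crucial-refl-comp} literally asks that removing each individual $r\in cra(s)$ render $s$ irreflexive, whereas the displayed sets are existentially quantified over cycles; they can only be exact under the reading ``reads whose $\rf$ edge participates in some reflexive instance of $s$'' (equivalently: reads outside the set are never needed in a crucial set, which is how the lemma is consumed in Lemma~\ref{lem:sc:p1} and the completeness proofs). Your statement that membership makes $r$ crucial because ``its removal kills the exhibited cycle'' should be weakened accordingly --- removal kills the cycles routed through $r$'s $\rf$, not necessarily all of them. Second, your dismissal of a backward $\rfi$ in the composite suffixes on the grounds that it already yields an $\hb$ self-loop is fine in context (the lemma is always applied together with the minimal-cycle enumeration of Lemma~\ref{lem:sc-incons-exec}), but as a standalone set identity the $cra(\mo;\hb)$, $cra(\rb;\hb)$ and $cra(\rb;\mo;\hb)$ lines are then only exact for executions with $\hb$ irreflexive; you should make that side condition explicit rather than leave it implicit in the normalization. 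Neither point undermines the strategy; they are consequences of the informality of the imported definition.
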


    \begin{theorem}
        \label{thm:sc-piecewise-con}
        \textit{SC} is piecewise consistent - $pwc(SC)$.
    \end{theorem}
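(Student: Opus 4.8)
The plan is to prove $\pwc{SC}$ directly from Definition~\ref{def:piecewise-cons} by \emph{completing} the partial execution: given a consistent but not-well-formed $E$ (some reads lacking a source write), I construct $E_t$ by assigning every unassigned read a source, leaving $\po$, $\mo$ and all existing $\rf$ edges untouched, and then argue the result remains \textit{SC}-consistent. The whole difficulty is the choice of sources, so the strategy is to first fix a global linearization witnessing consistency and then read each unsourced read from the latest compatible write in that linearization.

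First I would pass from the six irreflexivity constraints of Definition~\ref{def:sc-model} to the single statement that $\po \cup \rf \cup \mo \cup \rb$ is acyclic; constraints (b)--(f) are exactly the irreducible cyclic patterns of this union, so the equivalence is the characterization of \cite{LahavV}. Since $E$ is consistent, this union is acyclic on $E$, and because it already contains $\po$ together with the total order $\mo$, I can linearize it to a strict total order $T$ over all memory events of $p(E)$; the unassigned reads are themselves events of $p(E)$ and are simply placed at positions consistent with $\po$, $\rf$, $\mo$, $\rb$.

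Next, for each unassigned read $r$ on location $loc$, I would let $r$ read from $w^{\ast}$, the $T$-greatest write to $loc$ that precedes $r$ in $T$. This choice is the crux, and its correctness rests on two observations. First, each new edge $[w^{\ast}];\rf;[r]$ respects $T$ by construction. Second, the $\rb$ edges it induces, namely $[r];\rb;[w']$ for every $w'$ with $[w^{\ast}];\mo_{|loc};[w']$, also respect $T$: since $\mo \subseteq T$ such $w'$ is $T$-after $w^{\ast}$, and by maximality of $w^{\ast}$ among $loc$-writes below $r$, any $loc$-write $T$-after $w^{\ast}$ is in fact $T$-after $r$, so $[r];T;[w']$. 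Hence every edge of $\po \cup \rf \cup \mo \cup \rb$ in $E_t$, old or new, lies inside the strict total order $T$, the union is acyclic, and reading the equivalence backwards gives $c_{SC}(E_t)$. The remaining output conditions of Definition~\ref{def:piecewise-cons} are then immediate: $p(E)=p(E_t)$ and $\mo(E)=\mo(E_t)$ since neither is modified, and $\rf(E)\subset\rf(E_t)$ because $\neg\wf{E}$ forces at least one unassigned read while no existing $\rf$ edge is ever removed.

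The main obstacle I anticipate is guaranteeing that $w^{\ast}$ exists, i.e.\ that some $loc$-write precedes each unassigned read in $T$. This is where the hypothesis $\forall loc\,.\,max(loc)\neq\phi$ together with the model's initialization writes is essential: the initialization write of $loc$ is $\po$-minimal, hence $T$-below every read on $loc$, so the candidate set of prior writes is never empty. I would make this assumption explicit, since without a prior write a read could only be sourced from a $T$-later write, which would immediately reinstate a $\po\cup\rf$ cycle; checking that the maximality argument of the previous step interacts correctly with this initialization layer is the one place the construction must be verified with care.
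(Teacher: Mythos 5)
The paper never proves Theorem~\ref{thm:sc-piecewise-con} in this document: it appears only in the appendix of auxiliary elements, which explicitly restates results imported from \cite{gopaltransf}, so there is no in-paper proof to compare yours against. Judged on its own terms, your construction is the standard completion argument and is essentially sound: recast the irreflexivity constraints of Def~\ref{def:sc-model} as acyclicity of $\po \cup \rf \cup \mo \cup \rb$, fix a linearization $T$ of that union, and source every unassigned read from the $T$-greatest same-location write below it. Your maximality argument correctly shows that every induced $\rb$ edge points forward in $T$, so the completed union stays acyclic, and the output conditions of Def~\ref{def:piecewise-cons} (same pre-trace, same $\mo$, strictly larger $\rf$) follow as you say.

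Two points need tightening before this would pass as a full proof. First, the language has read-modify-writes ($u(P) = r(P) \cap w(P)$), and for an rmw $u$ sourced from its immediate $\mo_{|loc}$-predecessor the paper's definition $\rb = \rf^{-1};\mo_{|loc}$ makes $[u];\rb;[u]$ reflexive; the union you linearize must therefore exclude this identity component (or rmws must be split into read and write parts), otherwise ``acyclic($\po \cup \rf \cup \mo \cup \rb$)'' is falsified by any execution containing an rmw and the equivalence with Def~\ref{def:sc-model} breaks. Your $T$-maximality argument already yields the genuine atomicity condition ($\rb;\mo$ irreflexive at $u$), so the repair is notational, but it must be made. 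Second, the existence of $w^{\ast}$ rests on initialization writes being $\po$-minimal (hence placeable at the bottom of $T$); the paper only hints at this through its treatment of thread-$0$ writes, and you are right to single it out as the one assumption that has to be stated explicitly, since without a write below $r$ in $T$ the only available sources would reinstate a cycle.
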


    \begin{restatable}{lemma}{lemsccr}
        \label{lem:sc-crucial-exist}
        Given an inconsistent execution $E$ under $SC$ such that $\mo$ is a strict total order such that $\mo;\po$ is irreflexive, a crucial set exists.
    \end{restatable}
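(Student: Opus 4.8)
The plan is to exploit that $\rf$ enters the \textit{SC} rules (Def~\ref{def:sc-model}) only through $\hb=(\po\cup\rf)^{+}$ and $\rb=\rf^{-1};\mo_{|loc}$, both of which are monotone in $\rf$. Consequently, deleting reads-from edges can only shrink $\hb$ and $\rb$, and hence can only help the irreflexivity constraints (b)--(f) hold, while leaving rule (a) ($\mo$ strict total) untouched. The crucial-set condition in Def~\ref{def:cr-rf} therefore reduces to finding a set of reads whose removal drives every $\rf$-dependent cycle out of the relational structure. The cleanest witness is the maximal one, $cr=r(p(E))$.

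With $cr=r(p(E))$ the resulting $E'$ has $\rf(E')=\phi$, so $\hb(E')=\po^{+}$ and $\rb(E')=\rf^{-1};\mo_{|loc}=\phi$. I would then check the six rules on $E'$ directly: (a) holds by hypothesis; (b) $\hb=\po^{+}$ is irreflexive because program order on a pre-trace is a per-thread strict total order, hence acyclic; (c) $\mo;\hb=\mo;\po$ is irreflexive by the hypothesis (using that $\po$ is transitive, so $\po^{+}=\po$); and (d), (e), (f) hold vacuously since each is a composition beginning with $\rb=\phi$. Thus $E'$ satisfies all of Def~\ref{def:sc-model} and $cr=r(p(E))$ is a crucial set, establishing existence.

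If a smaller witness is wanted, I would instead set $cr=cra(\hb)\cup cra(\mo;\hb)\cup cra(\rb;\hb)\cup cra(\rb;\mo;\hb)$ together with the (update) reads responsible for any $\rb;\mo$ cycle, reading the four explicit sets off Lemma~\ref{lem:sc-cra}. Monotonicity of $\hb$ and $\rb$ in $\rf$ guarantees that removing a superset of each $cra(s_{i})$ keeps every $s_{i}$ irreflexive simultaneously, so this union is again crucial. The same monotonicity explains why the hypothesis is not merely convenient but necessary: a $\mo;\po$ cycle uses no $\rf$ edge, so no amount of read removal can break it, and without $\mo;\po$ irreflexivity no crucial set could exist.

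The main obstacle is rule (c) on the $\rf$-free core, where the relation collapses to $\mo;\po^{+}$ rather than the hypothesised $\mo;\po$; the argument hinges entirely on $\po$ being transitive (program order within a pre-trace), which lets me identify $\po^{+}$ with $\po$ and discharge (c). A secondary point to pin down is the intended reading of ``$E'$ consistent under $M$'' in Def~\ref{def:cr-rf}: it must mean that $E'$ satisfies the relational irreflexivity constraints with the reads in $cr$ left unsourced, since a nonempty $cr$ can never yield a full candidate execution in which every read has a source.
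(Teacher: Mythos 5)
Your argument is correct. Note that the paper itself contains no proof of this lemma---it is restated from \cite{gopaltransf} in the auxiliary appendix---so there is no in-paper proof to compare against line by line; judged on its own, your maximal-witness route is sound and arguably more elementary than what the surrounding machinery suggests the imported proof does. Taking $cr=r(p(E))$ empties $\rf$, hence $\rb=\phi$ and $\hb=\po^{+}=\po$ (your hinge on transitivity of $\po$ is licensed by the paper, which explicitly treats $\po$ as a per-thread strict total order, e.g.\ in the proof of Lemma~\ref{lem:transf:rw-wr-eq-sep:sc}); the two hypotheses then discharge rules (a)--(c) of Def~\ref{def:sc-model} and rules (d)--(f) hold vacuously. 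The presence of Lemma~\ref{lem:sc-cra} indicates that the original proof instead assembles a crucial set from the per-rule $cra(\cdot)$ sets; that refinement matters downstream (e.g.\ Lemma~\ref{lem:crucial-based-unsafety} compares crucial sets of $E$ and $E'$ and needs witnesses avoiding particular reads), but for the bare existence claim made here your coarser witness suffices, and your fallback union of the $cra$ sets recovers the finer version if needed. Your closing caveat about Def~\ref{def:cr-rf} is well taken and not a defect of your proof: the $E'\sim E$ clause is in tension with stripping $\rf$ edges from reads common to both executions for \emph{any} non-empty $cr$, so it must be read as ``the relational structure obtained from $E$ by the stated deletions,'' which is exactly how you use it.
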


\newpage

\section{Proving Generic Properties on Candidate Executions of Pre-traces}

    \label{sec:general-proofs}
    
    \lemcompgen*
    \begin{proof}
        
        From given, we have 
        \begin{flalign*}
            &E' \in \llbracket P' \rrbracket_{M} \wedge \nexists E \in \llbracket P \rrbracket_{M} \ . \ E \sim E'. \\ 
            &\to \llbracket P' \rrbracket_{M} \sqsubseteq \langle P \rangle. \tag*{$\langle P' \rangle \sqsubseteq \langle P \rangle$} \\ 
            &\to \forall E \sim E' \ . \ E \in I\langle P \rangle_{M}. \tag*{$\langle P \rangle = \llbracket P \rrbracket_{-} \cup I\langle P \rangle_{-}$} \\ 
            &\to E \in I\langle P \rangle_{B}. \tag*{($\weak{M}{B}$)} \\ 
            &\to E' \in I\langle P' \rangle_{B}. \tag*{($\llbracket P' \rrbracket_{B} \sqsubseteq \llbracket P \rrbracket_{B}$)} &
        \end{flalign*}

        Hence proved.
        
    \end{proof}
    
    \begin{lemma}
        \label{lem:sc-rr:welim-p0}
        Consider a candidate-execution $E$ of $P$ such that $\mo(E) = \phi$.
        Then, there exists a well-formed execution $E' \in \langle P \rangle$ such that
        \begin{align*}
            &\mo \ \text{strict total order} \ \wedge \ \mo;\po \ \text{irreflexive}.
        \end{align*}     
    \end{lemma}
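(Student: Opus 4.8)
The plan is to retain the reads-from assignment of $E$ and synthesize the missing memory order as a linear extension of program order restricted to writes. Since $E$ is a candidate execution, its $\rf(E)$ already satisfies the two read conditions of a candidate execution (every $r \in r(p(E))$ has exactly one source write), so I set $\rf(E') = \rf(E)$ and reduce the task to constructing a suitable $\mo(E')$ on $w(P)$; the hypothesis $\mo(E) = \phi$ simply records that we are free to choose it afresh.

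First I would note that $\po(P)$ is a strict partial order on the (finite) event set of the pre-trace, hence its restriction to the writes, $\po \cap (w(P) \times w(P))$, is again a strict partial order. By the finite order-extension principle (topological sort), this restriction extends to a strict total order $\tot$ on $w(P)$ with $\po \cap (w(P) \times w(P)) \subseteq \tot$. I then define $\mo(E') := \tot$. By construction $\mo(E')$ is a strict total order, and for every location its restriction $\mo_{|loc}(E')$ is a sub-relation of a strict total order and therefore strict; together with the inherited valid $\rf$, this establishes $E' \in \langle P \rangle$ and its well-formedness. The degenerate cases $|w(P)| \le 1$ give $\mo(E') = \phi$, which vacuously satisfies the conclusion.

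It remains to verify $\mo;\po$ irreflexivity. Any reflexive cycle has the form $[a];\mo;[b];\po;[a]$, which forces $a,b$ to be writes with $(a,b) \in \mo$ and $(b,a) \in \po$; but $(b,a) \in \po$ between writes means $(b,a) \in \po \cap (w(P) \times w(P)) \subseteq \mo$, contradicting $(a,b) \in \mo$ and the antisymmetry of the strict total order $\mo$. Hence no such cycle exists. The only real content of the argument is the order-extension step, and the point to get right is that the extension is taken over all writes across threads — making $\mo$ globally total — while still preserving the per-thread program order among writes, which is exactly what delivers $\mo;\po$ irreflexivity. I do not expect a substantive obstacle beyond invoking the extension principle carefully; strictness per location, validity of the inherited $\rf$, and well-formedness all follow immediately from the construction.
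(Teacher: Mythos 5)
Your proposal is correct and is essentially the paper's own argument: both construct $\mo$ as a linear extension of $\po$ restricted to $w(P)$, which immediately yields a strict total order whose composition with $\po$ is irreflexive. The only difference is presentational --- the paper proves the extension by hand (first seeding $\mo$ edges consistent with per-thread $\po$ and from initialization writes, then inserting the remaining cross-thread pairs one at a time with an explicit contradiction argument), whereas you invoke the finite order-extension principle as a black box.
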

    
    \begin{proof}
        
        We first note that since $\po$ is a strict total-order per thread, we can first add $\mo$ relations consistent to $\po$.
        This keeps $\mo(E')$ as a strict total-order per-thread, with $\mo;\po$ irreflexive.
        
        To finish, we need to add the remaining $\mo$ relations, i.e., those between writes of different threads. 
        We first add $\mo$ relations from initialization writes to writes of each thread as follows:
        \begin{align*}
            tid(w1) \neq tid(w2) \wedge tid(w1) = 0 \implies [w1];\mo;[w2].
        \end{align*} 
        This by construction keeps the resultant $\mo$ as strict a partial order.
        
        We add the rest by induction on number of $\mo$ relations $n$ added, showing that this incremental addition preserves the strict partial order of resultant $\mo$. 
        \begin{itemize}
            \item Base case: $n=0$ - By definition, the order that exists is strict and partial. 
            \item Inductive case $n=k$: If current $\mo$ is strict and partial after adding $k$ edges, then $\mo$ is strict and partial after adding $k+1$ edges. 
            \begin{itemize}
                \item Consider the next pair of writes $w1, w2$ between which $\mo$ relation is to be added, such that $tid(w1) \neq tid(w2)$.
                \item If either $[w1];\mo;[w2]$ or $[w2];\mo;[w1]$ does not make $\mo$ non-strict, we are done. 
                \item If $[w1];\mo;[w2]$ makes $\mo$ non-strict, add $[w2];\mo;[w1]$ instead, and vice versa.
                \item If both cause $\mo$ to be non-strict then, 
                \begin{itemize}
                    \item There exists some $w3$ such that $[w1];\mo;[w2];\mo;[w3];\mo$ forms a cycle.
                    \item There exists some $w4$ such that $[w2];\mo;[w1];\mo;[w4];\mo$ forms a cycle.
                    \item This implies $[w1];\mo;[w4];\mo;[w2];\mo;[w3];\mo$ forms a cycle.
                    \item Thus, the $\mo$ without adding $(k+1)^{th}$ relation is non-strict.
                    \item This contradicts our inductive assumption for $n=k$. 
                \end{itemize}
            \end{itemize}
        \end{itemize}
    
        Since candidate-executions consist of finite events, we eventually obtain $E'$ such that $\mo(E')$ is a strict total-order, not conflicting $\po$.
        Hence proved.
    
    \end{proof}

    \begin{corollary}
        \label{cor:sc-rr:welim-p1}
        If $\mo(E)$ is instead a strict partial order not conflicting $\po$, then there exists a well-formed execution $E' \in \langle P \rangle$ such that
        \begin{align*}
            &\mo \ \text{strict total order} \ \wedge \ \mo;\po \ \text{irreflexive}.
        \end{align*}  
    \end{corollary}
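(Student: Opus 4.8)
The plan is to treat this corollary as a strict generalization of Lemma~\ref{lem:sc-rr:welim-p0}, whose conclusion is identical but whose hypothesis is the special case $\mo(E)=\phi$. Since the inductive construction in that proof only ever relies on the current $\mo$ being a strict partial order that agrees with $\po$ on same-thread writes, the natural approach is to re-run that construction, but initialise it with the given relation $\mo(E)$ in place of the empty relation. Concretely, I would show that a strict partial order $\mo(E)$ not conflicting $\po$ already satisfies the loop invariant maintained throughout the lemma, so that the remaining work is only to complete it to a total order.

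First I would complete each thread's writes into a $\po$-consistent chain: for every pair $w_1,w_2\in w(P)$ with $tid(w_1)=tid(w_2)$ not yet ordered by $\mo(E)$, add the $\mo$ edge matching their $\po$ order (which exists since $\po$ is a per-thread total order). I would argue this preserves strictness, since any cycle so created would be a cycle in $(\mo(E)\cup\po)$, i.e.\ a witness that $\mo(E)$ conflicts $\po$, contradicting the hypothesis. Next, exactly as in the lemma, I would add $\mo$ edges out of the initialization writes to the writes of each thread; these originate at a $\po$-minimal source and so cannot close a cycle. Finally I would invoke the same induction on the number of remaining inter-thread pairs: to orient an unordered pair $w_1,w_2$ with $tid(w_1)\neq tid(w_2)$, try $[w_1];\mo;[w_2]$, fall back to the reverse if that breaks strictness, and if both orientations break strictness splice the two resulting cycles (through the witnessing $w_3,w_4$) into a single cycle that already existed before this step, contradicting the inductive hypothesis. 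Finiteness of the event set then yields a strict total $\mo$ with $\mo;\po$ irreflexive, giving the required well-formed $E'\in\langle P\rangle$.

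I expect the only genuinely new obstacle relative to the lemma to be validating the base case, namely that the given $\mo(E)$ together with the freshly added same-thread $\po$-chains is still a strict partial order consistent with $\po$; in the lemma this was trivial because $\mo(E)$ was empty. This forces me to pin down the precise meaning of ``not conflicting $\po$'' --- I would read it as $(\mo(E)\cup\po)$ being acyclic (equivalently, $\mo(E)$ orients no same-thread write pair against $\po$ and closes no mixed cycle) --- after which the completion and induction steps carry over unchanged from Lemma~\ref{lem:sc-rr:welim-p0}.
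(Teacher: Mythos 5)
Your proposal is correct and matches the paper's intent: the paper's own proof of this corollary is literally the one-line remark that it is the inductive case of the proof of Lemma~\ref{lem:sc-rr:welim-p0}, i.e.\ re-running that construction starting from the given strict partial order rather than from the empty relation. Your additional care in verifying that the initial $\mo(E)$ (plus the per-thread $\po$-chains) satisfies the loop invariant is a faithful expansion of the same argument rather than a different route.
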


    \begin{proof}
        Inductive case of the proof of Lemma~\ref{lem:sc-rr:welim-p0}. 
    \end{proof}

    \begin{restatable}{lemma}{lemflipmoext}
        \label{lem:sc-wr:p1}
        Consider $r_{wr}$ to be any cycle of the form $[r_{x}];\rb;\mo^{?}_{ext};\hb^{?}$ in a candidate execution $E'$ such that 
        \begin{tasks}[style=itemize](2)
            \task $\mo$ strict total order.
            \task $\mo^{?};\hb$ irreflexive.
            \task* $[w_{x}];\hb;[w'_{x}] \wedge [w_{x}];\rf;[r_{x}] \implies [r_{x}];\rb;[w'_{x}];\hb$ irreflexive.
            \task $[r_{x}];\rb;(\mo \setminus \mo_{ext})$ irreflexive.
        \end{tasks}
        Then, there exists another execution $E'_{t}$ satisfying the above, such that   
        \begin{tasks}(3)
            \task $p(E') = p(E'_{t})$. 
            \task $\rf(E') = \rf(E'_{t})$.  
            \task $r_{wr}$ irreflexive.
            \task* $[w_{x}];\rf;[r_{x}] \in \rf(E') \implies \mo;[w_{x}](E') = \mo;[w_{x}](E'_{t})$.
        \end{tasks}
    \end{restatable}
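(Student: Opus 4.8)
The plan is to break the cycle $r_{wr}$ by \emph{reversing} one external memory-order edge and re-extending $\mo$ to a total order via Corollary~\ref{cor:sc-rr:welim-p1}, arranging the reversal so that it touches only writes lying $\mo$-after $w_x$; this is what keeps the predecessors of $w_x$ intact and hence yields the final conclusion. First I would fix the shape of the cycle. Since $[r_x];\rb;\mo^?_{ext};\hb^?$ closes on the read $r_x$, and neither $\rb$ nor $\mo_{ext}$ can target a read, the trailing $\hb^?$ must be a genuine $\hb$ step. Writing $w_x$ for the write with $[w_x];\rf;[r_x]$, the $\rb = \rf^{-1};\mo_{|loc}$ step gives $[w_x];\mo_{|loc};[w']$ with $w'$ a write to $x$, an optional $[w'];\mo_{ext};[w'']$ follows, and finally $[\hat w];\hb;[r_x]$ for $\hat w \in \{w',w''\}$; every write on this chain is therefore $\mo$-after $w_x$.

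Next I would establish the structural claim that the chain carries at least one $\mo$ edge between two $\hb$-unordered writes. If every edge were $\hb$-ordered, transitivity would give $[w_x];\hb;[w']$ together with $[w'];\hb;[r_x]$, so the hypothesis $[w_x];\hb;[w'_x] \wedge [w_x];\rf;[r_x] \implies [r_x];\rb;[w'_x];\hb$ irreflexive, instantiated at $w'_x := w'$, would contradict the cycle. When the explicit $\mo_{ext}$ step is present it is, by definition of $\mo_{ext}$, exactly such an $\hb$-unordered edge, and both its endpoints $w',w''$ are $\mo$-after $w_x$; this is the edge I reverse.

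I would then delete the chosen $\mo_{ext}$ edge (and, if convenient, all $\mo_{ext}$ edges), leaving the strict partial order $\mo \cap \hb$, which is consistent with $\po$, and re-extend it to a strict total order by Corollary~\ref{cor:sc-rr:welim-p1}, this time orienting the reversed pair oppositely. Because only $\mo$ changes, $p(E') = p(E'_t)$ and $\rf(E') = \rf(E'_t)$ are immediate; because both reversed endpoints are $\mo$-after $w_x$, the predecessor set $\mo;[w_x]$ is unchanged; and the reoriented $\mo_{ext}$ step is no longer traversable, so $r_{wr}$ becomes irreflexive. It remains to recheck the four hypotheses for $E'_t$: $\mo$ is a strict total order by construction, $\mo^?;\hb$ stays irreflexive since the new $\mo$ is still a linear extension of $\hb$ restricted to writes, and the two coherence-style hypotheses constrain only $\hb$-ordered and internal edges, which reorienting $\hb$-unordered writes leaves untouched.

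The main obstacle is twofold. First, I must show that a single coherent reversal makes the \emph{entire} relation $r_{wr}$ irreflexive, not merely deletes one offending path; I expect to do this by choosing the linear extension so that every successor of $w_x$ that can $\mo_{ext}$-escape to a write $\hb$-preceding $r_x$ is placed late enough, iterating to a fixpoint with termination guaranteed by finiteness of the execution. Second, the degenerate case in which the explicit $\mo_{ext}$ step is absent is delicate: the cycle is then a bare $[r_x];\rb;[w'];\hb;[r_x]$ whose only reversible edge, $[w_x];\mo;[w']$, crosses $w_x$, so reversing it would alter $\mo;[w_x]$; here I would argue either that this configuration does not arise in the applications of the lemma, where $r_{wr}$ always appears with $\mo_{ext}$ present (cf.\ Proposition~\ref{prop:tso-cons-sc-incons}), or that the cycle can be re-routed through a genuinely external edge.
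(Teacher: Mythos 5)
Your overall strategy matches the paper's: keep the pre-trace and $\rf$ fixed, locate the external memory-order edge on the cycle, reverse it, and restore $\mo$ to a strict total order while arguing that the predecessors of $w_x$ are untouched. Your observation that hypothesis (c) forces some edge on the chain to be $\hb$-unordered (hence in $\mo_{ext}$) is also the right use of that hypothesis. However, the two points you flag as "obstacles" are exactly where the paper's proof does its real work, and your sketched resolutions do not close them. First, deleting the $\mo_{ext}$ edges and re-extending via Corollary~\ref{cor:sc-rr:welim-p1} gives you no control over how the deleted pairs are re-oriented: the corollary only promises \emph{some} total extension compatible with $\po$, so it may re-create the $r_{wr}$ cycle, and if any deleted $\mo_{ext}$ edge is incident to $w_x$ the extension may reorder it and break $\mo;[w_{x}](E') = \mo;[w_{x}](E'_{t})$. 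Likewise, "iterate to a fixpoint, termination by finiteness" does not establish that each repair step has a legal edge to flip, nor that the iteration avoids $w_x$. The paper instead performs a surgical flip of $[w'_{x}];\mo_{ext};[w_{y}]$ and repairs each resulting antisymmetry violation $[w_{y}];\mo;[w'_{x}];\mo;[w];\mo$ individually, showing via the hypotheses that at most one of the pairs $(w'_{x},w)$, $(w,w_{y})$ can be $\hb$-related, that $[w];\hb;[w_{y}]$ would resurrect an $[r_{x}];\rb;[w'_{x}];\mo_{ext};[w];\hb$ cycle (contradicting what the construction has already achieved), and hence that $[w];\mo_{ext};[w_{y}]$ is the edge to flip --- an external edge not incident to $w_{x}$. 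That argument is the missing content of your step.

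Second, the degenerate case with no explicit $\mo_{ext}$ step cannot be assumed away: the lemma is stated for $\mo^{?}_{ext}$, and the paper handles it by flipping the $\mo_{|loc}$ edge $[w_{x}];\mo;[w'_{x}]$ supplied by $\rb$ itself, which hypotheses (b) and (c) force to be external (neither $[w_{x}];\hb;[w'_{x}]$ nor $[w'_{x}];\hb;[w_{x}]$ is possible on the cycle). Appealing to Proposition~\ref{prop:tso-cons-sc-incons} to argue the case "does not arise in applications" is not a proof of the lemma as stated, and the lemma is in fact invoked elsewhere (e.g.\ Subcases of Case (g) in Appendix~\ref{sec:sra-proofs}) on cycles of varying shapes. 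You should either carry out the flip of the $\rb$-edge in this case and reconcile it with conclusion (d) via hypothesis (d) of the lemma, or strengthen your structural claim to show the chain always contains a flippable external edge disjoint from $\{[w];\mo;[w_{x}]\}$.
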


    \begin{proof}

        We prove this by constructing $E'_{t}$ modifying $\mo(E')$. 
        
        From given, we have $r_{wr}$ cycle in $E'$.
        Let $w_{x}, w_{y}, w'_{x}$ be the writes in $E'$ such that 
        \begin{align*} 
            [r_{x}];\rf^{-1};[w_{x}];\mo_{x};([w'_{x}];\mo_{ext};[w_{y}])^{?};\hb^{?} \ \text{cycle}.
        \end{align*}
        We note that without $[w_{x}];\mo_{x};[w'_{x}];\mo_{ext}^{?};[w_{y}]$, the cycle will cease to exist.
        Thus, we change one of these relations: to either $[w_{y}];\mo_{ext};[w'_{x}]$ or $[w'_{x}];\mo_{x};[w_{x}]$, giving us $r_{wr}$ irreflexive.
        If $[w_{x}];\mo_{x};[w'_{x}] \notin \mo_{ext}(E')$, then we change the other relation.
        Changing all such $\mo$ relations will render $r_{wr}$ irreflexive, satisfying (c).
        
        The change in $\mo$ as stated, can leave us a relation $\mo$ that may not be strict and total.
        The cause for this is the new $\mo$ may not longer by antisymmetric, due to some cycle involving writes $w'_{x}, w_{y}$ or $w_{x}, w'_{x}$.
        Since $\mo$ is transitive, we only require another $w$ for this, giving us either 
        \begin{tasks}(2)
            \task $[w_{y}];\mo_{ext};[w'_{x}];\mo;[w];\mo$ cycle or
            \task $[w'_{x}];\mo_{ext};[w_{x}];\mo;[w];\mo$ cycle.
        \end{tasks}        
        We can, modify $\mo$ further to fix this.
        However, we must also ensure it does not result in $\mo$ conflicting $\hb$, satisfying the other 3 constraints in addition to $\mo$ being strict and total.
        We provide the argument for a cycle with $w'_{x}, w_{y}, w$, which is the same for other forms of cycles where $y=x$ 
        \begin{itemize}
            \item First, we note that both $[w'_{x}];\mo;[w]$ and $[w];\mo;[w_{y}]$ do not conflict $\hb$, due to given constraint on $E'$.  
            \item Second, at most one of the pairs $(w'_{x}, w)$ or $(w, w_{y})$ can have a $\hb$ relation. If both pairs do, then either $\mo;\hb$ cycle or $[w_{y}];\mo_{ext};[w'_{x}]$ is untrue due to transitive relation $\hb$. 
            \item Thus, at least one of the following must hold: $[w];\mo_{ext};[w_{y}]$ or $[w'_{x}];\mo_{ext};[w]$.
            \item Third, if $[w];\hb;[w_{y}]$, then we must have $[r_{x}];\rb;[w'_{x}];\mo_{ext};[w];\hb$ cycle.
            \item This violates our premise that we have $r_{wr}$ irreflexive.
            \item Thus, we will always have $[w];\mo_{ext};[w_{y}]$. 
            \item Removing all such $\mo_{ext}$ relations renders $\mo$ antisymmetric, leaving $\mo$ strict and total again.
            \item Since we only change $\mo_{ext}$ relations of $E'$, we have (a), (b), $\mo^{?};\hb$, $[r_{x}];\rb;\hb$ and $\rb;[\mo \setminus \mo_{ext}]$ irreflexive in $E'_{t}$. 
            \item Finally, since we only change at most relation of the form $[w_{x}];\mo;[w'_{x}]$, $\mo;[w_{x}](E') = \mo;[w_{x}](E'_{t})$.
        \end{itemize}

        The same argument follows for a cycle with $w_{x}, w'_{x}, w$.

        Hence, proved.
    \end{proof}

    \begin{corollary}
        \label{corr:mo-ext-flip}
        If $E'$ has $[r_{x}];\rb;\hb$ irreflexive and $[r_{x}];\rb;\mo$ irreflexive, then $\mo_{x}(E') = \mo_{x}(E'_{t})$.
    \end{corollary}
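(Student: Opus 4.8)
The plan is to invoke Lemma~\ref{lem:sc-wr:p1} to obtain the witness execution $E'_{t}$ and then use the two extra hypotheses to show that none of the $\mo_{ext}$ relations reversed during its construction is a same-location relation between two writes to $x$. Since that construction only ever modifies $\mo_{ext}$ edges, establishing this immediately yields $\mo_{x}(E') = \mo_{x}(E'_{t})$. Note that the hypotheses here are exactly the strengthenings of the lemma's fourth premise $[r_{x}];\rb;(\mo \setminus \mo_{ext})$ irreflexive together with an added $[r_{x}];\rb;\hb$ irreflexivity, so Lemma~\ref{lem:sc-wr:p1} applies verbatim.

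First I would classify the shape of any genuine cycle $r_{wr} = [r_{x}];\rb;\mo_{ext}^{?};\hb^{?}$ that the construction must break, using the decomposition $[r_{x}];\rf^{-1};[w_{x}];\mo_{x};[w'_{x}];(\mo_{ext};[w_{y}])^{?};\hb^{?}$ from the proof of Lemma~\ref{lem:sc-wr:p1}. The hypothesis $[r_{x}];\rb;\hb$ irreflexive rules out the case where the $\mo_{ext}$ step is absent but $\hb$ closes the cycle, since there $\rf^{-1};\mo_{x} = \rb$ gives $[r_{x}];\rb;\hb;[r_{x}]$. The hypothesis $[r_{x}];\rb;\mo$ irreflexive rules out the case where $\hb$ is absent but $\mo_{ext}$ closes the cycle on a read-modify-write $r_{x}$, since there $[r_{x}];\rb;\mo_{ext};[r_{x}]$ is a sub-relation of $[r_{x}];\rb;\mo$. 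The fully degenerate case $[r_{x}];\rb;[r_{x}]$ is the inherent read-from-predecessor relation of a read-modify-write and is not a constraint-violating cycle, so it forces no flip. Hence the only case requiring a modification is $[r_{x}];\rf^{-1};[w_{x}];\mo_{x};[w'_{x}];\mo_{ext};[w_{y}];\hb;[r_{x}]$, where both the $\mo_{ext}$ step to $w_{y}$ and the closing $\hb$ are present.

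The key step is to show $y \neq x$ in this surviving case. If $w_{y}$ were a write to $x$, then $[w'_{x}];\mo_{ext};[w_{y}]$ would be a same-location edge and, by transitivity with $[w_{x}];\mo_{x};[w'_{x}]$, the prefix would collapse to $[r_{x}];\rb;[w_{y}]$, making the whole cycle $[r_{x}];\rb;\hb;[r_{x}]$ and contradicting the first hypothesis. Therefore $y \neq x$, so the edge $[w'_{x}];\mo_{ext};[w_{y}]$ joins writes of two distinct locations and is not part of $\mo_{x}$. Reversing this edge (rather than the same-location edge $[w_{x}];\mo_{x};[w'_{x}]$) is always the available choice here, so the primary flip never reverses a write-to-$x$/write-to-$x$ relation.

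Finally I would check that the antisymmetry-repair flips in Lemma~\ref{lem:sc-wr:p1} also leave $\mo_{x}$ intact: those repairs reverse only further $\mo_{ext}$ edges terminating at the location-$y$ write $w_{y}$ (the edges $[w];\mo_{ext};[w_{y}]$ in the lemma's fixup), all of which have an endpoint at location $y \neq x$ and hence are not same-location-$x$ edges. Since no $x$-to-$x$ memory-order edge is ever reversed, and $\mo_{x}$ was already a strict total order carried over as a suborder, the relative order of every pair of writes to $x$ is unchanged, giving $\mo_{x}(E') = \mo_{x}(E'_{t})$. The main obstacle I anticipate is precisely this last point: arguing that reversing only cross-location external edges cannot indirectly reorder two same-location writes through transitivity of the rebuilt total order. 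I would address it by appealing to the fact that the construction keeps $\mo_{x}$ fixed as an acyclic suborder and never introduces a cycle among two writes to $x$, so totality of the global $\mo$ can always be re-established without disturbing the restriction to $x$-writes.
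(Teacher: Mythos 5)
Your proposal is correct and follows essentially the same route as the paper: the corollary's two hypotheses force every cycle handled by Lemma~\ref{lem:sc-wr:p1} into the non-degenerate shape where the reversed edge is the cross-location relation $[w'_{x}];\mo_{ext};[w_{y}]$ (and its antisymmetry repairs, all ending at $w_{y}$ with $y\neq x$), so no $x$-to-$x$ edge is ever flipped and $\mo_{x}$ is preserved. Your write-up in fact supplies the case analysis that the paper's one-line proof leaves implicit.
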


    \begin{proof}
        
        For the case of addressing only $w_{y}, w'_{x}, w$ cycles, $\mo_{x}(E')$ remains unchanged.

    \end{proof}

        %
        %
        %

\newpage

\section{Proofs Pertaining to Sequential Consistency}
    
    \label{sec:sc-proofs}
%


    The following lemma related to $SC$ will be used to infer cycles in any inconsistent executions under $TSO$, $SRA$ and $RA$.
    \lemscincons*
    \begin{proof}
        
        We go case-wise on the cycles in $E$ violating rules of $SC$ as in Def~\ref{def:sc-model}.
        For each, we show that at least one of the aforementioned cycles must be true. 
        We do so by identifying compositions which form cycles with fewer events involved. 
        Since pre-traces are of finite events, every reduction will eventually lead to the desired compositions. 
        \begin{description}
            \item[Rule (a) violated:] $\mo$ non-strict - (a) is true.  
            \item[Rule (b) violated:] $\hb$ cycle.
            \begin{flalign*}
                &\hb \\ 
                &\to \rf;\po;\hb^{?} \\ 
                &\to \rfi;\po \vee \rfe;\po;\hb \\
                &\to (c) \vee [w];\rfe;\po;[w'];\hb \\ 
                &\to (c) \vee [w'];\mo;[w];\rfe;\po \vee [w];\mo;[w'];\hb \\ 
                &\to (c) \vee (f) \vee \mo;\hb.  
            \end{flalign*}  
            \item[Rule (c) violated:] $\mo;\hb$ cycle.
            \begin{flalign*}
                &\mo;\hb \\ 
                &\to \mo;\po \vee \mo;\rf \vee \mo;[w];\hb;[w'];\rf \vee \mo;[w];\hb;[e];\po  \\ 
                &\to (b) \vee (i) \vee [w'];\mo;[w];\hb \vee \mo;\rf \vee  \mo;[w];\hb;[w'];\po \vee \mo;[w];\hb^{?};[w'];\rf;[r];\po \\ 
                &\to (b) \vee (i) \vee \mo;\hb \vee (i) \vee \mo;[w];\rfe;\po \vee [w];\mo;[w'];\rfi;\po \vee \mo;[w];\hb;[w'];\rfe;\po \\
                &\to (b) \vee (i) \vee (f) \vee [w];\mo;[w'];\po \vee [w'];\rfi;\po \vee \mo;[w']\hb. \\
                &\to (b) \vee (i) \vee (c) \vee (f) \vee \mo;\hb.    
            \end{flalign*} 
            \item[Rule (d) violated:] $\rb;\hb$ cycle.  
            \begin{flalign*}
                &\rb;\hb  \\ 
                &\to \rb;\hb^{?};\po;[r_{x}] \\ 
                &\to \rb;\po;[r_{x}] \vee \rb;[w];\hb^{?};[w'];\rfe;\po;[r_{x}] \\ 
                &\to \rb;\po;[r_{x}] \vee \rb;[w];\rfe;\po;[r_{x}] \vee \rb;[w];\hb;[w'];\rfe;\po;[r_{x}] \\ 
                &\to (d) \vee (e) \vee \rb;[w];\hb;[w'];\rfe;\po_{x} \vee [w'];\mo;[w];\hb \\
                &\to (d) \vee (e) \vee (g) \vee \mo;\hb.   
            \end{flalign*}
            \item[Rule (e) violated:] $\rb;\mo$ cycle - (h) is true. 
            \item[Rule (f) violated:] $\rb;\mo;\hb$ cycle 
            \begin{flalign*}
                &\rb;\mo;\hb \\
                &\to \rb;[w];\mo;[w'];\hb \vee \rb;\mo_{ext};\hb \\ 
                &\to \rb;[w];\hb;[w];\hb \vee [w];\mo;[w'];\hb \vee \rb;\mo_{ext};\hb{?};\po \\ 
                &\to \rb;\hb \vee \mo;\hb \vee \rb;\mo_{ext};\po \vee \rb;\mo_{ext};\hb^{?};\rfe;\po \\ 
                &\to \rb;\hb \vee \mo;\hb \vee (k) \vee \rb;\mo_{ext};\rfe;\po \vee \rb;\mo_{ext};[w];\hb;[w'];\rfe;\po \\ 
                &\to \rb;\hb \vee \mo;\hb \vee (k) \vee (j) \vee \rb;\mo_{ext};[w'];\rfe;\po \vee [w'];\mo;[w];\hb \\ 
                &\to \rb;\hb \vee \mo;\hb \vee (k) \vee (j) \vee \mo;\hb. 
            \end{flalign*}
        \end{description} 

        Hence proven.

    \end{proof}

    \subsection{Proving Relevant De-ordering Effects Unsafe under $SC$}
        \label{subsec:sc-deord}
        We first show that general write-write de-ordering effect is unsafe under $SC$
        \begin{lemma}
            \label{lem:transf:ww-sep:sc}
            Consider a pre-trace $P$ and two writes $w1, w2$ such that 
            \begin{tasks}(2)
                \task $w1 \in st(P)$.
                \task $w2 \in st(P)$.
                \task $[w1];\po;[w2] \in \po(P)$.
            \end{tasks}  
            Then for any transformation-effect $P \mapsto_{tr} P'$ such that $[w1];\po;[w2] \in \po^{-}$ and $w1, w2 \notin st^{-}$, we have $\neg \psf{SC}{tr}{P}$.
        \end{lemma}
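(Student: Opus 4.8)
The plan is to establish $\neg \psf{SC}{tr}{P}$ directly, by producing a single witness execution $E' \in \llbracket P' \rrbracket_{SC}$ whose observable behavior is realized by no SC-consistent execution of $P$; this refutes $\llbracket P' \rrbracket_{SC} \sqsubseteq \llbracket P \rrbracket_{SC}$. The feature separating $E'$ from all of $P$'s SC behaviors will be the memory order on the two surviving writes: I will arrange $[w2];\mo;[w1]$ in $E'$, the reverse of the order $P$ imposes through $[w1];\po;[w2]$. Since $w1, w2 \notin st^{-}$, both writes are common to $P$ and $P'$, so by the definition of $\sim$ any $E \sim E'$ must carry exactly the same $\mo$ edge $[w2];\mo;[w1]$ on them.

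First I would construct $E'$. Because the effect removes $[w1];\po;[w2]$, the ordered pair $(w2, w1)$ does not conflict with $\po(P')$, so there is a linear extension of $\po(P')$ restricted to the writes that places $w2$ before $w1$; formally this memory order is obtained from the incremental construction of Lemma~\ref{lem:sc-rr:welim-p0} (via Corollary~\ref{cor:sc-rr:welim-p1}) seeded with $[w2];\mo;[w1]$, yielding a strict total $\mo$ with $\mo;\po$ irreflexive. I then take $\rf = \emptyset$, giving a (not yet well-formed) execution whose write skeleton satisfies every rule of Def~\ref{def:sc-model}: rules (a)--(c) hold by construction, and the $\rb$-rules (d)--(f) hold vacuously. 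Invoking piecewise consistency of SC (Theorem~\ref{thm:sc-piecewise-con}, whose hypotheses in Def~\ref{def:piecewise-cons} are met since $\rf^{-1}$ is functional, $\mo$ is total, and every location retains a maximal write), I extend the empty $\rf$ to a full, well-formed, SC-consistent reads-from assignment while preserving the chosen $\mo$. This delivers $E' \in \llbracket P' \rrbracket_{SC}$ with $[w2];\mo;[w1]$.

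Next I would rule out any matching execution of $P$. Suppose toward contradiction that $E \in \llbracket P \rrbracket_{SC}$ and $E \sim E'$. Then $E$ carries the common-write order $[w2];\mo;[w1]$, while $[w1];\po;[w2] \in \po(P)$ gives $[w1];\hb;[w2]$; composing, $[w2];\mo;[w1];\hb;[w2]$ is a $\mo;\hb$ cycle, contradicting rule (c) of Def~\ref{def:sc-model}. Hence no SC-consistent execution of $P$ is $\sim$-equivalent to $E'$, the behavior of $E'$ is absent from $\llbracket P \rrbracket_{SC}$, and therefore $\neg \psf{SC}{tr}{P}$.

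The main obstacle is the construction of $E'$, and specifically guaranteeing that reversing the two writes' memory order really does extend to a full SC-consistent execution of $P'$. This rests on the de-ordering being genuine --- that $\po(P')$ contains no residual transitive path $[w1];\po;[w2]$ through intervening writes --- so that $[w2];\mo;[w1]$ is compatible with $\po(P')$; and on the $\rf$ supplied by piecewise consistency creating no $\hb$ path from $w1$ to $w2$ that would reinstate the forbidden $\mo;\hb$ cycle. The latter is automatic, since piecewise consistency returns a consistent execution with the fixed $\mo$, but verifying its preconditions for the seeded skeleton is where the care is needed.
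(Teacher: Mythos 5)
Your proposal is correct and follows essentially the same route as the paper's proof: seed $\mo$ with $[w2];\mo;[w1]$, extend it to a strict total order compatible with $\po(P')$ via Lemma~\ref{lem:sc-rr:welim-p0}/Corollary~\ref{cor:sc-rr:welim-p1}, use piecewise consistency of \textit{SC} to obtain a consistent $E' \in \llbracket P' \rrbracket_{SC}$, and observe that any $E \sim E'$ of $P$ would contain the $\mo;\po$ cycle $[w2];\mo;[w1];\po;[w2]$ and hence be \textit{SC}-inconsistent. The only cosmetic difference is that you build $\rf$ from scratch via $\pwc{SC}$ where the paper routes through Lemma~\ref{lem:sc-crucial-exist}; the substance is identical.
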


        \begin{proof}

            \begin{itemize}
                \item Any $E \in \langle P \rangle$ with $[w2];\mo;[w1]$ is inconsistent under $SC$ due to violation of Rule (c): $[w2];\mo;[w1];\po$ forms a cycle.
                \item Any execution $E' \in \langle P' \rangle$,  $[w2];\mo;[w1];\po$ is irreflexive.
                \item By Lemma~\ref{lem:sc-rr:welim-p0}, we always have an execution $E' \in \langle P' \rangle$ with $\mo$ strict and total, not conflicting $\po$.
                \item With $pwc(SC)$ and Lemma~\ref{lem:sc-crucial-exist}, we can obtain an execution $E' \in \llbracket P' \rrbracket_{SC}$.
                \item This gives us $\neg \psf{SC}{tr}{P}$. 
            \end{itemize}

        \end{proof}

        Next, we want to show same memory read-write or write-read deordering is unsafe under $SC$.
        For this, we first show if a read and write to same memory belong to different threads (other than initialization), then there always exists an consistent execution under $SC$ where the read value is from the respective write. 
        \begin{lemma}
            \label{lem:sc:p1}
            If for a given pre-trace $P$ having events $w_{x}$ and $r_{x}$, 
            \begin{align*}
                &tid(r_{x}) \neq tid(w_{x}) \wedge tid(w_{x}) \neq 0 \ \vee \\ 
                &[w_{x}];\po;[r_{x}] \wedge (\nexists e \ . \ m(e) = m(w_{x}) \wedge [w_{x}];\po;[e];\po;[r_{x}])
            \end{align*}
            then 
            \begin{align*}
                \exists E \in \llbracket P \rrbracket_{SC} \ . \ [w_{x}];\rf;[r_{x}] \in \rf(E).
            \end{align*}      
        \end{lemma}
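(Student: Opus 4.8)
The plan is to exhibit one $SC$-consistent execution realising the edge, obtained in two stages: first fix a program-order-respecting memory order in which $w_{x}$ is positioned so that the read-from edge $[w_{x}];\rf;[r_{x}]$ cannot close any $SC$-violating cycle, and then invoke the crucial-set machinery to repair the remaining reads \emph{without ever touching} $r_{x}$. Concretely, I would begin from a candidate execution $E$ with $[w_{x}];\rf;[r_{x}] \in \rf(E)$ and, via Lemma~\ref{lem:sc-rr:welim-p0} (or Corollary~\ref{cor:sc-rr:welim-p1}), extend $\mo(E)$ to a strict total order with $\mo;\po$ irreflexive, maintaining throughout the invariant that $w_{x}$ is $\mo_{|loc}$-maximal among the writes to $x$ that are $\hb$-before $r_{x}$, and that there is no $\hb$ path $[r_{x}];\hb;[w_{x}]$.

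I would establish this invariant by the two hypothesis cases. When $[w_{x}];\po;[r_{x}]$ holds with no intervening same-location event, every write to $x$ in the thread of $w_{x}$ is either $\po$-before $w_{x}$ or $\po$-after $r_{x}$; placing the former $\mo$-before $w_{x}$ and noting the latter are $\hb$-\emph{after} $r_{x}$ makes $w_{x}$ maximal among the $x$-writes reaching $r_{x}$ and leaves no $\hb$ path returning from $r_{x}$ to $w_{x}$. When instead $tid(w_{x}) \neq tid(r_{x})$ and $tid(w_{x}) \neq 0$, distinctness of threads means no $\po$ forces any competing $x$-write to lie between $w_{x}$ and $r_{x}$, and being a non-initialisation write lets me order $w_{x}$ as late as required among the writes to $x$; every $x$-write that is $\po$-after $w_{x}$ in its own thread is then kept $\hb$-unrelated to $r_{x}$ by declining to commit, in the starting $E$, the intermediate reads whose $\rf$ would otherwise build such a cross-thread chain.

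Finally I would settle consistency using the explicit computations of Lemma~\ref{lem:sc-cra}: the invariant forces $r_{x}$ into none of $cra(\hb)$, $cra(\mo;\hb)$, $cra(\rb;\hb)$, $cra(\rb;\mo;\hb)$, since maximality makes $\rb$ out of $r_{x}$ empty (killing the $\rb$-based memberships) while absence of an $\hb$ path back to $w_{x}$ kills the $\hb$-based ones. Hence the crucial set supplied by Lemma~\ref{lem:sc-crucial-exist}, applicable because $\mo$ is a strict total order with $\mo;\po$ irreflexive, can be taken with $r_{x} \notin cr$; then $\pwc{SC}$ (Theorem~\ref{thm:sc-piecewise-con}) together with Lemma~\ref{lem:cr-cons} yields a consistent $E_{t}$ whose reads-from contains $\rf(E) \setminus cr$, and in particular $[w_{x}];\rf;[r_{x}]$, as required. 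I expect the main obstacle to be precisely the second case: ruling out the situation where $w_{x}$ has $\po$-later writes to $x$ that reach $r_{x}$ through a cross-thread $\rf$ chain. The resolution is to ensure those chains are never committed in the seed execution, so that it is the intermediate reads of such a chain, and never $r_{x}$ itself, that the crucial set eliminates.
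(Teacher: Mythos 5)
Your overall strategy is the paper's: seed an execution with $[w_{x}];\rf;[r_{x}]$, arrange $\mo$ so that $r_{x}$ is never forced into a crucial set, and then let $\pwc{SC}$, Lemma~\ref{lem:sc-crucial-exist} and Lemma~\ref{lem:cr-cons} repair the remaining reads. However, the invariant you maintain is too weak, and one claim along the way is false. First, ``maximality makes $\rb$ out of $r_{x}$ empty'' cannot hold: $[r_{x}];\rb;[w'_{x}]$ holds for \emph{every} $w'_{x}$ with $[w_{x}];\mo_{|loc};[w'_{x}]$, and writes to $x$ that are not $\hb$-before $r_{x}$ (e.g.\ in unrelated threads) will in general sit $\mo$-after $w_{x}$ once $\mo$ is totalized. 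Your invariant does kill $[r_{x}];\rb;\hb;[r_{x}]$ (any $x$-write $\hb$-before $r_{x}$ is $\mo$-below $w_{x}$ by maximality), but it does not kill $[r_{x}];\rb;\mo;\hb;[r_{x}]$, whose minimal form per Lemma~\ref{lem:sc-incons-exec} is the $\rb;\mo;\po$ cycle (k): take $w'_{x}$ with $[w_{x}];\mo_{|loc};[w'_{x}]$ and a write $w$ \emph{to a different location} with $[w'_{x}];\mo;[w];\po;[r_{x}]$. Such a $w$ is not constrained by maximality over $x$-writes, and this cycle places $r_{x}$ into $cra(\rb;\mo;\hb)$, defeating your final step.

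The paper closes exactly this hole with a second batch of $\mo$ constraints that your construction is missing: for every write $w$ with $[w];\po;[r_{x}]$ and every $w'_{x}$ with $[w_{x}];\mo_{|loc};[w'_{x}]$, it forces $(w, w'_{x}) \in \mo$ (in addition to $(w, w_{x}) \in \mo$), which renders $[r_{x}];\rb;\mo;\po$ irreflexive before the totalization via Corollary~\ref{cor:sc-rr:welim-p1}. With that addition your argument goes through: the remaining inconsistent cycles through $r_{x}$ all pass through some $\rfe$ into another read, which the crucial set can eliminate instead of $r_{x}$, and your appeal to $\pwc{SC}$ together with Lemma~\ref{lem:cr-cons} then matches the paper's conclusion. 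Your case split on the two hypothesis disjuncts and the idea of leaving cross-thread $\rf$ chains uncommitted in the seed are consistent with how the paper's machinery operates, but they do not substitute for the missing $\mo$ constraint above.
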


        \begin{proof}

            We prove this by construction of execution $E$.
            We first add the relation $[w_{x}];\rf;[r_{x}]$.
            Now, to add $\mo$, we first add $\mo$ relations with $w_{x}$ using the following constraint.
            \begin{align*}
                \forall w \ . \      &[w];\po;[r_{x}] \implies (w, w_{x}) \in \mo(E). \\ 
                                      &[r_{x}];\po;[w] \implies (w_{x}, w) \in \mo(E). \\ 
                                      &[w];\po;[w_{x}] \implies (w, w_{x}) \in \mo(E). \\
                                      &[w_{x}];\po;[w] \implies (w_{x}, w) \in \mo(E). 
            \end{align*}
            
            This will ensure $[r_{x}];\rb;\po$ irreflexive and $\mo;\rfe;[r_{x}];\po$ irreflexive.  
            With the given condition between events $w_{x}$ and $r_{x}$, $\mo$ also does not conflict $\po$.
            Lastly, since only $[w_{x}];\mo$ is non-empty, $\mo$ is strict partial order.

            Next, we add more $\mo$ relations as follows 
            \begin{align*}
                \forall w, w'_{x} \ . \ [w_{x}];\mo_{x};[w'_{x}] \wedge [w];\po;[r_{x}] \implies (w, w'_{x}) \in \mo(E). 
            \end{align*}
            This ensures $[r_{x}];\rb;\mo;\po$ irreflexive.
            The resultant $\mo$ is still strict and partial not conflicting $\po$ (due to $[w];\mo;[w_{x}]$).

            Now, we can add the remaining $\mo$ relations as per Corollary~\ref{cor:sc-rr:welim-p1}, such that $\mo$ is strict and total, not conflicting $\po$.
            This may leave us with an inconsistent execution $E$.
            For this, using Lemma~\ref{lem:sc-crucial-exist}, we can infer a crucial set exists.
            Using Lemma~\ref{lem:sc-incons-exec}, Lemma~\ref{lem:sc-cra}, and the above irreflexive relations, we can infer a crucial set need not have $r_{x}$.
            From Lemma~\ref{lem:cr-cons}, we obtain our desired candidate execution $E$ consistent under $SC$. 

        \end{proof}

        Using Lemma~\ref{lem:sc:p1}, we can finally prove same memory read-write / write-read reordering unsafe under $SC$.
        \begin{lemma}
            \label{lem:transf:rw-wr-eq-sep:sc}
            Consider a pre-trace $P$ and transformation-effect $P \mapsto_{xdeord} P'$ such that there exists two $\po$ ordered read $r_{x}$ and write $w_{x}$ events in $P$ where  
            \begin{tasks}
                \task $[r_{x}];\po;[w_{x}] \in \po(P) \implies [r_{x}];\po;[w_{x}] \in \po^{-} \wedge r_{x}, w_{x} \notin st{-}$.
                \task $[w_{x}];\po;[r_{x}] \in \po(P) \implies [w_{x}];\po;[r_{x}] \in \po^{-} \wedge r_{x} \notin st^{-}$. 
            \end{tasks}
            Then $\neg \psf{SC}{tr}{P}$.
        \end{lemma}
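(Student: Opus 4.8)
The plan is to exhibit, in each of the two configurations, a single candidate execution $E' \in \llbracket P' \rrbracket_{SC}$ whose observable behavior cannot be reproduced by any SC-consistent execution of $P$, and then to invoke the definition of safety to conclude $\neg \psf{SC}{tr}{P}$. The witness in both cases is a value for $r_x$ that the removed $\po$ edge had ruled out under SC: reading from $w_x$ when the original order was $[r_x];\po;[w_x]$, and reading a stale value (from a write $\mo$-before $w_x$, for instance the initialization write) when the original order was $[w_x];\po;[r_x]$. The consistent witnesses will be supplied by Lemma~\ref{lem:sc:p1}, which is exactly the tool that produces an SC-consistent execution realizing a chosen $\rf$ edge once the two events are reorderable.

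First I would treat the read-before-write case (a). Since the de-ordering removes $[r_x];\po;[w_x]$, the pair $w_x, r_x$ becomes reorderable in $P'$, so Lemma~\ref{lem:sc:p1} supplies a consistent $E' \in \llbracket P' \rrbracket_{SC}$ with $[w_x];\rf;[r_x]$. As both events survive the transformation (condition (a) gives $r_x, w_x \notin st^-$), any $E \sim E'$ with $E \in \llbracket P \rrbracket_{SC}$ must carry the same edge $[w_x];\rf;[r_x]$ on these common events; but together with the retained $[r_x];\po;[w_x]$ this is an $\rfi;\po$ cycle, hence an $\hb$ cycle, contradicting Rule (b) of Definition~\ref{def:sc-model}. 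Thus no such $E$ exists and $tr$ is unsafe in this case.

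Next I would treat the write-before-read case (b) symmetrically, with the stale-read witness. Applying Lemma~\ref{lem:sc:p1} to $r_x$ together with a write $w'_x$ satisfying $[w'_x];\mo;[w_x]$ yields a consistent $E' \in \llbracket P' \rrbracket_{SC}$ in which $[w'_x];\rf;[r_x]$ and $[w'_x];\mo;[w_x]$. In $P$, the surviving order $[w_x];\po;[r_x]$ together with $[r_x];\rb;[w_x]$ (obtained from $[w'_x];\rf;[r_x]$ and $[w'_x];\mo;[w_x]$) forms an $\rb;\po$, hence $\rb;\hb$, cycle, violating Rule (d); so any $E \sim E'$ of $P$ is SC-inconsistent. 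Combining the two cases yields $\neg \psf{SC}{tr}{P}$.

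The main obstacle is the construction step, namely discharging the hypotheses of Lemma~\ref{lem:sc:p1} in $P'$. I must argue that after removing the $\po$ edge the relevant write and $r_x$ genuinely satisfy the lemma's precondition, either lying in distinct threads or standing in a $\po$ order uninterrupted by an intervening same-location write; and, for case (b), that a suitable source write $w'_x$ ordered $\mo$-before $w_x$ is available (the initialization write always is) and that this $\mo$ edge is pinned between events common to $E$ and $E'$ so that $\sim$ propagates it into $E$. Here the paper's standing exclusion of write-elimination ($\we$) is what guarantees $w_x$ itself is retained, closing the corner case in which an eliminated $w_x$ could be re-placed $\mo$-before $w'_x$ to rescue consistency in $P$.
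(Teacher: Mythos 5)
Your overall strategy coincides with the paper's: use Lemma~\ref{lem:sc:p1} to manufacture an SC-consistent execution of $P'$ whose $\rf$ edge at $r_{x}$ forces an $\rfi;\po$ (resp.\ $\rb;\po$) cycle in every matching execution of $P$. However, you have explicitly deferred rather than discharged the step that carries most of the weight in the paper's proof, namely the presence of an intervening same-location write. This is a genuine gap, not a routine verification. Concretely, in your case (b) the stale-read witness $[w'_{x}];\rf;[r_{x}]$ with $w'_{x}=w_{init}$ only works when no other write to $x$ is $\po$-before $r_{x}$ in $P'$: if some $e\in w_{x}(P')$ has $[e];\po;[r_{x}]$, then Lemma~\ref{lem:sc:p1} does not apply to the pair $(w_{init},r_{x})$, and indeed no SC-consistent execution of $P'$ can read $r_{x}$ from $w_{init}$ (it would itself have an $\rb;\po$ cycle through $e$). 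The paper's Subcase~1 of Case~b.1 therefore switches witness: it takes $[e];\rf;[r_{x}]$ together with $[e];\mo;[w_{x}]$, and separately argues that this $\mo$ choice is realizable in $P'$ while still inconsistent in $P$. Your sketch has no analogue of this.

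The second missing piece is the paper's Case~b.2, where a same-location write $e$ sits between $w_{x}$ and $r_{x}$ already in $P$ (so that the de-ordering of $w_{x}$ and $r_{x}$ necessarily disturbs $e$ as well). There the argument is not a direct application of Lemma~\ref{lem:sc:p1} at all: one either observes that reversing the order forces a same-location write--write de-ordering $[w_{x}];\po;[e]\in\po^{-}$ and falls back on Lemma~\ref{lem:transf:ww-sep:sc}, or, when $e\in st^{-}$, factors the effect through an intermediate pre-trace and recurses. A similar reduction (to Lemma~\ref{lem:transf:ww-sep:sc}) is needed in your case (a) when a write to $x$ lies between the two events after the transformation. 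Since you name these situations as ``the main obstacle'' without resolving them, the proposal as it stands establishes the lemma only for the unobstructed configurations and does not yet prove the statement in full generality.
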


        \begin{proof}

            The proof can be divided into two cases, one for each type of syntactic order between $r_{x}$ and $w_{x}$ in $P$.
            For each case, we first show that a class of executions $E$ with a specific $\rf$ and/or $\mo$ relation with $r_{x}$ is always inconsistent under $SC$.
            This is followed by showing that the can exist an execution with the same $\rf$ and/or $\mo$ relation that is consistent under $SC$ after the transformation-effect.  
            This would directly imply $\neg \psf{SC}{tr}{P}$.
            \begin{description}
                \item[Case a:] $[r_{x}];\po;[w_{x}] \in \po(P)$.
                    \begin{itemize}
                        \item Any $E \in \langle P \rangle$ with $[w_{x}];\rf;[r_{x}]$ is inconsistent under $SC$ due to violation of Rule (b): $[r_{x}];\po;[w_{x}];\rf$ forms a cycle.
                        \item If $[w_{x}];\po;[r_{x}] \notin \po^{-}$, by Lemma~\ref{lem:sc:p1}, we can infer there exists $E' \in \llbracket P' \rrbracket_{SC}$ with $[w_{x}];\rf;[r_{x}]$.    
                        \item Otherwise, based on the existence of $[w_{x}];\po;[w'_{x}];\po;[r_{x}]$ in $\po(P')$, we can either use Lemma~\ref{lem:transf:ww-sep:sc} or Lemma~\ref{lem:sc:p1} to infer a consistent execution exists.  
                    \end{itemize}
                \item[Case b.1:] $[w_{x}];\po;[r_{x}] \in \po(P) \ \wedge \ (\nexists e \in w_{x}(P) \ . \ [w_{x}];\po;[e];\po;[r_{x}] \in \po(P)$).
                    \begin{description}
                        \item We now go case-wise on whether another write to same memory exists program ordered before $r_{x}$ in $P'$.
                        \item[Subcase 1:] $\exists e \in w(P') \ . \ mem(e) = mem(r_{x}) \wedge [e];\po;[r_{x}] \in \po(P')$
                        \begin{itemize}
                            \item WLOG, we can assume there does not exist any other write to the same memory syntactically ordered between $e$ and $r_{x}$ in $P'$.
                            \item Any $E \in \langle P \rangle$ with $[e];\rf;[r_{x}]$ and $[e];\mo;[w_{x}]$ will be inconsistent under $SC$ as it violates Rule (d): $[r_{x}];\rb;[w_{x}];\po$ forms a cycle.
                            \item By Lemma~\ref{lem:sc:p1}, we can infer that there exists $E' \in \llbracket P' \rrbracket_{SC}$ with $[e];\rf;[r_{x}]$.     
                            \item To show that $E'$ can have $[e];\mo;[w_{x}] \in \mo(E')$, note that $[r_{x}];\rb;[w_{x}];\po$ no longer forms any cycle.
                            \item For any other write event $e'$ such that $[e];\po;[r_{x}]$, we can have $[e'];\mo;[w_{x}]$ to avoid making $[r_{x}];\rb;\mo;[e'];\po$ form a cycle.
                        \end{itemize}
                        \item[Subcase 2:] $\nexists e \in w(P') \ . \ mem(e) = mem(r_{x}) \wedge [e];\po;[r_{x}] \in \po(P')$
                        \begin{itemize}
                            \item Any $E \in \langle P \rangle$ with $[w_{init}];\rf;[r_{x}]$ is inconsistent under $SC$ due to violation of Rule (c) or (d): If $[w_{x}];\mo;[w_{init}]$, then $\mo;\po$ forms a cycle and if $w_{init};\mo;[w_{x}] \in \mo(E)$ then $[r_{x}];\rb;[w_{x}];\po$ forms a cycle. 
                            \item By Lemma~\ref{lem:sc:p1}, we can infer that there exists $E' \in \llbracket P' \rrbracket_{SC}$ with $[w_{init}];\rf;[r_{x}]$.     
                        \end{itemize}
                    \end{description}
                \item[Case b.2:] $[w_{x}];\po;[r_{x}] \in \po(P) \ \wedge \ (\exists e \in w_{x}(P) \ . \ [w_{x}];\po;[e];\po;[r_{x}])$
                    \begin{description}
                        \item With $[w_{x}];\po;[r_{x}] \in \po^{-}$, we have the following possibilities for $e$ in $P'$.
                        \item[Subcase 1:] $e \notin st^{-}$ 
                        \begin{itemize}
                            \item $[r_{x}];\po;[w_{x}] \in \po(P')$ - Since $\po$ is also a strict total order per-thread, we also have $[w_{x}];\po;[e] \in \po^{-}$. By Lemma~\ref{lem:transf:ww-sep:sc} we have $\neg \psf{SC}{tr}{P}$. 
                            \item $[r_{x}];\po;[w_{x}] \notin \po(P')$ - For this $tid(e) \neq tid(r_{x})$ reduces to Case 2 and $tid(e) \neq tid(w_{x})$ can be resolved by Lemma~\ref{lem:transf:ww-sep:sc}. 
                        \end{itemize}
                        \item[Subcase 2:] $e \in st^{-}$
                            Then we can first remove $e$ as an intermediate effect, and the remaining effect can be resolved by the above cases.
                    \end{description}
            \end{description}

            Hence, proved.

        \end{proof}

\newpage 

\section{Proving SRA Complete w.r.t. TSO and SC}
    
    \label{sec:sra-proofs}
    The following is inferred from Lemma~\ref{lem:sc-incons-exec}.
    \begin{restatable}{corollary}{corsraincons}
        \label{cor:sra-incons-exec}
        At least one of the following is true for any $E$ inconsistent under $SRA$
        \begin{tasks}(2)
            \task $\mo$ cycle.
            \task $\mo;\po$ cycle.
            \task $\rfi;\po$ cycle.
            \task $\rb;\po$ cycle.
            \task $\rb;\rfe;\po$ cycle.
            \task $\mo;\rfe;\po$ cycle.
            \task $\rb;\hb;\rfe;\po$ cycle.
            \task $\rb;\mo$ cycle.
            \task $\mo;\rf$ cycle.
        \end{tasks}
    \end{restatable}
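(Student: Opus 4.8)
The plan is to obtain this corollary as a direct tightening of Lemma~\ref{lem:sc-incons-exec}, exploiting the fact that $SRA$ is exactly $SC$ with a single rule deleted. First I would record the structural observation: $SRA = TSO \setminus \{a1 \cup a2\}$ retains only Rules (a)--(e) of Def~\ref{def:tso-model}, and by the paper's own remark that $TSO$ and $SC$ share Rules (a)--(e) while Rule (f) of $SC$ (namely $\rb;\mo;\hb$ irreflexive) is split into the $TSO$ rules $a1$ and $a2$, discarding both $a1$ and $a2$ leaves $SRA$ as precisely $SC$ minus the constraint $\rb;\mo;\hb$ irreflexive. Consequently, any execution $E$ inconsistent under $SRA$ must violate one of $SC$'s Rules (a)--(e), and never its Rule (f), since that constraint simply is not part of $SRA$.

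The second step is to revisit the case analysis inside the proof of Lemma~\ref{lem:sc-incons-exec}, but restricted to a violated rule among (a)--(e). I would verify case by case that these five rules reduce only to cycles drawn from the list (a)--(i) of the present corollary: Rule (a) yields a non-strict $\mo$, i.e.\ an $\mo$ cycle; Rule (e) yields $\rb;\mo$; and the reductions for Rules (b), (c), (d) terminate in $\rfi;\po$, $\mo;\rfe;\po$, $\mo;\rf$, $\rb;\po$, $\rb;\rfe;\po$, or $\rb;\hb;\rfe;\po$, together with recursive appeals to the $\mo;\hb$ (Rule (c)) reduction. The decisive point is that the two cycles $\rb;\mo_{ext};\rfe;\po$ and $\rb;\mo;\po$ --- items (j) and (k) of Lemma~\ref{lem:sc-incons-exec} --- are produced \emph{only} in the Rule (f) case of that proof, which is now vacuous, so they cannot occur here.

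The one delicate point, and the step I expect to be the main obstacle, is confirming that the recursive unfolding of $\mo;\hb$ cycles, invoked from the Rule (b), (c), and (d) cases, never reintroduces (j) or (k). I would argue this by inspecting the Rule (c) reduction itself: it sends a $\mo;\hb$ cycle to cycles in $\{$(b), (c), (f), (i)$\}$ plus a strictly smaller $\mo;\hb$ cycle, a family closed away from (j) and (k). Since pre-traces have finitely many events, the ``fewer events'' argument from the proof of Lemma~\ref{lem:sc-incons-exec} guarantees this recursion terminates, bottoming out in a cycle among (a)--(i). Combining the three steps, every $E$ inconsistent under $SRA$ exhibits at least one of the cycles (a)--(i), which is the stated claim.
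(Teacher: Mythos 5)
Your proposal is correct and follows essentially the same route as the paper, whose proof is just the one-line observation that $SRA$ drops the $\rb;\mo;\hb$ irreflexivity constraint, so the Rule~(f) case of Lemma~\ref{lem:sc-incons-exec} --- the only source of cycles (j) and (k) --- never arises. You additionally verify explicitly that the reductions for Rules (a)--(e) and the recursive $\mo;\hb$ unfolding stay within cycles (a)--(i), a detail the paper leaves implicit.
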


    \begin{proof}
        Checking the rules of $SRA$ from Def~\ref{def:sra-model}, we know that $\rb;\mo;\hb$ need not be irreflexive. 
        Hence proved.
    \end{proof}


    \thmsrasccomp*

    \begin{proof}
        
        From Lemma~\ref{lem:comp-gen-cond} we know the exact $tr$ which requires proving unsafe under $SC$.
        Consider therefore $E$ and $E'$ as described in this lemma.
        
        First, we note that for the reads eliminated as part of $tr$, no $\rf$ edge associated with them will result in a consistent execution $E \sim E'$.
        \begin{itemize}
            \item If a crucial set $cr \in Cr(E, SC)$ consists only of the reads eliminated, then by $pwc(SC)$ we can obtain a consistent execution $E_{t} \sim E'$.
            \item By Lemma~\ref{lem:comp-gen-cond}, we do not need to address such $tr$. 
            \item We require every execution $E \sim E'$ such that the $\rf$ relations with the eliminated reads are not sufficient to make it consistent under $SC$.
            \item Thus, WLOG, we only consider the crucial set of $E$ to be of those reads which are not eliminated.
        \end{itemize}

        Now, by Corollary~\ref{cor:sra-incons-exec}, we must have at least one of the following true in $E$  
        \begin{tasks}(2)
            \task $\mo$ non-strict. 
            \task $\mo;\po$ cycle.
            \task $\rfi;[r_{x}];\po;[w_{x}]$ cycle.
            \task $\rb;[w_{x}];\po;[r_{x}]$ cycle.
            \task $\rb;\rfe;[r'_{x}];\po;[r_{x}]$ cycle.
            \task $[w];\mo;[w_{x}];\rfe;[r_{x}];\po$ cycle.
            \task $\rb;\hb;\rfe;[r_{y}];\po;[r_{x}]$ cycle.
            \task $\rb;\mo$ cycle.
            \task $\mo;\rf$ cycle.
        \end{tasks}

        We address each case separately. 
        \begin{description}
            \item[Case (a):] $\mo$ cycle in $E$.
            
            Since $st^{-} \cap w(P) = \phi$ we have $\mo(E) = \mo(E')$.
            This implies $\mo$ must also be non-strict in $E'$, making it inconsistent under $SRA$. 

            \item[Case (b):] $\mo;\po$ cycle in $E$ 
            
            By Lemma~\ref{lem:sc-crucial-exist}, a crucial set does not exist for $E$, but exists for $E'$, which implies $\neg \psf{SC}{tr}{P}$. 
        
            \item[Cases (c), (d):] $\rfi;[r_{x}];\po;[w_{x}]$ cycle and $\rb;[w_{x}];\po;[r_{x}]$ cycle in $E$.
            
            We can infer de-ordering of $r_{x}, w_{x}$ occurs, which by Lemma~\ref{lem:transf:rw-wr-eq-sep:sc} implies $\neg \psf{SC}{tr}{P}$.
        
            \item[Case (e):] $\rb;\rfe;[r'_{x}];\po;[r_{x}]$ cycle in $E$
        
            Consider the cycle in $E$ expanded as $[r_{x}];\rf^{-1};[w_{x}];\mo_{loc};[w'_{x}];\rfe;[r'_{x}];\po$. 
            This implies $E$ must have either $r_{x}$ or $r'_{x}$ in every crucial set under $SC$.
            The argument now follows using $E'$:
            \begin{itemize}
                \item First, note that the only corner case we need to address is when every crucial set $cr'$ for $E'$ has at least one of $r'_{x}$ or $r_{x}$.
                \item This is possible only when we have both $[r_{x}];\rb;[w1_{x}];\mo_{ext};[w_{y}];\po$ and $[r'_{x}];\rb;[w2_{x}];\mo_{ext};[w_{z}];\po$ cycles in $E'$.
                \item Since $\rb;\hb$ is irreflexive in $E'$ and since both $r_{x}, r'_{x}$ are of same memory. by Corollary~\ref{corr:mo-ext-flip}, we can obtain our $E'_{t}$ for which $r_{x}, r'_{x}$ belong to neither of these cycles.
                \item From Corollary~\ref{corr:mo-ext-flip}, we also know $\mo_{x}(E') = \mo_{x}(E'_{t}) = \mo_{x}(E_{t})$.
                \item Thus, $\rb;\rfe;[r'_{x}];\po;[r_{x}]$ cycle in $E_{t}$, requiring either $r_{x}$ or $r'_{x}$ to be in any of its crucial set.
                \item Since $cra(\rb;\mo_{ext};\rfe;\po) \nsubseteq cra(\mo;\rfe;\po)$, we can construct a minimal crucial set $cr'$ for $E'_{t}$ without $r_{x}, r'_{x}$.
                \item By Lemma~\ref{lem:crucial-based-unsafety}, we can infer $\neg \psf{SC}{tr}{P}$.
            \end{itemize}

            \item[Case (f)]: $[w];\mo;[w_{x}];\rfe;[r_{x}];\po$ cycle in $E$.
        
            We note that $r_{x}$ must be in every crucial set of $E$ under $SC$.
            Using $E'$, we proceed as follows: 
            \begin{itemize}
                \item We first note the only case we need to consider is when every crucial set $cr'$ of $E'$ must have $r_{x}$.
                \item This is possible only if $[r_{x}];\rb;\mo_{ext};\po$ forms a cycle in $E'$ (for the other composition, we can choose the other read to construct $cr'$).
                \item By Lemma~\ref{lem:sc-wr:p1}, we can obtain another execution $E'_{t}$ such that $r_{x}$ is not part of any cycle of the form $[r_{x}];\rb;\mo_{ext};\po$. 
                \item By Lemma~\ref{lem:sc-wr:p1} $E'_{t}$ will also have $\mo;[w_{x}](E') = \mo;[w_{x}](E'_{t})$.
                \item Thus, the corresponding $E_{t}$ will still have $[w];\mo;[w_{x}];\rfe;[r_{x}];\po$ cycle, implying $r_x$ must be in every crucial set of $E_{t}$. 
                \item For $E'_{t}$, since $cra(\rb;\mo_{ext};\rfe;\po) \nsubseteq cra(\mo;\rfe;\po)$, we can construct a minimal crucial set $cr'$ for $E'$ without $r_{x}$.
                \item Thus, by Lemma~\ref{lem:crucial-based-unsafety}, we have $\neg \psf{SC}{tr}{P}$.              
            \end{itemize} 
        
            \item[Case (g)]: $\rb;\hb;\rfe;[r_{y}];\po;[r_{x}]$ cycle in $E$
            
            Consider the cycle to be expanded as 
            \begin{align*} 
                [r_{x}];\rf^{-1};[w'_{x}];\mo_{x};[w_{x}];\hb;[w_{y}];\rfe;[r_{x}];\po.
            \end{align*}
            Here, at least one of $r_{x}$ and $r_{y}$ must be in the crucial set of $E$.
            Assuming $\mo;\hb$ irreflexive in $E$, we divide into sub-cases based on the $\mo_{ext}$ relation in $E'$ violating $a_{wr}$.
            Note that the only cycles we need to consider is when every crucial set $cr'$ of $E'$ has at least one of $r_{x}$ or $r_{y}$.
            We go case-wise on possible cycles such that either $r_{x}$ or $r_{y}$ or both exist as crucial reads, violating $a_{wr}$.
            \begin{description}
                \item[Subcase 0:] $[r_{x}];\rb;[w_{x}];\mo_{ext};[w_{y}];\po$ or $[r_{x}];\rb;[w_{x}];\mo_{ext};[w_{y}];\rfe;[r_{y}];\po$ cycle. 
                    \begin{itemize}
                        \item By Lemma~\ref{lem:sc-wr:p1}, we can obtain $E'_{t}$ such that $r_{x}$ is no longer a crucial read for the above composition. 
                        \item However, this implies we have $[w_{y}];\mo;[w_{x}]$ in $E'_{t}$ and corresponding $E_{t} \sim E'_{t}$.
                        \item This will result in $[w_{y}];\mo;[w_{x}];\hb$ cycle in $E_{t}$, which can be addressed via Case (b), or Case (f) \footnotemark.
                    \end{itemize}
                \item[Subcase 1:] $[r_{x}];\rb;[w'_{x}];\mo_{ext};[w'_{y}];\rfe;[r_{y}];\po$ cycle.
                    \begin{itemize}
                        \item Since $\rf^{-1}$ is functional, $w'_{y} = w_{y}$. 
                        \item From $E \sim E'$, we also have $\rb;[w_{x}];\mo;[w_{y}];\rfe;[r_{y}]\po$ cycle in $E'$.
                        \item If $[w_{x}];\hb;[w_{y}]$, then $E'$ is also inconsistent under $SRA$, violating our premise.
                        \item If $[w_{x}];\mo_{ext};[w_{y}]$, it reduces to Subcase 0 above.
                    \end{itemize}
                \item[Subcase 2:] $[r_{y}];\rb;[w'_{y}];\mo;[w'_{x}];\rfe;[r_{x}]\po$ cycle.
                    \begin{itemize}
                        \item From given, we have $[r_{x}];\rf^{-1};[w'_{x}];\mo_{loc};[w_{x}];\mo;[w_{y}];\rfe;[r_{y}];\po$ cycle in $E$.
                        \item From $E \sim E'$, we get $[w_{y}];\mo_{loc};[w'_{y}];\mo;[w'_{x}];\mo$ cycle for both $E$ and $E'$. 
                        \item This would make $E'$ inconsistent under $SRA$, violating our premise.
                    \end{itemize}
                \item[Subcase 3:] $[r_{x}];\rb;[w'_{x}];\mo_{ext};[w];\po$ cycle.
                    \begin{itemize}
                        \item By Lemma~\ref{lem:sc-wr:p1}, we have $E'_{t}$ such that $[r_{x}];\rb;\mo^{?};\hb$ irreflexive.
                        \item From Corollary~\ref{corr:mo-ext-flip}, we have $\mo_{x}(E') = \mo_{x}(E'_{t})$.
                        \item Thus, we still have $rb;\hb;\rfe;[r_{y}];\po;[r_{x}]$ cycle in $E_{t}$.
                        \item Since $cra(\rb;\mo;\rfe;\po) \nsubseteq cra(\rb;\mo;\po)$, $r_{x}$ need not be in every crucial set of $E'_{t}$. 
                        \item However, $r_{y}$ could still be needed for $cr'$, thus being a crucial set for both $E'_{t}$ and $E_{t}$.
                        \item We address this in the following sub-case.
                    \end{itemize}    
                \item[Subcase 4:] $[r_{y}];\rb;[w'_{y}];\mo_{ext};[w];\po$ cycle (this is the most tricky case).
                    \begin{itemize}
                        \item We first assume from Subcase 3 that $[r_{x}];\rb;\mo_{ext}^{?};\hb$ irreflexive in $E'$.
                        \item Now, from Lemma~\ref{lem:sc-wr:p1}, we can obtain $E'_{t}$ for which $[r_{y}];\rb;\mo^{?};\hb^{?}$ irreflexive.
                        \item From (g), we know that as long as $[w'_{x}];\mo_{x};[w_{x}]$ remains unchanged, $r_{x}$ or $r_{y}$ must be in every crucial set for $E$.
                        \item From $E$, we can infer $[w'_{x}];\mo;[w'_{y}]$ and $[w_{x}];\mo;[w'_{y}]$ in $E'$. 
                        \item Since $E'$ is consistent under $SRA$, we can also infer $w \neq w_{x}$ and $w \neq w'_{x}$.
                        \item To get $E'_{t}$, we only possibly change other $\mo$ relations with a write $e$ such that $[w'_{y}];\mo;[e];\mo;[w];\mo$ forms a cycle.
                        \item From above, we can infer $e \neq w'_{x}$ and $e \neq w_{x}$, ensuring we would still have $[w'_{x}];\mo;[w_{x}]$ in $E'_{t}$ and in $E_{t} \sim E'_{t}$.
                        \item This ensures $\rb;\hb;\rfe;[r_{y}];\po;[r_{x}]$ cycle in $E_{t}$, implying $r_{x}$ or $r_{y}$ must still be in every crucial set of $E_{t}$.
                        \item We now show that we can have $E'_{t}$ such that $[r_{x}];\rb;\mo^{?};\hb^{?}$ irreflexive.
                        \item First, we know both $e$ and $w$ cannot be $w'_{x}$, ensuring $[w'_{x}];\mo$ or $\mo;[w'_{x}]$ is preserved over $E'_{t}$.
                        \item This ensures we have $[r_{x}];\rb;\hb$ irreflexive.
                        \item Now it can be possible for $[w'_{x}];\mo_{x};[w]$ and $[e];\po;[r_{x}]$ in $E'$, such that on changing $[e];\mo;[w]$ we get $[r_{x}];\rb;[w];\mo;[e];\po$ cycle in $E'_{t}$.
                        \item We show that we do not need to change $[e];\mo;[w]$, and instead we will have $[w'_{y}];\mo_{ext};[e]$ to change. 
                        \item First, $[e];\hb;[w'_{y}]$ is not possible, as $\mo;\hb$ irreflexive in $E'$.
                        \item Second, if $[w'_{y}];\hb;[e]$, then we have $[r_{x}];\rf^{-1};[w'_{x}];\mo;[w_{x}];\mo;[w'_{y}];\hb$ cycle in $E'$.
                        \item This violates our premise over $E'$ using Subcase 3.
                        \item Hence, we must have $[w'_{y}];\mo_{ext};[e]$, thereby allowing us to remove it to get $E'_{t}$ instead.
                        \item Thus, $[r_{x}];\rb;\mo_{ext}^{?};\hb$ is irreflexive in $E'_{t}$.
                        \item Since $cra(\rb;\mo;\rfe;\po) \nsubseteq cra(\rb;\mo;\po)$, both $r_{x}$ and $r_{y}$ need not be in every crucial set of $E'_{t}$.
                        \item Finally, from Lemma~\ref{lem:crucial-based-unsafety}, we have $\neg \psf{SC}{tr}{P}$.
                    \end{itemize}
            \end{description}

        \item[Cases (h), (i):] $\rb;\mo$ cycle and $\mo;\rf$ cycle in $E$.
        
            They must also form a cycle in $E'$, making it instead inconsistent under $SRA$.
        \end{description}
        
        \footnotetext{
            Though we may have other cycles in $E'_{t}$ that make it inconsistent under $TSO$, Cases (b), (e) do not leverage this. Hence it can be deferred to these cases.
        }

        Hence proved.
    \end{proof}

   
%

    %
%

\newpage

\section{Proving TSO Complete w.r.t. SC}

    \label{sec:tso-proofs}
    The following is inferred from Lemma~\ref{lem:sc-incons-exec}.
    \begin{restatable}{corollary}{cortsoincons}
            \label{cor:tso-incons-exec}
            For any execution $E$ inconsistent under $TSO$, at least one of the following is true.
            \begin{tasks}(2)
                \task $\mo$ cycle.
                \task $\mo;\po$ cycle.
                \task $\rfi;\po$ cycle.
                \task $\rb;\po$ cycle.
                \task $\rb;\rfe;\po$ cycle.
                \task $\mo;\rfe;\po$ cycle.
                \task $\rb;\hb;\rfe;\po$ cycle.
                \task $\rb;\mo$ cycle.
                \task $\mo;\rf$ cycle.
                \task $\rb;\mo_{ext};\rfe;\po$ cycle.
                \task $\rb;\mo;[u];\po$ cycle.
            \end{tasks}
    \end{restatable}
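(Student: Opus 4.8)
The plan is to mirror the reduction used to prove Lemma~\ref{lem:sc-incons-exec}, but to run it over the consistency rules of \textit{TSO} (Def~\ref{def:tso-model}) rather than those of \textit{SC}. By definition, any $E$ inconsistent under \textit{TSO} violates one of TSO's rules (a)--(g), so it suffices to show that each such violation reduces, through compositions on strictly fewer events, to one of the eleven cycles (a)--(k) listed. Throughout I would read ``$\mo$ non-strict'' and ``$\mo$ cycle'' as interchangeable, since $\mo$ is transitive and hence non-strict exactly when it contains a cycle, so case (a) of the corollary and case (a) of Lemma~\ref{lem:sc-incons-exec} coincide.

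First I would dispatch rules (a)--(e). These are syntactically identical in Def~\ref{def:sc-model} and Def~\ref{def:tso-model}, so the reductions carried out for them in the proof of Lemma~\ref{lem:sc-incons-exec} apply verbatim. Inspecting those reductions, a violation of any of (a)--(e) is pushed into cases (a)--(j), together with recursive $\mo;\hb$ and $\rb;\hb$ terms that themselves reduce only into (a)--(j). The key point to record here is that none of these reductions ever emits a bare $\rb;\mo;\po$ term: in Lemma~\ref{lem:sc-incons-exec} that term is produced solely by the reduction of \textit{SC}'s rule (f), which \textit{TSO} does not possess.

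It then remains to treat the two rules that replace \textit{SC}'s rule (f), namely TSO's (f) $\rb;\mo;\rfe;\po$ and (g) $\rb;\mo;[u];\po$. Rule (g) violated is literally case (k), so nothing is to be done. For rule (f) violated I would reduce $\rb;\mo;\rfe;\po$ exactly as $\rb;\mo;\hb$ was reduced for \textit{SC}, except that the suffix is already pinned to $\rfe;\po$. Splitting the $\mo$-step according to whether its two writes carry an $\hb$ relation yields three outcomes: the $\mo_{ext}$ subcase is directly case (j); the subcase where the writes are $\hb$-ordered in agreement with $\mo$ collapses $\mo$ into $\hb$ and gives $\rb;\hb;\rfe;\po$, which is case (g); and the subcase where they are $\hb$-ordered against $\mo$ produces an $\mo;\hb$ cycle on strictly fewer events, recursing into the already-handled reduction of rule (c). Because the $\rfe$ stays fixed between $\mo$ and $\po$ throughout, the plain $\rb;\mo;\po$ term never reappears; the only $\po$-terminal, $\rfe$-free cycle the whole analysis can contribute is the one from rule (g), and that is exactly the $[u]$-restricted case (k).

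The main obstacle I anticipate is bookkeeping rather than conceptual: I must verify that the recursive $\mo;\hb$ and $\rb;\hb$ terms arising in the rule (f) reduction are genuinely strictly smaller (so termination follows from finiteness of the pre-trace, as in Lemma~\ref{lem:sc-incons-exec}) and, more delicately, that no branch of these recursions covertly reintroduces a plain $\rb;\mo;\po$ cycle that \textit{TSO} would fail to witness. Confirming this last point---that the only surviving $\rb;\mo;\po$-shaped case is the $[u]$-guarded (k)---is precisely where the SC-to-TSO difference is localized, and is the step I would write out most carefully.
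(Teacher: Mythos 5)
Your proposal is correct and follows essentially the same route as the paper: the paper's own proof of Corollary~\ref{cor:tso-incons-exec} is a two-sentence appeal to Lemma~\ref{lem:sc-incons-exec}, observing that TSO shares rules (a)--(e) with SC and replaces SC's rule (f) by the $\rb;\mo;\rfe;\po$ and $\rb;\mo;[u];\po$ irreflexivity conditions, whose violations reduce to cases (g)/(j) (via the $\mo_{ext}$ versus $\mo\cap\hb$ split) and to case (k) respectively. Your write-up merely makes explicit the bookkeeping the paper leaves implicit, in particular the check that the bare $\rb;\mo;\po$ cycle arises only from SC's rule (f) and so survives only in its $[u]$-guarded form.
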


    \begin{proof}
        Checking the $TSO$ rules from Def~\ref{def:tso-model}, we know that not all compositions of the form $\rb;\mo;\hb$ need to form a cycle.
        Instead, we only require $\rb;\mo_{ext};\rfe;\po$ and $\rb;\mo_{ext};[u];\po$ would need to be irreflexive instead.
        Hence proved.  
    \end{proof}

    \thmtsosccomp*

    \begin{proof}
        
        From Theorem~\ref{thm:sra-sc-comp} and Corollary~\ref{cor:tso-incons-exec}, all but cases (j) and (k) remain.
        Since $M = TSO$, from Prop~\ref{prop:tso-cons-sc-incons}, the only possible cycle in $E'$ can be of the form $\rb;\mo_{ext};\po$ 
       
        \begin{description}
            \item[Case (j)] $[r_{x}];\rb;[w_{x}];\mo_{ext};[w_{y}];\rfe;[r_{y}];\po$ cycle in $E$.
        
            We first show that for such a case reduces to program context specified in Def~\ref{def:tso-effect-const}.
            \begin{itemize}
                \item If $[w_{x}];\po;[r_{x}]$ in $E$, then it reduces to Case (d) as $\rb;\po$ cycle in $E$. 
                \item If $[r_{x}];\po;[w_{x}]$ in $E$, then it reduces to Case (f) as $\mo;\rfe;\po$ cycle in $E$.
                \item Thus, we must have no $\po$ between $w_{x}$ and $r_{x}$ in $P$.
                \item From the cycle in $E$ and same reasoning as above, we can also infer $w_{x}$ has no $\po$ with $r_{y}$.
                \item The rest lines up with the first program context specified in Def~\ref{def:tso-effect-const}.
                \item Thus, such a $tr$ must not involve $\tuwri$ such that $[w_{y}];\po;[r_{x}]$ in $E'$.
            \end{itemize}

            We note that the only case we need to address is when every crucial set $cr'$ of $E'$ has at least one of $r_{x}$ or $r_{y}$.
            This would be possible if at least one of the following cycles exist in $E'$
            \begin{tasks}(2)
                \task $[r_{y}];\rb;[w'_{y}];\mo_{ext};[w3];\po$.
                \task $[r_{x}];\rb;[w'_{x}];\mo_{ext};[w2];\po$.
            \end{tasks}

            We now proceed case-wise for each.
            \begin{description}
                \item[Subcase 1:] $[r_{y}];\rb;[w'_{y}];\mo_{ext};[w3];\po$ cycle in $E'$.
                
                We first show that eliminating the above cycle preserves the cycle of Case (j).
                In doing so, we also show which cycles involving $r_{x}$ cannot exist. 
                This will be useful when addressing the next sub-case.

                \begin{itemize}
                    \item We can use Lemma~\ref{lem:sc-wr:p1} to obtain $E'_{t}$ such that $[r_{y}];\rb;\mo;\hb$ irreflexive. 
                    \item From Lemma~\ref{lem:sc-wr:p1}, we have $\mo;[w_{y}](E') = \mo;[w_{y}](E'_{t})$ (where $[w_{y}];\rf;[r_{y}]$).
                    \item Thus, we will still have $[r_{x}];\rb;\mo_{ext};[w_{y}];\rfe;[r_{y}];\po$ cycle in $E_{t}$.
                \end{itemize}
                Now we show certain cycles involving $r_{x}$ are irreflexive in $E'_{t}$.
                We first note that the only cause for an additional cycle with $r_{x}$ will be when we have to address cycles of the form $[w'_{y}];\mo;[e];\mo;[w3];\mo$ where $mem(r_{x}) = mem(e)$. or $mem(r_{x}) = mem(w3)$. 
                \begin{itemize}
                    \item If $e=w'_{x}$ such that $[w'_{x}];\rf;[r_{x}]$, then we have $[w'_{x}];\mo;[w_{y}];\mo;[w'_{y}];\mo$ cycle in $E$.
                    \item Since $tr$ has no $\we$, this will make $E'$ inconsistent under $TSO$, violating our premise.
                    \item Similar is the case when $w3=w'_{x}$.
                    \item Thus, we know that $\mo;[w'_{x}](E') = \mo;[w'_{x}](E'_{t})$.
                    \item Since $[r_{x}];\rb;\hb$ irreflexive in $E'$, it must be irreflexive for $E'_{t}$.
                    \item If $r_{x} \in rmw(p(E))$, then $e= r_{x}$ or $w3=r_{x}$ would give $[w'_{y}];\mo;[r_{x}]$ in $E'$. 
                    \item This would imply $[r_{x}];\rb;\mo$ cycle in $E$ as well as $E'$ (Case (h)).
                    \item Thus, we can infer $[r_{x}];\rb;\mo$ irreflexive in $E'_{t}$.
                \end{itemize}
                All that remains is the possibility of $[r_{x}];\rb;\mo;\rfe;\po$ cycle in $E'_{t}$.
                We show in the next Subcase that this is okay. 

                \item[Subcase 2:] $[r_{x}];\rb;[w'_{x}];\mo_{ext};[w2];\po$ cycle in $E'$. 
                
                For this case, we show that using Lemma~\ref{lem:sc-wr:p1} to obtain $E'_{t}$ $[r_{x}];\rb;\mo^{?};\hb$ irreflexive preserves $[r_{x}];\rb;[w_{x}];\mo;[w_{y}];\rfe;[r_{y}];\po$ cycle in $E_{t}$.
                \begin{itemize}
                    \item First, since $tr$ does not involve $\tuwri$, we can infer $tid(w2) \neq tid(w_{y})$, implying $w_{y} \neq w2$.
                    \item Next, let us assume we have $w'_{x} = w_{x}$, and that $[w_{y}];\mo;[w2]$, $[w_{x}];\mo;[w_{y}]$ in $E'$.
                    \item This would, by Lemma~\ref{lem:sc-wr:p1} require us changing one of them to give us $E'_{t}$ (and $E_{t}$). 
                    \item Now we show $[w_{y}];\mo_{ext};[w2]$ in $E'$.
                    \item IF $[w2];\hb;[w_{y}]$ then we will have $\mo;\hb$ cycle in $E'$.
                    \item If $[w_{y}];\hb;[w2]$, then from $E \sim E'$, we have $[r_{x}];\rb;[w_{x}];\mo;[w_{y}];\hb;[w2];\po$ cycle in $E'$.
                    \item Since $tid(w2) \neq tid(w_{y})$, we must have $\rb;[w_{x}];\mo;[w_{y}];\hb^{?};\rfe;\po$ cycle in $E'$.
                    \item This is not possible, as $E'$ is inconsistent under $TSO$.
                    \item Even from Subcase 1, since $\mo;[w_{y}]$ remains unchanged across $E'_{t}$, the same cycle exists in $E'$.
                    \item Thus, we must have $[w_{y}];\mo_{ext};[w2]$, which can be changed instead to obtain $E'_{t}$.
                    \item This guarantees we can have $[r_{x}];\rb;[w_{x}];\mo;[w_{y}];\rfe;\po$ cycle in $E_{t}$.
                \end{itemize}
            \end{description}
        
            We are now left with $E_{t}$ which can have either $r_{x}$ or $r_{y}$ in its crucial set.
            Finally, we show that there exists a crucial set $cr'$ of $E'_{t}$ which is not for $E_{t}$.
            \begin{itemize}
                \item First, we know a $cr'$ exists such that $r_{x} \notin cr'$.
                \item If $r_{y} \notin cr'$, we are done. 
                \item If $r_{y} \in cr'$ in every such $cr'$, then it is only due to $[r_{y}];\rb;\mo^{?};\hb^{?}$ cycle in $E'_{t}$.
                \begin{itemize}
                    \item If $[r_{y}];\rb;\mo;\po$ is already a cycle in $E'_{t}$, then we address Subcase 1 first, ensuring $[r_{y}];\rb;\mo^{?};\hb$ irreflexive.
                    \item The corresponding $E'_{t}$ from Subcase 1 will still have $[r_{x}];\rb;\hb$ and $[r_{x}];\rb;\mo$ irreflexive.
                    \item The case of $[r_{x}];\rb;\mo;\rfe;\po$ cycle can be addressed the same way as Subcase 2.
                    \item Next, since $w_{y} \neq w2$, we must have $[r_{y}];\rb;\hb$ irreflexive in $E'_{t}$.
                    \item If $[r_{y}];\rb;[w2];\mo;\hb$ cycle in $E'_{t}$, then $[w_{x}];\mo;[w_{y}];\mo;[w2];\mo$ cycle in $E'_{t}$, which cannot be the case. 
                    \item If $[r_{y}];\rb;[w2];\mo$ cycle in $E'_{t}$ (if $r_{y}$ is read-modify-write), then we have $[r_{x}];\rb;\mo;[r_{y}];\po$ cycle in $E$ (Case (k)) or $\mo;[w_{y}];\rfe$ cycle in $E$ Case (i).
                \end{itemize}
                \item To conclude, if $r_{y}$ is in every $cr'$, then it must reduce to Case (i), (k) .
                \item Otherwise, by Lemma~\ref{lem:sc-wr:p1}, we have $\neg \psf{SC}{tr}{P}$.
            \end{itemize}

        \item[Case (k):] $[r_{x}];\rb;\mo_{ext};[u];\po$ cycle in $E$.
        \begin{itemize}
            \item First, note that $\rb;\mo;[u];\po$ cycle cannot exist in $E'$.
            \item Thus, at least $(e, r_{x}) \in \po^{-}$.
            \item Now, if $r_{x}$ need not be in every crucial set of $E'$, we are done. 
            \item Otherwise, we have $[r_{x}];\rb;[w_{x}];\mo_{ext};[w];\po$ cycles in $E'$.
            \item If $[w];\po;[r_{x}]$ in $P$, then $tr$ involves $\tuwri$. 
            \item Otherwise, we can divide $tr$ into $P \mapsto P_{tmp} \mapsto P'$, where $[w];\po;[r_{x}]$ and satisfies one of the program contexts in Def~\ref{def:tso-effect-const} for $\tuwri$.
            \item Hence, Case (k), reduces to one which involves $\tuwri$.
        \end{itemize}

%
        \end{description}

        Hence, proven.
      
    \end{proof}


  \end{document}